\documentclass[11pt]{article}
\usepackage{amsmath, amssymb, amsthm, xparse, xcolor}
\usepackage[normalem]{ulem}
\usepackage[margin=1in]{geometry}
\usepackage{dsfont}
\usepackage[parfill]{parskip}
\usepackage{url}
\usepackage{mathrsfs}
\usepackage{tcolorbox}
\tcbuselibrary{breakable}
\usepackage{caption}
\usepackage{enumitem}
\usepackage{thm-restate}
\usepackage{microtype}
\usepackage{subcaption}
\usepackage{setspace}

\usepackage{tikz}
\usetikzlibrary{positioning}
\usetikzlibrary{shapes.geometric, arrows}
\usetikzlibrary{decorations.pathreplacing}

\usepackage{float}
\usepackage{sn-preamble} 
\usepackage{wasysym}
\SetSymbolFont{wasy}{bold}{U}{wasy}{m}{n}

\NewDocumentCommand{\ot}{g}{
    \IfNoValueTF{#1}{\tilde{O}}{\tilde{O}(#1)}
}

\newcommand{\Unif}{\mathrm{Unif}}

\newcommand{\err}{\operatorname{\mathsf{err}}}

\newtheorem{abort}[theorem]{Stopping Condition}

\newcommand{\widebar}[1]{%
  \mkern2.5mu
  \ooalign{%
    $\overline{\mkern-2.5mu#1\mkern-2.5mu}$\cr
    \hidewidth
    \raisebox{0.07ex}{$\overline{\mkern-2.5mu\phantom{#1}\mkern-2.5mu}$}%
    \hidewidth\cr
  }%
  \mkern2.5mu
}

\title{ Quasi-Monte Carlo Beyond Hardy--Krause II:\\ $(1 + \varepsilon) n$ Samples Suffice  \vspace{10pt} }

\author{
Ekene Ezeunala\thanks{University of Chicago, Chicago, IL, USA. \texttt{ekene@uchicago.edu.}}
\and 
Agastya Vibhuti Jha\thanks{University of Chicago, Chicago, IL, USA. \texttt{agastyajha@uchicago.edu}.}
\and
Haotian Jiang\thanks{University of Chicago, Chicago, IL, USA. \texttt{jhtdavid@uchicago.edu}.}}

\date{}

\begin{document}

\pagenumbering{gobble}

\maketitle

\begin{abstract}
Numerical integration studies how well one can estimate the integral of a function $f$ over $[0,1)^d$ using $n$ sample points. The two classical methods, Monte Carlo (MC) and quasi-Monte Carlo (QMC), have complementary strengths and weaknesses, and a fundamental question is to design an approach that combines the benefits of both. 
Recently, building on the transference principle in discrepancy theory, Bansal and Jiang~\cite{BJ25a} gave a randomized QMC method that bridges MC and QMC guarantees using only i.i.d.\ samples. Their method also goes beyond the classical Koksma--Hlawka inequality: it achieves integration error $\widetilde{O}_d(\sigma_{\mathsf{SO}}(f)/n)$, where the smoothed-out variation $\sigma_{\mathsf{SO}}(f)$ can be substantially smaller than the Hardy--Krause variation that governs the classical bound. However, their algorithm requires $n^2$ i.i.d.\ samples as input, and this quadratic blowup is inherent to any method based on the transference principle.

\smallskip

In this work, we bypass the quadratic blowup: for any constant $\varepsilon > 0$, we show that $(1+\varepsilon)n$ i.i.d.\ samples suffice to both obtain the beyond-Hardy--Krause guarantee of~\cite{BJ25a}, resolving an open problem posed there, and to produce low-discrepancy point sequences. Our algorithms are variants of the online Haar-thinning method of Dwivedi, Feldheim, Gurel-Gurevich, and Ramdas~\cite{DFG+19}. Specifically, from $(1+\varepsilon)n$ i.i.d.\ samples:
\medskip
\begin{enumerate}
  \item The \textsl{uniformly-shifted Haar-thinning} algorithm retains $n$ samples with integration error $\widetilde{O}_d(\sigma_{\mathsf{SO}}(f)/n)$, matching the bound of Bansal and Jiang \cite{BJ25a}.

  \smallskip
  
  \item The \textsl{linear-feedback Haar-thinning} algorithm retains $n$ samples with star discrepancy $O_d(\log^{d+1} n)$, improving the $O_d(\log^{2d+1} n)$ bound of~\cite{DFG+19} and attaining their conjectured bound. 
  Our bound nearly matches the best known $O_d(\log^d n)$ bound achieved by explicit constructions of point sequences \cite{Hal60,Sob67,Fau82,Nie87}. 
\end{enumerate}

\medskip

Both algorithms are online, never discard two consecutive samples, and spend $\widetilde{O}_d(1)$ time to process each sample. 
At the heart of our analysis is a new property of Haar-thinning strategies, which we establish via a symmetry argument: the discrepancies  of distinct low-order Haar functions are uncorrelated.

\end{abstract}

\newpage

\newpage
\setcounter{tocdepth}{2}

\begingroup
\hypersetup{linkcolor=BrickRed}
\begin{spacing}{1.3}
\small
\tableofcontents
\end{spacing}
\endgroup
\newpage

\setcounter{page}{1}
\pagenumbering{arabic}

\section{Introduction}\label{sec:intro}

Numerical integration estimates the integral $\smash{\bar{f} = \int_{[0,1)^d} f(\bz)\,\mathrm d\bz}$ by averaging $f$ over an $n$-point set $A \subset [0,1)^{d}$, i.e., $\smash{\bar{f}(A) = |A|^{-1}\sum_{\bz \in A} f(\bz)}$. It arises in many applications, e.g., in finance, graphics, statistics, and scientific computing \cite{Gla03,Lem09,Owe13}, where $\bar{f}$ cannot be computed exactly because $f$ is complicated or accessible only through pointwise evaluations. The main goal of numerical integration is to choose $A$ to minimize the magnitude of the error $\smash{\err(A,f) = \bar{f}(A) - \bar{f}}$. Broadly speaking, there are two main paradigms for choosing $A$. 

\smallskip
\noindent \textbf{Monte Carlo Methods.} The Monte Carlo (MC) method chooses the $n$ points of $A$ independently and uniformly at random. 
As it requires only random samples, MC is highly flexible and widely used in practice. 
The downside is its slow convergence rate: its mean-squared error is
\begin{align}\label{eq:MCError}
\mathbb{E}\bigl[\err(A,f)^2\bigr] = \sigma^{2}(f)/n,
\end{align}
where $\smash{\sigma(f) := \bigl(\E_{\bz\sim[0,1]^d}(f(\bz)-\bar{f})^2\bigr)^{1/2}}$ is the standard deviation of $f$. 
Hence the error decays only as $n^{-1/2}$. Equivalently, $\sigma^2(f)/\delta^2$ samples are needed to achieve error $\delta$, which can be prohibitive when samples are expensive to obtain or $f$ is expensive to evaluate (e.g., \cite{Gla03,Fis05}).

\smallskip
\noindent \textbf{Quasi-Monte Carlo Methods.} Quasi-Monte Carlo (QMC) methods give up the flexibility of random sampling and instead carefully choose a deterministic point set $A$. The classical Koksma--Hlawka inequality \cite{Kok42,Hla61,Zar68} bounds the QMC error as
\begin{align}\label{eq:StdQMCError}
    |\err(A,f)| \leq \frac{1}{n} \, V_{\mathsf{HK}}(f)\cdot D^{\star}(A) .
\end{align}
Here, $V_{\mathsf{HK}}(f)$ is the Hardy--Krause variation of $f$, and $D^*(A) = \sup_{R} \big||A\cap R| - n |R|\big|$ is the {\em star} discrepancy  of $A$, where the supremum is over all anchored boxes $R$ of the form $[0,\bz_1) \times [0, \bz_2) \times \cdots \times [0,\bz_d)$. 
Intuitively, $D^*(A)$ measures the uniformity of $A$ with respect to anchored boxes.\footnote{In the literature, the star discrepancy is also often referred to as the continuous discrepancy or the Lebesgue measure discrepancy for anchored boxes. Anchored boxes are also often referred to as corners.}

As the Koksma--Hlawka inequality \eqref{eq:StdQMCError} is tight in general, much of the work on QMC has focused on constructing point sets with small $D^{\star}(A)$ (e.g., see \cite{Nie92,DP10}).
The best known point sequences\footnote{A \emph{point sequence} is an infinite sequence $\bz_1, \bz_2, \ldots \in [0,1)^d$ whose first $n$ points are required to have small star discrepancy simultaneously for every $n$. A \emph{point set}, in contrast, is designed for a single fixed $n$ and the construction may differ for different $n$. The point set notion is strictly weaker: the best known $n$-point sets achieve $D^{\star} = O_d(\log^{d-1} n)$ \cite{Ham60,Nie87,DP10}, while the best known sequences achieve $O_d(\log^{d} n)$ \cite{Hal60,Fau82,Nie87}; for $d=1$ this $\log n$ gap is provably necessary \cite{Sch72}. 
We work with sequences: they are the standard object in QMC practice, where $n$ is rarely known in advance, and they are what our online algorithms naturally produce.} achieve $D^{\star}(A) = O_d(\log^d n)$\footnote{Throughout, $O_d(\cdot)$ hides $\exp(O(d))$ factors, and $\widetilde{O}_d(\cdot)$ hides $O(\log n)^{O(d)}$ factors. In the context of QMC, the dimension $d$ should be viewed as a constant.}  for all $n$ simultaneously, where $A$ is the set of the first $n$ points of the sequence \cite{Hal60,Sob67,Fau82,Nie87}. Plugging this bound into \eqref{eq:StdQMCError} gives an
error of $\widetilde{O}_d(V_{\mathsf{HK}}(f)/n)$, i.e., a faster convergence rate of $\widetilde{O}_d(1/n)$ in place of the $n^{-1/2}$ rate of MC, which is what makes QMC methods widely applied in practice~\cite{Lem09,Mat09,DKPS13,Owe13}.

Despite their faster convergence rate, QMC methods have two well-known  limitations. First, for fast-varying functions, $V_{\mathsf{HK}}(f)$ can be much larger than $\sigma(f)$ (e.g.,
$V_{\mathsf{HK}}(f) = \Theta(k)$ but $\sigma(f) = \Theta(1)$ for $f(x) = \sin(kx)$), so the QMC bound \eqref{eq:StdQMCError} can be far worse than the MC
bound \eqref{eq:MCError}. Second, QMC evaluates $f$ on a fixed point set, which is  impossible in applications where only random samples are available.
Given these complementary strengths and weaknesses, a fundamental problem in numerical integration is to design a method that combines the benefits of both MC and QMC. 

\smallskip
\noindent \textbf{Quasi-Monte Carlo Beyond Hardy--Krause.}
Recently, Bansal and Jiang \cite{BJ25a} made substantial progress in this direction, bypassing previous limitations \cite{Mat09,Lem09,DP10,Owe13}.
Specifically, they gave a randomized QMC method that achieves not only the best of MC and QMC guarantees, but also a surprising improvement over Koksma--Hlawka inequalities: their algorithm 
has error $\smash{\widetilde{O}_d(\sigma_{\mathsf{SO}}(f)/n)}$, where $\sigma_{\mathsf{SO}}(f)$ is the \emph{smoothed-out variation} of $f$, which is substantially smaller than the Hardy--Krause variation $V_{\mathsf{HK}}(f)$ in \eqref{eq:StdQMCError}. 
Their main result is the following. 

\begin{restatable}[\cite{BJ25a}]{theorem}{BansalJiangTheorem}
\label{thm:BJ25}
    There is an online randomized algorithm that takes as input $n^2$ i.i.d. uniform samples from $[0,1)^d$ and, in  $\widetilde{O}_d(n^2)$ time, partitions them into $n$ sets each of size $n$. Each of these sets $A$ satisfies the following: for any function $f\in L^2([0,1)^{d})$ with finite $\sigma_{\mathsf{SO}}(f)$,
\[
\mathbb{E}\bigl[\mathsf{err}(A,f)^2\bigr]
\leq
\widetilde{O}_{d}\big(\sigma_{\mathsf{SO}}^2(f) / n^2 \big) .
\]
\end{restatable}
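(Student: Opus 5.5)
This is the result of~\cite{BJ25a}, which we recall rather than reprove. The plan follows their transference-principle argument. The $n^2$ i.i.d.\ samples are processed online in a dyadic hierarchy. We fix the dyadic (Haar) decomposition of $[0,1)^d$ up to depth $\approx \log_2 n$ in each coordinate. Each arriving sample is assigned to one of the $n$ output sets by an online balancing rule: a self-balancing walk in the style of Alweiss--Liu--Sawhney, or the Bansal--Jiang discrepancy walk. The rule keeps every dyadic box (equivalently, every low-order Haar function) balanced across the $n$ sets. Equivalently, we recursively halve: at each of $\log_2 n$ levels, the current multiset is split into two halves that are balanced with respect to the $\widetilde{O}_d(n)$ Haar test vectors of resolution up to $1/n$.

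The key steps, in order, are as follows.

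\textbf{Step 1 (transference).} Write $\mathsf{err}(A,f)$ as the difference between the error of $A$ and the error of the full sample set of size $n^2$. The latter is an MC estimate with mean-squared error $\sigma^2(f)/n^2$, which is already of the required order.

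\textbf{Step 2 (Haar expansion).} Expand $f$ in the Haar basis, $f = \sum_h \hat f_h h$. Split it into low-order Haar functions, of support volume at least $1/n$, and a high-order tail. For the tail, the samples falling in each small box are distributed among the sets essentially at random. This contributes variance bounded by the $L^2$ mass of $f$ at fine scales divided by $n^2$.

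\textbf{Step 3 (low-order terms).} For each low-order $h$, the balancing walk guarantees that its signed discrepancy is subgaussian with parameter $\widetilde{O}_d(1)$ in each halving step, even though the other test vectors are adversarially correlated. Combining these bounds with Cauchy--Schwarz across levels gives
\[
\mathbb{E}\bigl[\mathsf{err}(A,f)^2\bigr] \lesssim \frac{1}{n^2}\Bigl(\sum_h |\hat f_h|\,|\mathrm{supp}\,h|^{1/2}\cdots\Bigr)^2,
\]
with the appropriate weights. The quantity $\sigma_{\mathsf{SO}}(f)$ is defined exactly so that this sum equals $\widetilde{O}_d(\sigma_{\mathsf{SO}}(f))$.

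\textbf{Step 4 (symmetry).} Since the random halving is symmetric, every set $A$ has the same error distribution. The online property holds because each sample's assignment depends only on past samples, and the running time is $\widetilde{O}_d(1)$ per sample, giving $\widetilde{O}_d(n^2)$ in total.

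The main obstacle is Step 3. The subgaussian guarantee of the balancing walk controls each Haar discrepancy individually but gives no orthogonality between distinct Haar functions. One must therefore sum absolute values of coefficients rather than squares. This forces the smoothed-out variation, rather than $\sigma(f)$, into the bound, and requires careful accounting of the $\log n$ levels. I would first try bounding each level's contribution separately via the walk's subgaussianity and then summing with a triangle inequality, absorbing polylogarithmic losses into $\widetilde{O}_d$.
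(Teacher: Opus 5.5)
Your architecture (recursive balanced halving of the $n^2$-point pool, transference to the Monte Carlo error of the pool, and a low/high Haar split) is consistent with the sketch of Bansal--Jiang in \Cref{subsubsec:Bansal-Jiang}; the paper cites this theorem and does not reprove it. The genuine gap is Step~3 together with your ``main obstacle'' paragraph. You conclude that the balancing walk gives no cross-Haar cancellation, so that one must sum absolute values of Haar coefficients, and you claim that $\sigma_{\mathsf{SO}}$ absorbs the resulting sum. That route cannot prove the theorem.

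With $c_H=\langle f,H\rangle/\|H\|_2^2$, the triangle inequality bounds the low-order error by $\frac{1}{n}\sum_H |c_H|\,|\varphi_A(H)|$. The per-function bound available is $\widetilde{O}(1)$ for every low-order Haar function, independent of its support, so no extra factor like $|\mathrm{supp}\,H|^{1/2}$ appears. The best you get is $\frac{1}{n}\widetilde{O}(1)\sum_H|c_H|$. Marginal subgaussianity cannot do better, since it permits all $\varphi_A(H)$ to be perfectly correlated, e.g.\ $\varphi_A(H)=X\,\mathrm{sgn}(c_H)$ for a single subgaussian $X$.

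The quantity $\sum_H|c_H|$ is a variation-type $\ell_1$ quantity. In $d=1$ it is at most $O(\log n)$ times the total variation. So this route gives only the Koksma--Hlawka-type rate that the theorem improves upon, as in the Dwivedi et al.\ analysis recalled in \Cref{subsubsec:online}. Concretely, take $d=1$ and $f(x)=\sin(2\pi kx)$ with $k\le n$ a power of two. Each of the $k$ Haar functions supported on a single period has $c_H=2/\pi$, so $\sum_H|c_H|\ge 2k/\pi$, whereas $\sigma_{\mathsf{SO}}(f)=\sqrt{k/2}$. The $\ell_1$ bound is therefore off by a factor $\sqrt{k}$, which can be as large as $\sqrt{n}$. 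Thus $\sigma_{\mathsf{SO}}$ is not ``defined so that this sum equals'' it: $\sigma_{\mathsf{SO}}^2$ is a weighted \emph{sum of squares}. Summing absolute values would force a Hardy--Krause-type variation into the bound, not $\sigma_{\mathsf{SO}}$.

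The missing idea is that the subgaussian guarantee Bansal--Jiang rely on (the ``sufficient randomness'' in \Cref{subsubsec:Bansal-Jiang}) is a property of the whole discrepancy vector. Let $\mathrm{disc}=\bigl(\sum_{z\in P}x_zH(z)\bigr)_H$ be the vector produced by halving the current pool $P$ with coloring $x\in\{\pm1\}^P$. Every fixed linear functional $\langle\theta,\mathrm{disc}\rangle$ is subgaussian with variance proxy $\widetilde{O}(\|\theta\|_2^2)$, however the test vectors overlap. Taking $\theta=c$ gives $\mathbb{E}\,\langle c,\mathrm{disc}\rangle^2\le\widetilde{O}(\|c\|_2^2)=\widetilde{O}\bigl(\sum_H\langle f,H\rangle^2/\|H\|_2^4\bigr)$, an $\ell_2$ bound by the Haar--Besov seminorm. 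The triangle inequality is then needed only across the $\log_2 n$ halving levels, where it costs at most a logarithmic factor. In the paper's own argument for \Cref{thm:main}, exactly this role is played by \Cref{lem:UncorrelationLemma} combined with \Cref{lem:SecondMomentHaarDisc} in the proof of \Cref{lem:LowOrderError}.

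Even with the $\ell_2$ bound in hand, passing from Haar coefficients to $\sigma_{\mathsf{SO}}(f)$ is a theorem, not a matter of definition. In the paper it comes from the shift-averaged equivalence \Cref{thm:shift-averaged-mixed-haar-equivalence}, which is why the paper's algorithm applies a uniform random shift.

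A smaller slip is in Step~2. The fine-scale part contributes about $\|\Pi_{\mathrm{fine}}f\|_2^2/n$ to an $n$-point set, not $/n^2$. This still suffices: for those $H$ we have $\|H\|_2^2=|\mathrm{supp}\,H|<1/n$, so $\|\Pi_{\mathrm{fine}}f\|_2^2\le\|f\|_B^2/n$, as in \Cref{lem:HighOrderError}.
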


\smallskip
\noindent \textbf{The Need for $n^{2}$ Samples.} 
Despite its strong guarantees, the algorithm in \cite{BJ25a} requires $n^2$ samples to begin with and $\widetilde{O}_d(n^2)$ running time, even to produce a single QMC set. 
This quadratic blowup in the number of initial samples (and hence running time) comes from their use of the \emph{transference principle} in geometric discrepancy theory (e.g., see \cite{Mat09,ABN16}), and is inherent to any method based on this principle. 
To address this issue, Bansal and Jiang asked \cite[Section 6]{BJ25a} whether there is an algorithm that can achieve their improvement over Hardy--Krause variation and the Koksma--Hlawka inequality but only requires $\widetilde{O}_d(n)$ samples and $\widetilde{O}_d(n)$ running time.

\subsection{Our Results}
\label{subsec:results}

Our main result is a positive answer to the above question of Bansal and Jiang. In fact, we show that $(1+\varepsilon)n$ samples already suffice to achieve their beyond-Hardy--Krause error bound.\footnote{Throughout, the dependence on $\varepsilon$ in our bounds is $\poly(1/\varepsilon)$. The polylogarithmic factor hidden in $\widetilde{O}_{d,\varepsilon}(\cdot)$ in \Cref{thm:main} is $\log^{d} n$, which also matches that of \cite{BJ25a}.}

\begin{restatable}[QMC beyond Hardy--Krause via $(1+\varepsilon)n$ samples]{theorem}{MainThm}
\label{thm:main}
Fix an arbitrary constant $\varepsilon \in (0,1)$.
There is a randomized algorithm that, for any $n \in \mathbb{Z}_{>0}$, takes as input $(1+\varepsilon)n$ i.i.d.\ uniform samples from $[0,1)^d$ and outputs with high probability a subset of $n$ samples $A$ such that for any function $f \in L^2([0,1)^{d})$ with finite $\sigma_{\mathsf{SO}}(f)$,
\[
\mathbb{E}\bigl[\mathsf{err}(A,f)^2\bigr]
\leq
\widetilde{O}_{d,\varepsilon}\big(\sigma_{\mathsf{SO}}^2(f) / n^2 \big).
\]
Moreover, the algorithm is online, never discards two consecutive samples, and spends $\widetilde{O}_d(1)$ time to process each sample.
\end{restatable}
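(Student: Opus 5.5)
We will prove \Cref{thm:main} by analyzing a \emph{uniformly-shifted Haar-thinning} algorithm, a variant of the online Haar-thinning method of \cite{DFG+19}. We fix a depth $L \approx \log_2 n$ and consider the dyadic Haar system on $[0,1)^d$ up to total level $L$. To each such Haar function $h$ we associate a signed discrepancy $D_t(h) = \sum_{s \le t,\, \bz_s \text{ kept}} h(\bz_s)$; the value $\int h = 0$ is the target. Before the run we draw a uniform random shift $\bu \in [0,1)^d$ and work with the shifted samples $\bz_s + \bu \bmod 1$, so that the dyadic grid is randomly placed with respect to $f$. Each arriving sample is kept or discarded by a greedy rule. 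The sample is always kept if the previous one was discarded, which enforces the "never discard two consecutive" property. Otherwise it is kept only if a potential $\Phi_t = \sum_h \cosh(\lambda D_t(h)/w_h)$, with level-dependent weights $w_h$, does not increase too much.

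The point of keeping a $1/(1+\varepsilon)$ fraction is that a discard happens with constant probability whenever the potential drift is positive. This should give $\E[\Phi_t] = \poly(n)$ for all $t$. The per-sample time is $\widetilde{O}_d(1)$ because each point lies in the support of only $O(L)^d$ Haar functions.

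The analysis will have three steps. \textbf{Step 1.} Expand $f - \bar f = \sum_h \hat f(h)\, h$ in the shifted Haar basis, truncate at level $L$, and write
\[
n \cdot \err(A,f) = \sum_{h} \hat f(h)\, D(h) + \text{tail}.
\]
The tail (levels above $L$) is controlled in mean square by the high-frequency part of $\sigma_{\mathsf{SO}}(f)$, using the randomness of the shift and of the kept samples, in the same way as in \cite{BJ25a}. \textbf{Step 2.} Bound $\E[D(h)^2] = \widetilde{O}_d(1)$ for every low-order $h$; the moment bound on $\Phi$ gives this directly. \textbf{Step 3.} This is the key step: show that $\E[D(h)D(h')] = 0$ for distinct low-order Haar functions $h \neq h'$. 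With it,
\[
\E\bigl[(\textstyle\sum_h \hat f(h) D(h))^2\bigr] = \sum_h \hat f(h)^2\, \E[D(h)^2] \le \widetilde{O}_d(1)\sum_h \hat f(h)^2 \cdot(\text{weights}),
\]
and the weighted sum $\sum_h \hat f(h)^2 \cdot(\text{weights})$, averaged over the shift, is exactly (up to polylogs) $\sigma^2_{\mathsf{SO}}(f)$ by its definition. Dividing by $n^2$ gives the claim. The "with high probability $n$ samples" clause follows from a martingale concentration bound on the number of discards, combined with padding or truncating at the end.

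The main obstacle is Step 3, the decorrelation. I would prove it with a symmetry argument. Take $h \ne h'$, and pick a coordinate and dyadic cell where they differ. Consider the measure-preserving reflection $\tau$ of $[0,1)^d$ that swaps the two halves of the relevant dyadic cell in that coordinate. Under $\tau$, $h'$ flips sign while $h$ is invariant, or vice versa (if one support contains the other, use the finer one's reflection). The thinning rule depends on the discrepancies only through $\cosh$, which is even. Applying $\tau$ to every sample therefore maps the set of Haar functions to itself up to sign and leaves every keep/discard decision unchanged. The joint law of $(D(h), D(h'))$ thus equals that of $(D(h), -D(h'))$, which forces the correlation to vanish.

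The delicate points are these. First, we must check that $\tau$ permutes the whole truncated Haar family, possibly with sign changes, and preserves the weights; this needs the weights to depend only on the level vector. Second, the uniform shift must commute with the argument, which holds because we condition on $\bu$.
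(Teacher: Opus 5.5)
\textbf{There is a genuine gap, and it lies in the thinning rule, not in Step 3.} Your Step 3 is essentially the paper's symmetry argument. An involution swaps the two halves of the finer factor's support in a coordinate where $h$ and $h'$ differ. It permutes the truncated Haar family up to sign and preserves level vectors. So your $\cosh$-based rule (even in each $D(h)$, with level-dependent weights) commutes with it, just as the paper's rule does because $\mathsf{sgn}$ is odd.

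The deterministic rule ``keep iff $\Phi_t$ does not increase too much'' does not give the three properties your other steps rely on.
\begin{itemize}
\item \emph{Sample budget.} Nothing caps the discard probability at $\varepsilon$. Under a uniform point, the first-order change of $\Phi_t$ has mean zero, so a fixed threshold rejects an uncontrolled, typically constant, fraction of points. Then $(1+\varepsilon)n$ samples do not suffice, and padding or truncating cannot repair this.
\item \emph{Potential bound.} The bound $\mathbb{E}[\Phi_t]\le\mathrm{poly}(n)$ is asserted (``this should give''), not proved. If you tune the threshold to reject only the worst $\varepsilon$-fraction, the conditional drift of $\Phi_t$ is no longer an explicit quantity. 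Analyzing such greedy variants of Haar-thinning is exactly what the paper sidesteps (cf.\ the greedy conjecture of \cite{DFG+19}).
\item \emph{High-order tail.} \cite{BJ25a} controls high frequencies through its i.i.d.\ $n^2$-point pool, which you do not have, since your kept points are biased. What you need is that the conditional density of each kept point, minus $1$, is orthogonal to every Haar function outside the truncated family. Then $\sum_t(\Pi_{>L}f)(\mathbf z_t)$ is a martingale.
\end{itemize}

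The tail point fails for your specific choices. A nonlinear acceptance region combined with the total-level truncation $|\mathbf j|\le L$ breaks the orthogonality. For example, the indicator of $\{h_{L,0}\otimes 1=1\}\cap\{1\otimes h_{L,0}=1\}$ has a component on $h_{L,0}\otimes h_{L,0}$, which has total level $2L$. The crude bias bound then only gives $\mathbb{E}[\mathsf{err}^2]\lesssim\lVert\Pi_{>L}f\rVert_2^2\lesssim 2^{d-L}\lVert f\rVert_B^2\approx\lVert f\rVert_B^2/n$ for $L\approx\log_2 n$, which is the Monte Carlo rate. Either the product truncation $\max_i\mathbf j_i\le L$ (whose span is closed under products) or $L\ge 2\log_2 n+d$ would rescue this point, but not the first two.

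\textbf{The missing idea} is to make the law of each kept point explicit and \emph{affine} in the low-order Haar functions. Use rejection sampling toward
\[
\mu_t=1+\frac{\varepsilon}{2N_\ell}\sum_{H\in\Pi_{\le\ell}}\mathsf{sgn}(-\varphi_t(H))\,H,
\]
which lies in $[1-\varepsilon/2,1+\varepsilon/2]$ because each point meets at most $N_\ell$ Haar functions. This one choice gives all three properties.
\begin{itemize}
\item The rejection probability is at most $\varepsilon$ at every step, which gives the sample budget.
\item By Haar orthogonality, each $H$ has the exact drift $\mathbb{E}_s[H(\mathbf z_{s+1})]=-\frac{\varepsilon}{2N_\ell}\mathsf{sgn}(\varphi_s(H))\lVert H\rVert_2^2$. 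So a separate potential $\cosh(\lambda\varphi_s(H))$ with $\lambda=\varepsilon/(6N_\ell)$ has bounded expectation, giving $\mathbb{E}[\varphi_t(H)^2]=O(N_\ell^2/\varepsilon^2)$ with no global potential needed.
\item $\mu_t-1\perp\Pi_{>\ell}$, so the high-order error is a martingale with $\mathbb{E}[\mathsf{err}(A,\Pi_{>\ell}g)^2]\le(1+\varepsilon/2)\,2^{-\ell}\lVert g\rVert_B^2/n$.
\end{itemize}

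Finally, identifying the shift-averaged $\sum_H c_H^2$ with $\sigma_{\mathsf{SO}}^2(f)$ is not a consequence of the definition of $\sigma_{\mathsf{SO}}$. It is the nontrivial equivalence of Theorem~\ref{thm:shift-averaged-mixed-haar-equivalence}, which holds up to $\Theta_d$ factors.
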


\smallskip
\noindent \textbf{Online Thinning.}
To bypass the quadratic blowup in the number of initial samples, we abandon the transference principle entirely and instead consider the \emph{online thinning} framework of Dwivedi, Feldheim, Gurel-Gurevich, and Ramdas \cite{DFG+19}. Here, samples arrive one at a time, and upon seeing a sample, the algorithm must decide whether to retain or discard it (based on the samples retained so far), and in case it discards a sample, it must retain the next one.

In this model, \cite{DFG+19} gave an elegant algorithm called {\em Haar-thinning}: from $(1+\varepsilon)n$ samples, with high probability, it retains $n$ samples with star discrepancy $O_{d,\varepsilon}(\log^{2d+1} n)$, which already recovers the classical QMC bound of $\widetilde{O}_{d,\varepsilon}(V_{\mathsf{HK}}(f)/n)$ by the Koksma--Hlawka inequality \eqref{eq:StdQMCError}. 

To achieve the beyond-Hardy--Krause guarantee in \Cref{thm:main} and \cite{BJ25a}, we use a {\em uniformly-shifted} variant of the Haar-thinning strategy. Key to our analysis is a new property of Haar-thinning strategies, which we establish via a novel symmetry argument: the discrepancies of distinct low-order Haar functions are uncorrelated (see \Cref{subsec:uncorrelation-Haar}).

\smallskip
\noindent \textbf{Low-Discrepancy Sequences via Online Thinning.}
Our second contribution is an improvement to the $O_{d,\varepsilon}(\log^{2d+1} n)$ star discrepancy bound of \cite{DFG+19}. Dwivedi,
Feldheim, Gurel-Gurevich, and Ramdas conjectured in \cite[Conjecture 2]{DFG+19} that a greedy version of their Haar-thinning strategy attains star discrepancy $O_{d,\varepsilon}(\log^{d+1} n)$. 
While it is still unclear how to analyze their proposed greedy strategy, we give a new variant of Haar-thinning that we call {\em linear-feedback Haar-thinning} which attains their conjectured $O_{d,\varepsilon}(\log^{d+1} n)$ bound.

\begin{restatable}[Low-discrepancy point sequences via $(1+\varepsilon)n$ samples]{theorem}{gridRectangleThinning}
\label{thm:GridRectangleThinning}
Fix an arbitrary constant $\varepsilon\in(0,1)$. There is a randomized algorithm which, given $n \in \mathbb{Z}_{>0}$, takes as input $(1+\varepsilon)n$ i.i.d.\ uniform samples from $[0,1)^d$ and outputs with high probability a subset of $n$ samples $A$ that satisfies 
\[
D^{\star}(A) \leq O_{d,\varepsilon}(\log^{d+1}n).
\]
Moreover, the algorithm is online, never discards two consecutive samples, and spends $\widetilde{O}_{d,\varepsilon}(1)$ time to process each sample.
\end{restatable}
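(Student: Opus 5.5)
We prove \Cref{thm:GridRectangleThinning} by combining a dyadic decomposition of anchored boxes with a thinning rule that keeps the discrepancies of all relevant dyadic boxes bounded by $O_{d,\varepsilon}(\log n)$. Fix a depth $L=\lceil \log_2 n\rceil+O(1)$. Every anchored box $[0,\bz)$ can be approximated, up to an additive error $O(1)$, by a disjoint union of at most $O(L^d)=O(\log^d n)$ dyadic boxes $\prod_i [a_i2^{-k_i},(a_i+1)2^{-k_i})$ with $k_i\le L$. The approximation error comes from boundary boxes of volume about $1/n$. We control these by the number of retained points in the thin boundary slabs, which is $O(1)$ in expectation and $O(\log n)$ with high probability. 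It then suffices to show that, with high probability, every dyadic box $B$ of this type satisfies $\bigl||A\cap B|-n|B|\bigr|\le O_{d,\varepsilon}(\log n)$. Summing over $O(\log^d n)$ boxes gives $O_{d,\varepsilon}(\log^{d+1} n)$, one $\log n$ factor better than \cite{DFG+19}. In their analysis, each box discrepancy is only bounded by roughly $\log^{d+1}n$, which is where the extra logs come from.

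The algorithm is a linear-feedback variant of Haar-thinning. For each dyadic level vector $\mathbf{k}=(k_1,\dots,k_d)$ with $k_i\le L$, and each retained point, the corresponding grid cell determines a vector of signs of the $O_d(L^d)$ Haar functions supported there. We maintain the running Haar discrepancies $D_h=\sum_{\bz\in A}h(\bz)$, which are the differences of counts between dyadic siblings. When a sample $\bz$ arrives (and the previous sample was not discarded), we compute the score $s(\bz)=\sum_h w_h D_h\,h(\bz)$, using normalized weights $w_h$. We discard $\bz$ with probability proportional to $\varepsilon\cdot\max(0,\lambda s(\bz))$, capped at $1$, for a small constant $\lambda$. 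This is the linear feedback: the rejection probability grows linearly with the current imbalance, rather than using a hard threshold. The cost per sample is $\widetilde{O}_{d,\varepsilon}(1)$, since $\bz$ meets only $O(L^d)$ Haar functions. The rule never discards two consecutive samples, and since the expected discard rate is at most $O(\varepsilon)$, a Chernoff bound shows that $(1+\varepsilon)n$ samples yield $n$ retained ones with high probability.

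The analysis is a drift argument for each fixed Haar function $h$. Conditioned on the past, an accepted uniform sample changes $D_h$ by $\pm 1$ with mean $-c\,\varepsilon\lambda w_h D_h/\mathrm{vol}$ plus cross terms $\sum_{h'\neq h} w_{h'}D_{h'}\,\mathbb{E}[h h'\cdot\mathbf{1}]$. For distinct Haar functions of low order, $\mathbb{E}[hh']=0$ under the uniform measure, so the cross terms vanish at first order. This yields an Ornstein--Uhlenbeck-type recursion with restoring force proportional to $D_h$ and unit-variance increments. An exponential supermartingale $\exp(\theta D_h^2/\text{scale})$ then gives the tail bound $\Pr[|D_h|>t\sqrt{\text{scale}}]\le e^{-\Omega(t^2)}$ at every time. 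A union bound over $\mathrm{poly}(n)$ Haar functions and times gives $|D_h|=O(\log n)$ after scale normalization. A dyadic box's discrepancy is a telescoping sum of Haar discrepancies along its ancestor chain. We must show that this sum is also $O(\log n)$ and not $O(\log^2 n)$. This is the main obstacle. We plan to handle it using the uncorrelatedness of distinct Haar discrepancies (the symmetry property highlighted in \Cref{subsec:uncorrelation-Haar}). Uncorrelatedness makes the ancestor sum behave like a sum of $O(\log n)$ uncorrelated terms, each of variance $O(1)$ after weighting. We would then apply a joint exponential-moment (martingale) bound to the sum directly, rather than summing worst-case bounds term by term.

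If the joint exponential bound fails because the cross terms are only second-order small, the fallback is to choose the weights $w_h$ decaying geometrically with the order. Each box discrepancy is then dominated by its top few Haar terms, and the loss is at most a constant factor.
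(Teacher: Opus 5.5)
Your linear-feedback idea and the grid approximation of anchored boxes match the paper, but the reduction you build on them cannot close. It needs every dyadic box to have discrepancy $O_{d,\varepsilon}(\log n)$, and then pays a triangle inequality over $O(\log^d n)$ pieces.

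The ancestor chain you flag is not the difficulty. The Haar coefficients of a dyadic box with level vector $\mathbf{k}$ are products of factors $\pm 2^{j_i-1-k_i}$ (and $2^{-k_i}$ for $j_i=0$), and their absolute values sum to $1$. So the box's discrepancy is at most $\max_h|D_h|$ outright.

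The real difficulty is that $\max_h|D_h|=O(\log n)$ is out of reach for a valid linear-feedback rule. Keeping discards at rate $O(\varepsilon)$ requires $\lambda|s(\mathbf{z})|=O(1)$ for all $\mathbf{z}$, while $\int s^2=\sum_h w_h^2D_h^2\lVert h\rVert_2^2$. A coordinate with per-hit restoring rate $\gamma_h\approx\varepsilon\lambda w_h$ has stationary variance about $1/\gamma_h$. Since the supports at each of the $(L+1)^d$ level vectors partition the cube, this heuristic scale count forces $\gamma_h\lesssim\varepsilon^2/L^d$ on average. Typical $|D_h|$ is then of order $\log^{d/2}n$, which exceeds $\log n$ after the union bound once $d\ge 2$.

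In the paper the rate is $\varepsilon/(2B)$ with $B=\Theta_{d,\varepsilon}(\log^{d+1}n)$. A single Haar discrepancy is only controlled at $O_{d,\varepsilon}(\sqrt{(B/\varepsilon)\log n})=O_{d,\varepsilon}(\log^{d/2+1}n)$. Summing bounds of that size over $O(\log^d n)$ pieces gives $\log^{3d/2+1}n$, not $\log^{d+1}n$. Your fallback of geometrically decaying weights removes the restoring force exactly at the fine scales your decomposition relies on.

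The missing idea is to aggregate in $\ell_2$ over the Haar coefficients of the box itself, rather than over dyadic pieces. The paper writes $D_R(A_t)=\langle a,\varphi_t\rangle$ with $a_H=\langle R,H\rangle/\lVert H\rVert_2^2$, so that $\lVert a\rVert_1,\lVert a\rVert_2^2=O_d(\log^d n)$. It then proves, for every $a$, the tail bound $\exp\bigl(-\Omega(\xi^2)/((B/\varepsilon)\lVert a\rVert_2^2+\xi\lVert a\rVert_1)\bigr)$, which yields $\xi=O_{d,\varepsilon}(\log^{d+1}n)$ directly.

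The symmetry/uncorrelation lemma you invoke controls only second moments, so it cannot support a union bound over $2^{\ell d}$ boxes and all $t$. The tail bound instead rests on the drift being exactly diagonal in the Haar basis: $\mathbb{E}_{t-1}[H(\mathbf{z}_t)]=-\frac{\varepsilon}{2B}\varphi_{t-1}(H)\lVert H\rVert_2^2$. Then $\langle a,\varphi_t\rangle$ unrolls into centered increments damped by $(1-\varepsilon\lVert H\rVert_2^2/(2B))^{t-s-1}$. Their conditional second moments are diagonal up to a factor $1+\varepsilon/2$ by orthogonality, and Freedman's inequality applies.

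Your truncated rule destroys this diagonality. We have $\langle h,\max(0,s)\rangle=\tfrac12 w_hD_h\lVert h\rVert_2^2+\tfrac12\langle h,|s|\rangle$, and the last term couples all the $D_{h'}$, so the cross terms do not vanish even at first order.

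The paper keeps the feedback linear and signed by adding a baseline rejection probability $\varepsilon/2$, i.e.\ $\mu_t=1-\frac{\varepsilon}{2B}\Phi_t$. This is a valid thinning density exactly when $|\Phi_t|\le B$. The paper proves $|\Phi_t|\le B$ with high probability by applying the same tail bound to $a_H=H(\mathbf{x}_I)$ (so $\lVert a\rVert_2^2\le\ell^d$), for a process frozen at the first failure. This self-consistency is what fixes $B=\Theta_{d,\varepsilon}(\log^{d+1}n)$.

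Your claim that the discard rate is $O(\varepsilon)$ likewise presupposes a bound on $\lambda s$ that the proposal never establishes.
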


The algorithm in \Cref{thm:GridRectangleThinning} produces a point sequence, matching the best known $O_d(\log^d n)$ bound for explicit point sequences \cite{Hal60,Sob67,Fau82,Nie87} up to a $\log n$ factor.

\smallskip
\noindent \textbf{Roadmap.} We begin with an overview of our techniques in  \Cref{subsec:overview}. Then in \Cref{sec:preliminaries}, we collect  the preliminaries and describe the online thinning framework from \cite{DFG+19}. We present the uniformly-shifted Haar-thinning algorithm in \Cref{sec:qmc} and prove \Cref{thm:main}. Finally, we give the linear-feedback Haar-thinning algorithm for \Cref{thm:GridRectangleThinning} in  \Cref{sec:rect-disc}.

\subsection{Overview}\label{subsec:overview}
In this subsection, we give an overview of our techniques. 
We start with a brief recap of Bansal and Jiang's QMC method \cite{BJ25a} and explain why it requires $n^2$ initial samples. We then discuss the Haar-thinning strategy of \cite{DFG+19}, and present our variants of it that attain the QMC bound in \cite{BJ25a} and an improved $O_{d,\varepsilon}(\log^{d+1} n)$ star discrepancy bound using only $(1+\varepsilon)n$ samples.

\subsubsection{The Bansal--Jiang QMC Method}
\label{subsubsec:Bansal-Jiang}

The Bansal--Jiang method is based on the
transference principle. It starts from $n^2$ i.i.d. uniform samples from $[0,1)^d$, and iteratively thins them down as follows: at each step, it computes a balanced $\{\pm 1\}$-coloring of the current samples with both low discrepancy and 
sufficient randomness for dyadic rectangles (of side length $\geq 1/\poly(n)$)---the two colors then split the current samples into two equal-sized halves. Iterating this process for $\log n$ steps splits the $n^2$ samples into $n$ sets of size $n$ each. 
Using the properties above, Bansal and Jiang found cancellations missed by the Koksma--Hlawka inequality \eqref{eq:StdQMCError}, which led to their beyond-Hardy--Krause error bound in \Cref{thm:BJ25}. 

\smallskip
\noindent \textbf{The Inherent Quadratic Blowup.} Despite the strong guarantees, Bansal and Jiang's method requires $n^2$ samples to begin with, even to produce a single QMC set. This quadratic blowup is  in fact inherent to any method based on the transference principle: the integration error from any such method is at least the Monte Carlo error of the initial sample pool, which can only be bounded by $O(1/n)$ when the initial sample size exceeds $\Omega(n^2)$.

\subsubsection{The Haar-Thinning Strategy}
\label{subsubsec:online}

To bypass the quadratic blowup, we completely abandon the transference principle, and consider the alternative online thinning framework of \cite{DFG+19}.
Here, samples arrive one at a time, and upon seeing a sample, the algorithm must decide whether to retain or discard it based on the samples retained so far; and in case it discards a sample, it must retain the next one.
Clearly, the amount of over-sampling is determined by the probability of rejecting each sample. 

In this model, Dwivedi,
Feldheim, Gurel-Gurevich, and Ramdas
\cite{DFG+19} gave a beautiful strategy called {\em Haar-thinning} that, from $(1+\varepsilon)n$ samples, retains $n$ samples (by rejecting each sample with probability at most $\varepsilon$) with star discrepancy $O_d(\log^{2d+1} n)$. This
already recovers the classical QMC error bound of $\widetilde{O}_d(V_{\mathsf{HK}}(f)/n)$ by the Koksma--Hlawka inequality \eqref{eq:StdQMCError}. 

\smallskip
\noindent \textbf{Haar-Thinning as Rejection Sampling.} Roughly speaking, the Haar-thinning strategy is a rejection sampling method based on ``votes'' from all Haar functions (up to order $\ell :=O(\log n)$): 
upon receiving each sample $\bx_t$, the Haar functions whose discrepancies will increase (resp. decrease) from retaining $\bx_t$ will vote ``REJECT'' (resp. ``ACCEPT''), and those not affected by $\bx_t$ will not vote. 
Here, the discrepancy\footnote{Intuitively, as each Haar function is mean-zero (i.e., $\overline{H} = 0$), $\varphi_t$ can be viewed as a continuous notion of discrepancy for functions which is analogous to the star discrepancy $D^*(A_t)$.} of each Haar function $H$ at step $t$ is $\varphi_t(H) := \sum_{\bz \in A_t} H(\bz)$, where $A_t$ denotes the samples retained prior to step $t$. 
Then the Haar-thinning strategy rejects the sample $\bx_t$ (and hence commits to the next sample) with probability roughly 
\begin{align} \label{eq:Haar-thinning-density-overview}
\frac{\varepsilon}{2} \cdot \Big(1 + \frac{\#\textsf{REJECT} - \#\textsf{ACCEPT}}{\ell^d} \Big) .
\end{align}
Note that since any sample $\bx_t$ lies in at most $\ell^d$ Haar functions, the rejection probability in \eqref{eq:Haar-thinning-density-overview} is always in $[0,\varepsilon]$, so this strategy indeed uses at most $(1+\varepsilon)n$ samples.

An ingenious insight used in \cite{DFG+19} to prove their star discrepancy bound is the following: because distinct Haar functions are orthogonal, the rejection rule \eqref{eq:Haar-thinning-density-overview} actually reduces the discrepancy of {\em every} Haar function on average. In particular, the slight bias in \eqref{eq:Haar-thinning-density-overview} ensures that each $\varphi_t(H)$ has a negative drift of $O_\varepsilon(1/\ell^d)$ towards $0$, and hence cannot exceed $O_\varepsilon(\ell^{d+1}) = O_{\varepsilon}(\log^{d+1} n)$ with high probability.   Their $O_{d,\varepsilon}(\log^{2d+1} n)$ star discrepancy bound then follows from a standard Haar decomposition of each anchored box, which involves at most $O_d(\log^d n)$ Haar functions.

\subsubsection{Beyond-Hardy--Krause via Uncorrelation of Haar Discrepancies}

To recover the beyond-Hardy--Krause guarantee in \cite{BJ25a}, our key technical insight is a new {\em uncorrelation} property for the Haar-thinning strategy: at any step $t$, while the discrepancies of different Haar functions are highly dependent due to their overlaps, they are in fact uncorrelated, i.e., $\E[\varphi_t(H) \varphi_t(G)] = 0$ for any distinct Haar functions $H$ and $G$ (up to order $\ell$).

This property cannot be established by a per-step analysis, as the bias in the voting rule \eqref{eq:Haar-thinning-density-overview} (and hence the density of the next retained sample) breaks the orthogonality of the Haar basis. Instead, we use a symmetry argument---by defining a bijection $T$ that flips the two halves of $H$ in some dimension (so that $\varphi_t(H)$ flips its sign while $\varphi_t(G)$ remains the same), we show that the Haar-thinning strategy commutes with the mapping $T$ and thus $\varphi_t(H) \varphi_t(G)$ and $-\varphi_t(H) \varphi_t(G)$ have the same distribution. We postpone the details to \Cref{subsec:uncorrelation-Haar}.   

\smallskip
\noindent \textbf{Uniformly-Shifted Haar-Thinning.} With the above uncorrelation property, we consider a uniformly-shifted variant of the Haar-thinning strategy where the Haar basis functions are all shifted by $\bs\sim \mathsf{Unif}([0,1)^d)$. For the integration error, one can equivalently view the shift $\bs$ as being applied to the function $f$, while the algorithm remains the same as the Haar-thinning strategy. 

We show that this algorithm achieves the beyond-Hardy--Krause error bound in \Cref{thm:main} via the streamlined analysis in \cite{CJJ26} that directly works with the Haar basis. In particular, the uncorrelation property for the Haar discrepancies gives the cancellation needed to go beyond the Hardy--Krause variation; and the random shift $\bs$ allows one to pass from the integration error in the Haar basis, which is captured by the Haar--Besov seminorm (see \Cref{defn:HaarBesovSeminorm}), to the smoothed-out variation (see \Cref{thm:shift-averaged-mixed-haar-equivalence}). We leave the details to \Cref{sec:qmc}.

\subsubsection{Achieving Lower Star Discrepancy via Online Thinning}

The uncorrelation of Haar discrepancies turns out to buy us more than just the beyond-Hardy--Krause error bound in \cite{BJ25a} for numerical integration: it also allows us to improve upon the star discrepancy bound of \cite{DFG+19} as follows.

Specifically, in their analysis (recall the last paragraph of \Cref{subsubsec:online}), when writing each anchored box as the weighed sum of $O_d(\log^d n)$ Haar functions, the uncorrelation property allows these Haar discrepancies to add up in an $\ell_2$ way---this directly gives an improved $O_{d,\varepsilon}(\log^{3d/2 + 1} n)$ star discrepancy bound for the Haar-thinning strategy, settling \cite[Conjecture 1]{DFG+19}.

However, the Haar-thinning strategy falls short of achieving the $O_d(\log^{d+1} n)$ bound in \Cref{thm:GridRectangleThinning}. The main bottleneck here is that the drift towards $0$ is only on the order of $O_\varepsilon(1/\log^d n)$ for each Haar function, and hence the discrepancy of each single Haar function may go up to $O_\varepsilon(\log^d n)$. 

\smallskip
\noindent \textbf{Linear-Feedback Haar-Thinning.} To bypass the above issue, we consider a linear-feedback variant of the Haar-thinning strategy: instead of just letting each Haar function vote ``REJECT'' or ``ACCEPT'', we also have each vote carry the weight of the corresponding Haar discrepancy and use a rejection sampling rule based on the weighted difference of the votes:
\begin{align} \label{eq:linear-feedback-density-overview}
\frac{\varepsilon}{2} \cdot \Big(1 + \frac{\sum_{H: \textsf{ REJECT}} |\varphi(H)| - \sum_{H: \textsf{ ACCEPT}} |\varphi(H)|}{\ell^d} \Big) .
\end{align}
Intuitively, the Haar functions with larger discrepancies now have a stronger drift towards $0$ that is proportional to their discrepancies. This allows us to control the Haar discrepancies to be on the much smaller order of $O_{d,\varepsilon}(\log^{d/2+1} n)$, which leads to \Cref{thm:GridRectangleThinning}.   

Note that one potential issue with the rejection sampling rule \eqref{eq:Haar-thinning-density-overview} is that the total weights of all Haar functions affected by the current sample $\bx_t$ may go up to $(\log^{d/2+1} n) \cdot \ell^d$, and hence \eqref{eq:Haar-thinning-density-overview} might be an invalid rejection probability. 
Nonetheless, we are able to show that \eqref{eq:Haar-thinning-density-overview} remains bounded in $[0,\varepsilon]$ with high probability via a delicate martingale analysis. See \Cref{sec:rect-disc} for the details.

\subsection{Further Related Work}

\noindent \textbf{Low-Discrepancy Point Sets and Sequences.}
Low-discrepancy point sets and sequences  have been central to geometric discrepancy theory since the ground-breaking work of van der Corput \cite{vdC35a,vdC35b}, who showed that the star discrepancy for $d=2$ is $O(\log n)$ as opposed to $O(\sqrt{n})$ attained by i.i.d. random samples or the $\sqrt{n} \times \sqrt{n}$ grid. 
The result of van der Corput has since been extended, giving the current best star discrepancy bound of $O_d(\log^{d-1} n)$ for point sets (e.g., \cite{Ham60,Nie87,Nie92}) and $O_d(\log^d n)$ for point sequences (e.g., \cite{Hal60,Sob67,Fau82,Nie87}). 
These constructions have found important applications to numerical integration \cite{Kok42,Hla61,Zar68} and have been extensively studied in this context since then \cite{Lem09,DP10,DKPS13,Owe13}.

Interestingly, the optimal star discrepancy bound for point sets and sequences remains unknown when $d \geq 3$. For $d = 2$, Schmidt \cite{Sch72} proved the optimal lower bound of $\Omega(\log n)$ for point sets (or point sequences in 1-d). Later, Hal\'asz \cite{Hal81} gave an elegant new proof of Schmidt's lower bound using a Riesz product. For $d\geq 3$, the best lower bound is essentially $\Omega_d(\log^{(d-1)/2} n)$ from Roth's seminal work \cite{Rot54}, with small improvements given in \cite{Bec89,BL08,BLV08}.

\noindent \textbf{Non-Uniform Distribution.} While the uniform distribution is standard when studying the star discrepancy or numerical integration, the generalizations of these notions to non-uniform distributions have also been studied. We refer readers to \cite{AD14,ABN16} and the references therein. 

Variants of the online thinning problem where the underlying distribution is non-uniform have also been studied recently. In these contexts, the Haar functions seem less useful as they no longer form an orthogonal basis. We refer readers to the recent work \cite{SV26a,SV26b} for details.

\section{Preliminaries}\label{sec:preliminaries}

This section introduces the notation and preliminaries used throughout the paper.

\noindent \textbf{Haar Functions and Boxes.} For integers $j\geq 1$ and $0\leq k<2^{j-1}$, let $h_{j,k}$ denote the one-dimensional Haar function that equals $1$ on $[(2k)2^{-j},(2k+1)2^{-j})$, equals $-1$ on $[(2k+1)2^{-j},(2k+2)2^{-j})$, and is zero elsewhere. We also let $h_{0,0}$ denote the constant function $1$ on $[0,1)$. For vectors $\mathbf{j}$ and $\mathbf{k}$, define the multidimensional Haar function $H_{\mathbf{j},\mathbf{k}}$ on $[0,1)^d$ by $\smash{H_{\mathbf{j},\mathbf{k}}(\bx)=\prod_{i\in[d]}h_{\mathbf{j}_i,\mathbf{k}_i}(\bx_i)}$. We write $|\mathbf{j}|$ for the $\ell_1$-norm of $\mathbf{j}$, namely $\sum_{i\in[d]}\mathbf{j}_i$. We note that the Haar functions form an orthogonal basis for $\smash{L^2([0,1)^d)}$ under the inner product $\smash{\langle f,g\rangle=\int_{[0,1)^d}f(\bx)g(\bx)\,\mathrm d\bx}$.

We abuse notation and use $\Pi_{\leq\ell}$ to denote both the collection of non-constant Haar functions $H_{\mathbf{j},\mathbf{k}}$ with $\mathbf{j}_i\leq\ell$ for every $i\in[d]$ and the orthogonal projection onto their span. Likewise, $\Pi_{>\ell}$ denotes both the remaining nonconstant Haar functions and the orthogonal projection onto their span. Further, we use $N_{\ell}$ to denote the $|\Pi_{\leq \ell}|$.

\smallskip
\noindent \textbf{Haar--Besov Seminorm and Equivalence.} The Haar--Besov seminorm is central to our analysis. We refer the reader to \cite{vybiral2006function,triebel2019function} for background.
\begin{definition}[Haar--Besov Seminorm]\label{defn:HaarBesovSeminorm}
The Haar--Besov seminorm of $f\in L^2([0,1)^d)$ is
\[
\lVert f\rVert_B^2
:=
\sum_{\mathbf{j},\mathbf{k}}
\frac{
\bigl|\langle f,H_{\mathbf{j},\mathbf{k}}\rangle\bigr|^2}{\lVert H_{\mathbf{j},\mathbf{k}}\rVert_{2}^{4}},
\]
where $\mathbf{j},\mathbf{k} \in \mathbb{Z}^d$ are all tuples that satisfy $\mathbf{j}_i\geq 0$ and $0\leq\mathbf{k}_i<2^{\mathbf{j}_i}$ for every $i\in[d]$. 
\end{definition}

For a function $f$ on $[0,1)^d$ and a shift $\bs\in[0,1)^d$, define $f_{\bs}(\bx)=f(\bx+\bs\bmod[0,1)^d)$. 
It was shown in~\cite{CJJ26} that the  average of the squared Haar--Besov seminorm of $f_{\bs}$ over a uniform random shift $\bs$ is equivalent to the squared \emph{smoothed-out} variation $\sigma_{\mathsf{SO}}^2(f)$ introduced in~\cite{BJ25a}.

\begin{restatable}[Shift-averaged seminorm equivalence]
    {theorem}{shiftAveragedHaarEquivalence}
\label{thm:shift-averaged-mixed-haar-equivalence}
For any \(f\in L^2([0,1)^d)\), its average Haar--Besov seminorm over random shifts $\bs \sim \mathsf{Unif}([0,1)^{d})$ is equivalent to its smoothed-out variation:
\[
    \mathbb{E}_{\mathbf{s}\sim [0,1)^{d}}
    [\lVert f_{\mathbf{s}}\rVert_{B}^{2}]
    = 
    \Theta_{d}(\sigma_{\mathsf{SO}}(f)^2),
\]
where $\sigma_{\mathsf{SO}}(f)^2 = \sum_{\mathbf{k}\in\mathbb{Z}^d \setminus \{\mathbf{0}\}} |\widehat{f}_{\mathbf{k}}|^{2}\prod_{i \in [d]}\max(1, |\mathbf{k}_{i}|)$ and $\widehat{f}_{\mathbf{k}}$ denotes the Fourier coefficient corresponding to frequency $\mathbf{k}$ in $\mathbb{Z}^{d}$.
\end{restatable}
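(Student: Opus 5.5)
Since the theorem splits over coordinates (the seminorm weight $\|H_{\mathbf j,\mathbf k}\|_2^{-4}$ factorizes, and so does the smoothed-out weight $\prod_i\max(1,|\mathbf k_i|)$), I will reduce everything to a one-dimensional multiplier computation. The proof works in the Fourier basis throughout.

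First, for fixed $(\mathbf j,\mathbf k)$, write $\langle f_{\bs},H_{\mathbf j,\mathbf k}\rangle=\sum_{\mathbf m}\widehat f_{\mathbf m}\,e^{2\pi i\langle \mathbf m,\bs\rangle}\,\overline{\widehat H_{\mathbf j,\mathbf k}(\mathbf m)}$. Averaging $|\cdot|^2$ over $\bs\sim\mathsf{Unif}([0,1)^d)$ kills all cross terms by orthogonality of characters. This gives
\[
\mathbb E_{\bs}\bigl[\lVert f_{\bs}\rVert_B^2\bigr]=\sum_{\mathbf m}|\widehat f_{\mathbf m}|^2\,W(\mathbf m),\qquad W(\mathbf m)=\prod_{i\in[d]} w(\mathbf m_i),
\]
where the one-dimensional weight is
\[
w(m)=\sum_{j\ge0}\sum_{k<2^{j-1}}\frac{|\widehat{h_{j,k}}(m)|^2}{\|h_{j,k}\|_2^4}.
\]
The product structure follows because both $H$ and its normalization factorize. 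The constant Haar function $H_{\mathbf 0,\mathbf 0}$ only sees $\mathbf m=\mathbf 0$. Coordinates with $j_i=0$ contribute only when $\mathbf m_i=0$, and there $w(0)$ should be interpreted with the $j=0$ term equal to $1$. So it suffices to show $w(0)=\Theta(1)$ and $w(m)=\Theta(|m|)$ for $m\neq0$. One caveat: $\mathbf m=\mathbf0$ contributes $|\widehat f_{\mathbf 0}|^2$, which is excluded in $\sigma_{\mathsf{SO}}$. I would handle this by noting the seminorm is meant on mean-zero $f$, or equivalently that the sum runs over nonconstant Haar functions, as in the paper's convention for $\Pi$.

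Second, I compute $w$ explicitly. For $j\ge1$ we have $\|h_{j,k}\|_2^2=2^{-j}$. Also $|\widehat{h_{j,k}}(m)|$ does not depend on $k$, since translation only changes the phase, and
\[
|\widehat{h_{j,1}}(m)|^2=\Bigl|\tfrac{(1-e^{-2\pi i m2^{-j}})^2}{2\pi i m}\Bigr|^2=\frac{16\sin^4(\pi m 2^{-j})}{4\pi^2 m^2}
\]
for $m\neq0$, and it vanishes at $m=0$. Summing over the $2^{j-1}$ translates gives
\[
w(m)=\sum_{j\ge1}2^{j-1}\cdot 2^{2j}\cdot\frac{4\sin^4(\pi m2^{-j})}{\pi^2m^2}.
\]
Let $J$ be such that $2^J\approx|m|$.
\begin{itemize}
\item For $j\ge J$, $\sin^4\approx(\pi m2^{-j})^4$, so the terms are $\approx m^2 2^{-j}$, a geometric series dominated by $j\approx J$. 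This part totals $\Theta(|m|)$.
\item For $j<J$, we bound $\sin^4\le1$, so the terms are at most $2^{3j}/m^2$, again geometric, summing to $O(2^{3J}/m^2)=O(|m|)$.
\end{itemize}
For the lower bound, a single term with $2^{j}\in[2|m|,4|m|)$ has $\sin(\pi m2^{-j})\ge\sin(\pi/4)$, which gives $w(m)\gtrsim |m|$.

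Third, multiplying over coordinates gives $W(\mathbf m)=\Theta(1)^d\prod_i\max(1,|\mathbf m_i|)$, and summing over $\mathbf m\neq0$ yields the claim with constants $\exp(O(d))$.

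The main obstacle is bookkeeping rather than depth. The index range in Definition~\ref{defn:HaarBesovSeminorm} ($\mathbf k_i<2^{\mathbf j_i}$) does not literally match the Haar indexing ($k<2^{j-1}$). I would fix the convention as the latter, noting that extra indices give zero functions or at most double counting, and I would handle the zero-frequency and constant terms consistently.
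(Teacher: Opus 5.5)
The paper does not prove this statement. It imports it from \cite{CJJ26} and uses it purely as a black box, so there is no in-paper argument to compare against; judged on its own, your proof is correct and self-contained. The steps all go through:
\begin{itemize}
\item Averaging over the shift diagonalizes the seminorm in the Fourier basis. This is Parseval in $\bs$, valid since the coefficients $\widehat f_{\mathbf m}\overline{\widehat H(\mathbf m)}$ are square-summable, together with Tonelli to exchange $\mathbb{E}_{\bs}$ with the sum over $(\mathbf j,\mathbf k)$.
\item Both the index set and the weights $\lVert H_{\mathbf j,\mathbf k}\rVert_2^{-4}$ tensorize, so $W(\mathbf m)=\prod_i w(\mathbf m_i)$.
\item The one-dimensional multiplier satisfies $w(0)=1$ exactly and $w(m)=\Theta(|m|)$ for $m\neq 0$. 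Your two-regime geometric-sum upper bound and single-scale lower bound establish this.
\end{itemize}
This gives explicit constants $C^{\pm d}$, which is exactly the $\exp(O(d))$ dependence the paper's notation allows. Your approach also identifies the exact shift-averaged multiplier, which is more than the $\Theta_d$ statement requires.

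There are two small slips, neither of which affects the conclusion:
\begin{itemize}
\item The support of $h_{j,k}$ has length $2^{1-j}$, so $\lVert h_{j,k}\rVert_2^2=2^{1-j}$ rather than $2^{-j}$ (the paper itself uses $2^{1-\mathbf j_i}$ in \Cref{lem:HighOrderError}). This changes $w$ only by a factor of $4$.
\item The translate whose transform you compute is $h_{j,0}$, not $h_{j,1}$. This is immaterial, since $|\widehat{h_{j,k}}(m)|$ does not depend on $k$.
\end{itemize}

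Your caveat about the zero frequency is genuine, not merely cosmetic. With Definition~\ref{defn:HaarBesovSeminorm} read literally, the sum includes $\mathbf j=\mathbf 0$. A nonzero constant $f$ then has $\sigma_{\mathsf{SO}}(f)=0$ but $\lVert f_{\bs}\rVert_B^2=|\bar f|^2>0$. So the statement needs either $\bar f=0$ or the exclusion of $H_{\mathbf 0,\mathbf 0}$. This matches how the paper actually uses the result: it reduces to $\bar f=0$ before invoking it in the proof of \Cref{thm:main}. Likewise, the indices $2^{\mathbf j_i-1}\le\mathbf k_i<2^{\mathbf j_i}$ allowed by the definition correspond to zero functions (giving a $0/0$ term) and should simply be dropped, as you suggest.
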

We do not discuss $\sigma_{\mathsf{SO}}$ in detail because our analysis works entirely in the Haar basis. We use \Cref{thm:shift-averaged-mixed-haar-equivalence} as a black box to translate our final bound into this error characterization.

\smallskip
\noindent \textbf{Boxes and Star Discrepancy.}
We use $\mathcal{R}$ to denote the set of all \emph{anchored boxes}, i.e., boxes of the form $[0,\bx_1) \times \cdots \times [0, \bx_d)$ for $\smash{\bx\in[0,1)^d}$.
We use $\mathcal{R}_{\ell}$ to denote the set of all {\em grid boxes}, where each edge is of the form $[u2^{-\ell},v2^{-\ell})$ for integers $0\leq u\leq v\leq 2^\ell$.  

For a finite point set $A\subset[0,1)^d$ and any Lebesgue-measurable set $R \subseteq [0,1)^d$, define the {\em continuous discrepancy}, or the {\em Lebesgue-measure discrepancy}, of $A$ with respect to $R$ as
\begin{equation}\label{eq:ContinuousDiscDefn}
    D_R(A) :=|A\cap R|-|A||R|,
\end{equation}
where $|R|$ is the Lebesgue measure of $R$. The {\em star discrepancy} of $A$ is defined as
\begin{equation}\label{eq:starDiscDefn}
    D^{\star}(A) :=\sup_{R \in \mathcal{R}}|D_{R}(A)|.
\end{equation}

\subsection{The \texorpdfstring{$(1 + \varepsilon)$}{1+eps}-Thinning Framework}\label{subsec:thinningFramework} Both of our algorithms use the $(1+\varepsilon)$-online thinning framework introduced in \cite{DFG+19}. Fix $\varepsilon \in (0,1)$, they start from an empty set $A$ and, at each step $t$, perform rejection sampling with some target density $\mu_t(\bx)$ on $[0,1)^d$, where $\mu_t(\bx) \in [1 - \varepsilon/2, 1 + \varepsilon/2]$ for all $\bx \in [0,1)^d$.  Specifically, for each step $t=0,1,\ldots,n-1$, they perform the following step:
\begin{enumerate}
    \item Sample $\bx_t\sim\mathsf{Unif}[0,1)^d$ and $\xi_t\sim\mathsf{Unif}[0,1)$. If $\xi_t\leq\mu_t(\bx_t)-\varepsilon/2$, add $\bx_t$ to $A$. Otherwise, sample $\by_t\sim\mathsf{Unif}[0,1)^d$ and add $\by_t$ to $A$.
\end{enumerate}
Note that the strategy never rejects two consecutive samples in this framework. 
Denote by $\bz_t$ the sample added to $A$ at step $t$.
The following is a basic fact about \emph{rejection-sampling}.

\begin{fact}[\cite{DFG+19}]
At each step $t$, the retained point $\bz_t$ has density $\mu_t$.
\end{fact}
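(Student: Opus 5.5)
We need to show that the retained point $\bz_t$ has density $\mu_t$, conditional on the past (so $\mu_t$ is fixed when we condition on $\bz_0,\dots,\bz_{t-1}$). The plan is to compute $\Pr[\bz_t \in S]$ for an arbitrary measurable $S\subseteq[0,1)^d$ by splitting on whether $\bx_t$ is accepted, and to check that the result equals $\int_S \mu_t(\bx)\,\mathrm d\bx$.

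\textbf{Acceptance term.} Since $\mu_t(\bx)\in[1-\varepsilon/2,1+\varepsilon/2]$, the threshold $\mu_t(\bx)-\varepsilon/2$ lies in $[1-\varepsilon,1]\subseteq[0,1]$. Hence, for each fixed $\bx$,
\[
\Pr[\xi_t\le \mu_t(\bx)-\varepsilon/2]=\mu_t(\bx)-\varepsilon/2 .
\]
Because $\bx_t$ and $\xi_t$ are independent, Fubini gives
\[
\Pr[\bx_t\in S,\ \text{accept}]=\int_S\bigl(\mu_t(\bx)-\varepsilon/2\bigr)\,\mathrm d\bx .
\]

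\textbf{Rejection term.} The total rejection probability is
\[
p:=1-\int_{[0,1)^d}\bigl(\mu_t(\bx)-\varepsilon/2\bigr)\,\mathrm d\bx .
\]
On rejection the framework outputs a fresh $\by_t\sim\mathsf{Unif}[0,1)^d$ that is independent of the rejection event. Therefore
\[
\Pr[\bz_t\in S,\ \text{reject}]=p\,|S| .
\]

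\textbf{Combining, and the main obstacle.} Adding the two terms gives the density $\mu_t(\bx)-\varepsilon/2+p$. The one nontrivial point is that this equals $\mu_t(\bx)$ only if $p=\varepsilon/2$, which requires $\int\mu_t=1$. This is not stated explicitly, but it is implicit in calling $\mu_t$ a "target density": without it the retained point would have density $\mu_t+(\text{const})$, not $\mu_t$. I would therefore make the normalization $\int\mu_t=1$ explicit as an assumption of the framework; the Haar-based densities in \eqref{eq:Haar-thinning-density-overview} satisfy it because Haar functions have mean zero. With $p=\varepsilon/2$, the density of $\bz_t$ is exactly $\mu_t$. As a by-product, the rejection probability is $\varepsilon/2$ at each step.
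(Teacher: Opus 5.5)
Your argument is correct. It is the standard rejection-sampling computation, and the paper gives no proof of this fact at all; it simply cites \cite{DFG+19}.

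Your remark that $\int \mu_t = 1$ is needed (equivalently, that the average rejection probability is exactly $\varepsilon/2$) is accurate, and the framework implicitly assumes it by calling $\mu_t$ a ``target density''. The paper checks it explicitly for \eqref{eq:TargetDensity_qmc} in \Cref{subsec:algQMC}. It also holds for \eqref{eq:linear-feedback-Haar}, since $\Phi_t$ is a linear combination of mean-zero Haar functions.

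One small slip: the overview display you cite for this is not a density. It is (roughly) the pointwise rejection probability $1+\varepsilon/2-\mu_t(\bx)$, so the correct references are \eqref{eq:TargetDensity_qmc} and \eqref{eq:linear-feedback-Haar}.
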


For each step $t$, let $A_t$ be the set of $t$ points retained by the strategy in steps $0,\ldots,t-1$, i.e., $A_t = \{\bz_0, \ldots, \bz_{t-1}\}$ and let $A := A_n$ be its output.
For any Haar function $H$, we denote its discrepancy at step $t$ as $\smash{\varphi_t(H) := \sum_{\by\in A_{t}}H(\by)}$. 

An important property of the strategy is that it samples at most $(1 + O(\varepsilon))$ points.
\begin{proposition}\label{prop:NotTooManySamples}
With high probability, any $(1+\varepsilon)$-online thinning strategy as defined above uses at most $(1+O(\varepsilon))n$ samples.
\end{proposition}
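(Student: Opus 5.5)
The number of samples used equals $n$ plus the number of rejections, so it suffices to show that the number of rejections over the $n$ steps is $O(\varepsilon n)$ with high probability. At step $t$, the first sample $\bx_t$ is rejected exactly when $\xi_t > \mu_t(\bx_t)-\varepsilon/2$. Let $\mathcal{F}_t$ denote the history before step $t$; the density $\mu_t$ is $\mathcal{F}_t$-measurable, and $\bx_t,\xi_t$ are fresh and independent of $\mathcal{F}_t$. Since $\mu_t(\bx)\ge 1-\varepsilon/2$ pointwise, we have $\mu_t(\bx)-\varepsilon/2\ge 1-\varepsilon$. Hence, conditional on $\mathcal{F}_t$, the rejection probability is at most $\varepsilon$ whatever the history is. (The framework also guarantees $\mu_t - \varepsilon/2 \le 1$, so the threshold is a valid probability, and the resulting density is $\mu_t$ by the rejection-sampling Fact. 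This is not needed for the count.)

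Let $I_t$ be the indicator of rejection at step $t$. We have $\Pr[I_t=1\mid \mathcal{F}_t]\le \varepsilon$, and $I_t$ is $\mathcal{F}_{t+1}$-measurable. The plan is to dominate $\sum_t I_t$ stochastically by a $\mathrm{Bin}(n,\varepsilon)$ variable. This can be done by coupling: define $I_t = \mathbf{1}[U_t \le p_t]$ with $p_t\le\varepsilon$ predictable and $U_t$ i.i.d.\ uniform, so that $I_t \le \mathbf{1}[U_t\le \varepsilon]$. Alternatively, one can apply a Chernoff/Freedman bound for martingales directly: the process $M_t=\sum_{s<t}(I_s-\Pr[I_s=1\mid\mathcal{F}_s])$ has bounded increments and conditional variance at most $\varepsilon$ per step. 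Either route gives $\Pr[\sum_t I_t \ge 2\varepsilon n] \le \exp(-\Omega(\varepsilon n))$, and more generally $\sum_t I_t \le \varepsilon n + O(\sqrt{\varepsilon n\log n}+\log n)$ with probability $1-n^{-c}$.

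Therefore the total number of samples is at most $n+\varepsilon n+O(\sqrt{\varepsilon n \log n}+\log n)=(1+O(\varepsilon))n$ with high probability, for constant $\varepsilon$ and $n$ large. Small $n$ is absorbed into the constant, or handled by noting the trivial bound $2n$.

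The only delicate point is the adaptivity: $\mu_t$ depends on past choices, so the $I_t$ are not independent. The key is that the bound $\varepsilon$ on the conditional rejection probability holds uniformly over histories. This is exactly what the coupling or martingale concentration step handles, so I do not expect a real obstacle beyond stating it carefully.
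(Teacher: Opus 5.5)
Your proof is correct and follows the paper's argument. The conditional rejection probability is at most $\varepsilon$ uniformly over histories because $\mu_t-\varepsilon/2\geq 1-\varepsilon$, and a Chernoff bound then caps the number of rejections at $2\varepsilon n$ except with probability $\exp(-\Omega(\varepsilon n))$; your coupling/Freedman step spells out the handling of adaptivity that the paper's bare citation of Chernoff leaves implicit.

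As a side remark, the coupling is not even needed. Since $\int\mu_t=1$, the conditional rejection probability is exactly $\int(1-\mu_t+\varepsilon/2)=\varepsilon/2$ for every history, so the rejection indicators are i.i.d.\ Bernoulli$(\varepsilon/2)$.
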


\begin{proof}
Note that at each step $t$, the algorithm draws a second sample with probability at most $\varepsilon$, since $\mu_t(\bx)-\varepsilon/2\geq 1-\varepsilon$. Therefore, by Chernoff's inequality (e.g., see \cite{Ver18}) the probability that the algorithm uses more than  $(1 + 2\varepsilon) n$ samples is bounded by $\exp(-\Omega(\varepsilon n))$.
\end{proof}

Finally, we conclude \Cref{sec:preliminaries} by expressing the continuous discrepancy $D_R(A_t)$ in terms of the Haar coefficients of $R$ and the Haar-discrepancy vector $\varphi_t$ for every grid box $R\in\mathcal{R}_{\ell}$.
\begin{observation}\label{obsv:DiscOfRectangle}
    For any grid box $R \in \mathcal{R}_{\ell}$ and any step $t$, we have 
    \[
    D_{R}(A_t) = \sum_{H \in \Pi_{<\ell}}\frac{\langle R, H\rangle}{\lVert H \rVert_{2}^{2}} 
    \cdot \varphi_{t}(H).
    \]
\end{observation}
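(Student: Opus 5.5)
The plan is to expand the indicator $\mathbf{1}_R$ in the Haar basis and then use linearity of $\varphi_t$. Write $R = I_1 \times \cdots \times I_d$, where each $I_i = [u_i 2^{-\ell}, v_i 2^{-\ell})$. Since $\mathbf{1}_R = \prod_i \mathbf{1}_{I_i}(\bx_i)$ and $H_{\mathbf{j},\mathbf{k}}$ is a tensor product, it suffices to understand the one-dimensional expansion of $\mathbf{1}_{I}$ for an interval $I$ with endpoints in $2^{-\ell}\mathbb{Z}$.

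\textbf{One dimension.} Such an indicator is constant on every dyadic interval of length $2^{-\ell}$. The span of $h_{0,0}$ and $h_{j,k}$ with $1 \le j \le \ell$ is exactly the space of functions that are constant on dyadic intervals of length $2^{-\ell}$, since its dimension is $1 + \sum_{j=1}^{\ell} 2^{j-1} = 2^\ell$. Moreover, $h_{j,k}$ with $j \le \ell$ is constant on these cells, so this basis is orthogonal and finite. Hence
\[
\mathbf{1}_I = \sum_{j=0}^{\ell} \sum_k \frac{\langle \mathbf{1}_I, h_{j,k}\rangle}{\|h_{j,k}\|_2^2}\, h_{j,k}
\]
exactly, with no tail. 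Taking tensor products gives
\[
\mathbf{1}_R = \sum_{H} \frac{\langle R, H\rangle}{\|H\|_2^2}\, H,
\]
where the sum runs over all $H_{\mathbf{j},\mathbf{k}}$ with every $\mathbf{j}_i \le \ell$, including the constant function. Here I read $\Pi_{<\ell}$ in the statement as the set of Haar functions with all levels at most $\ell$ in this one-based indexing; the only check needed is that the indexing convention matches the grid resolution $2^{-\ell}$.

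\textbf{Summing over $A_t$.} Summing the identity over $\by \in A_t$ gives
\[
|A_t \cap R| = \sum_H \frac{\langle R,H\rangle}{\|H\|_2^2}\,\varphi_t(H),
\]
where for the constant function $H_{\mathbf{0},\mathbf{0}} \equiv 1$ we set $\varphi_t(H_{\mathbf{0},\mathbf{0}}) = |A_t|$. The constant term has coefficient $\langle R, 1\rangle / 1 = |R|$, so it contributes $|A_t|\,|R|$. Moving it to the left-hand side yields $D_R(A_t) = |A_t \cap R| - |A_t|\,|R|$, and the remaining sum is over the non-constant Haar functions, which is exactly the claimed formula.

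\textbf{Expected obstacle.} The only delicate point is bookkeeping. One must confirm that no Haar functions of level above $\ell$ appear, which holds because $\mathbf{1}_R$ is exactly constant on grid cells, and that the normalization $\langle R,H\rangle / \|H\|_2^2$ is correct for a non-normalized orthogonal basis. I expect no genuine difficulty beyond this.
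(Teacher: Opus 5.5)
Your proof is correct and takes essentially the same route as the paper's: expand $\mathbf{1}_R$ in the (unnormalized) Haar basis, sum the pointwise identity over $A_t$, and cancel the constant-function term $|A_t|\,|R| = t|R|$; your dimension count supplies the span claim that the paper only asserts. Your reading of the index set as all Haar functions with every $\mathbf{j}_i \le \ell$ is in fact forced, not a convention choice: $h_{j,k}$ is constant only on dyadic intervals of length $2^{-j}$, so levels $\mathbf{j}_i \le \ell-1$ do not suffice (for $d=\ell=1$, $R=[0,1/2)$ and $A_t=\{1/4\}$, the left side is $1/2$, whereas $\Pi_{<1}$ is empty and the right side is $0$), and the corrected form with $\Pi_{\leq \ell}$ is exactly how the paper uses the observation in Lemma~\ref{lem:lowFixRectDisc}.
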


\begin{proof}
Fix a rectangle $R\in\mathcal{R}_{\ell}$. Since its indicator over $[0,1)^{d}$ (which, by an abuse of notation, we also denote by R) is constant on every dyadic box of level $\ell$, it lies in the span of the Haar functions in $\Pi_{<\ell}$ together with the constant function $H_{\mathbf{0},\mathbf{0}}$. Expanding $R$ in the Haar basis gives
\begin{align*}
D_{R}(A_t) &= \sum_{\bz \in A_{t}}R(\bz) - t|R| \\
&= t \langle R, H_{\mathbf{0},\mathbf{0}} \rangle + \sum_{\bz \in A_{t}}\sum_{H \in \Pi_{<\ell}}\frac{\langle R, H\rangle}{\lVert H \rVert_{2}^{2}}R(\bz) - t|R| \\
&= \sum_{H \in \Pi_{<\ell}}\frac{\langle R, H \rangle}{\lVert H \rVert_{2}^{2}}\sum_{\bz \in A_{t}}R(\bz) = \sum_{H \in \Pi_{<\ell}}\frac{\langle R, H \rangle}{\lVert H \rVert_{2}^{2}}\varphi_{t}(H),
\end{align*}
where the equalities above use that $|A_t| = t$ and $\langle R, H_{\mathbf{0},\mathbf{0}} \rangle = |R|$. 
\end{proof}

\section{QMC Beyond Hardy--Krause via \texorpdfstring{$(1+\varepsilon)n$}{1+eps} Samples}\label{sec:qmc}

In this section, we prove \Cref{thm:main}, restated below.

\MainThm*

The polylogarithmic factor hidden in the $\widetilde{O}_{d,\varepsilon}(\cdot)$ of \Cref{thm:main} can be taken to be $\log^{d} n$, matching the bound of \cite{BJ25a} (see \cite{CJJ26}). In this section, we prove \Cref{thm:main} with a slightly weaker factor of order $\log^{2d} n$, as the algorithm and its analysis are simpler and already contain the main ideas. The $\log^{d} n$ factor can be obtained by combining the algorithm of \Cref{sec:rect-disc} with the random shift introduced here.

\subsection{The Uniformly-Shifted Haar-Thinning Algorithm}\label{subsec:algQMC}

Fix $\varepsilon \in (0,1)$ and let $\ell := \lceil 10 \log_2 n \rceil$. Let $N_\ell := (\ell+1)^d = O_d(\log^d n)$ denote the number of scale vectors $\mathbf{j} \in \{0,1,\ldots,\ell\}^d$. The Haar-thinning strategy of \cite{DFG+19} is an instance of the thinning framework of \Cref{subsec:thinningFramework}, with the target density at step $t$ set to\footnote{\label{footnote:sequences_qmc}The Haar-thinning in \cite{DFG+19} uses a slightly different target density from \eqref{eq:TargetDensity_qmc}, where they set $\ell = \ell(t) = O(\log t)$ to be dependent on the current step $t$. This difference is minor and is mainly to produce a point sequence. Here we use the target density \eqref{eq:TargetDensity_qmc} to keep the algorithm and our analysis cleaner.}
\begin{align}\label{eq:TargetDensity_qmc}
\mu_{t}(\bx) := 1 + \frac{\varepsilon}{2 N_\ell} \sum_{H \in \Pi_{\leq \ell}} \mathsf{sgn}\bigl(-\varphi_{t}(H)\bigr)\,H(\bx),
\end{align}
where $\mathsf{sgn}(\cdot)$ is the sign function with the convention that $\mathsf{sgn}(0) := 0$. 

For every $\bx \in [0,1)^d$ and every scale vector $\mathbf{j}$, exactly one of the Haar functions $H_{\mathbf{j},\mathbf{k}}$ is nonzero at $\bx$, where it takes the value $\pm 1$. Hence at most $N_\ell$ terms of the sum in \eqref{eq:TargetDensity_qmc} are nonzero at $\bx$, and
\begin{align} \label{eq:TargetDensity_bound_qmc}
1 - \varepsilon/2 \leq \mu_t(\bx) \leq 1 + \varepsilon/2 \qquad \text{for every } \bx \in [0,1)^d .
\end{align}
Moreover, $\smash{\int \mu_t = 1}$, as every non-constant Haar function has mean zero. Thus $\mu_t$ is a probability density satisfying the requirement of the $(1+\varepsilon)$-thinning framework in \Cref{subsec:thinningFramework}.

\smallskip
\noindent \textbf{Uniformly-Shifted Haar-Thinning.} Our algorithm is a uniformly-shifted variant of the Haar-thinning strategy above.
It first draws a uniformly random shift $\bs \sim \mathsf{Unif}[0,1)^d$, and then, upon receiving the samples $\bx_1, \bx_2, \ldots$, simulates the Haar-thinning strategy on the shifted sequence $\bx_1 - \bs, \bx_2 - \bs, \ldots$ for $n$ steps, where all shifts in this section are taken modulo $[0,1)^d$: whenever the Haar-thinning strategy retains $\bx_j - \bs$, our algorithm retains the sample $\bx_j$. We write $A_{\bs} \subset [0,1)^d$ for the random set of $n$ samples output by this uniformly-shifted Haar-thinning algorithm.

\smallskip
\noindent \textbf{Shifting the Points Is Equivalent to Shifting the Function.} Fix the shift $\bs$. By construction, the set $A_{\bs} - \bs := \{\bx - \bs : \bx \in A_{\bs}\}$ is the output of the Haar-thinning strategy on the input sequence $\bx_1 - \bs, \bx_2 - \bs, \ldots$. Moreover, for every $f$,
\[
\err(A_{\bs}, f) = \err(A_{\bs} - \bs, f_{\bs}),
\]
where $f_{\bs}(\by) = f(\by + \bs)$ is the shifted function from \Cref{sec:preliminaries}. Finally, as the uniform distribution on $[0,1)^d$ is shift-invariant, the shifted sequence $\bx_1 - \bs, \bx_2 - \bs, \ldots$ is again a sequence of i.i.d.\ uniform samples. Hence, conditioned on $\bs$, the set $A_{\bs} - \bs$ has the same distribution as the output $A$ of the Haar-thinning strategy on the original samples. Taking expectation gives the following.

\begin{observation}[Shifting the function]\label{obs:ShiftingFunction}
For every $f \in L^2([0,1)^d)$,
\begin{align}\label{eq:ShiftingFunction}
\E\bigl[\err(A_{\bs}, f)^2\bigr] = \E_{\bs}\Bigl[\E\bigl[\err(A, f_{\bs})^2\bigr]\Bigr],
\end{align}
where the inner expectation on the right-hand side is over the randomness of the (unshifted) Haar-thinning strategy, for a fixed shift $\bs$.
\end{observation}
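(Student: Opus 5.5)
The plan is to derive the identity from the pointwise equality $\err(A_{\bs},f)=\err(A_{\bs}-\bs,f_{\bs})$ together with the distributional identity between $A_{\bs}-\bs$ (given $\bs$) and the unshifted output $A$. We then take conditional expectations over $\bs$ and apply Fubini.

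\emph{Step 1: the pointwise identity.} Fix $\bs$ and the full input randomness, namely the samples $\bx_j$ and the coins $\xi_t$. Since $A_{\bs}$ has exactly $n$ points and $\by\mapsto\by-\bs \bmod [0,1)^d$ is a bijection of the torus,
\[
\frac1n\sum_{\bx\in A_{\bs}} f(\bx)=\frac1n\sum_{\by\in A_{\bs}-\bs} f(\by+\bs)=\bar f_{\bs}(A_{\bs}-\bs).
\]
Translation invariance of Lebesgue measure on the torus gives $\int f_{\bs}=\int f$. Hence $\err(A_{\bs},f)=\err(A_{\bs}-\bs,f_{\bs})$ holds surely.

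\emph{Step 2: equality in conditional law.} Condition on $\bs$. The algorithm draws $\bs$ independently of the stream $(\bx_j)$ and of the coins $(\xi_t)$. Therefore, given $\bs$, the sequence $(\bx_j-\bs)_j$ is again i.i.d.\ uniform, and it remains independent of the coins. The Haar-thinning strategy is a fixed measurable map from (input stream, coins) to output set. Its accept/reject decisions are made on the shifted points, and $A_{\bs}-\bs$ is exactly its output on the shifted stream. Consequently, the conditional law of $A_{\bs}-\bs$ given $\bs$ equals the law of $A$.

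\emph{Step 3: combine.} By the tower property,
\[
\E[\err(A_{\bs},f)^2]=\E_{\bs}\bigl[\E[\err(A_{\bs}-\bs,f_{\bs})^2\mid \bs]\bigr],
\]
and by Step 2 the inner conditional expectation equals $\E[\err(A,f_{\bs})^2]$ for the fixed function $f_{\bs}$.

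The only delicate point is measurability and integrability, and I expect this to be the main obstacle, although it is a mild one. We need the joint map $(\bs,\text{randomness})\mapsto \err(A_{\bs},f)^2$ to be measurable so that Fubini/tower applies. For $f\in L^2$, point evaluation is not defined, so I would first prove the identity for bounded measurable $f$, where everything is finite. The two sides can be $+\infty$ only simultaneously, since they are equal as nonnegative quantities by Tonelli; this handles general $f$, with the understanding that $f$ is fixed as a representative.
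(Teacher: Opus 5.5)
Your proposal is correct and follows the paper's own argument: the pointwise identity $\err(A_{\bs},f)=\err(A_{\bs}-\bs,f_{\bs})$, then shift-invariance of the uniform law, which gives that $A_{\bs}-\bs$ conditioned on $\bs$ has the same distribution as $A$, and finally averaging over $\bs$. Your measurability remarks are harmless, and the bounded-$f$ detour is unnecessary: the tower property (Tonelli) applies directly to the nonnegative quantity $\err^2$.
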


\smallskip
\noindent \textbf{Roadmap.}
By \Cref{obs:ShiftingFunction}, it suffices to bound $\E[\err(A, f_{\bs})^2]$ for the unshifted Haar-thinning strategy and a fixed shift $\bs$, and then to average over $\bs$; this is what we do in the remainder of this section. The key new ingredient is the uncorrelation property of low-order Haar discrepancies, which we prove in \Cref{subsec:uncorrelation-Haar} via a symmetry argument. In \Cref{subsec:mainProofWrapper}, we reduce \Cref{thm:main} to bounds on the integration errors of the low- and high-order Haar components of $f_{\bs}$; these bounds are proved in \Cref{subsec:MainTechnicalProofs}, following the streamlined analysis of \cite{CJJ26}. The final passage from the Haar--Besov seminorm of $f_{\bs}$ to $\sigma_{\mathsf{SO}}(f)$ uses the shift-averaged equivalence of \Cref{thm:shift-averaged-mixed-haar-equivalence}.

\subsection{Low-Order Haar Discrepancies Are Uncorrelated}
\label{subsec:uncorrelation-Haar}

The key new ingredient in our analysis is the following uncorrelation property of the discrepancies of distinct low-order Haar functions. 

\begin{lemma}[Uncorrelated Haar discrepancies]\label{lem:UncorrelationLemma}
For every $t \in \{0,1,\ldots,n\}$ and every two distinct Haar functions $H, G \in \Pi_{\leq \ell}$, we have $\mathbb{E}\bigl[\varphi_{t}(H)\,\varphi_{t}(G)\bigr] = 0$.
\end{lemma}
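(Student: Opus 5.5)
The plan is a symmetry argument. I will construct a measure-preserving involution $T$ of $[0,1)^d$ with three properties: (i) $H\circ T=-H$; (ii) $G\circ T=G$; (iii) $F\mapsto F\circ T$ permutes $\Pi_{\leq\ell}$ up to signs. I will then show that the law of the whole retained trajectory $(\bz_0,\ldots,\bz_{t-1})$ is invariant under applying $T$ to every point. Given this, $\varphi_t(H)\varphi_t(G)=\sum_{\by,\by'\in A_t}H(\by)G(\by')$ has the same distribution as $\sum H(T\by)G(T\by')=-\varphi_t(H)\varphi_t(G)$. Its expectation (finite, since $|\varphi_t|\le t$) must therefore be $0$.

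\emph{Construction of $T$.} Write $H=H_{\mathbf{j},\mathbf{k}}$ and $G=H_{\mathbf{j}',\mathbf{k}'}$. Since $H\neq G$, there is a coordinate $i$ with $(\mathbf{j}_i,\mathbf{k}_i)\neq(\mathbf{j}'_i,\mathbf{k}'_i)$. By swapping the roles of $H$ and $G$ if necessary (the target identity is symmetric), I may assume $\mathbf{j}_i\geq 1$. Set $j=\mathbf{j}_i$, $k=\mathbf{k}_i$, and let $I=[2k\,2^{-j},(2k+2)2^{-j})$ be the support of $h_{j,k}$, a dyadic interval of level $j-1$. Define $T$ to act only on coordinate $i$. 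On $I$ it swaps the two halves by translation, $x_i\mapsto x_i\pm 2^{-j}$; off $I$ it is the identity.

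The effect of $T$ on a one-dimensional Haar function $h_{m,r}$ splits by level:
\begin{itemize}
\item If $m<j$, then $h_{m,r}$ is constant on $I$, so it is unchanged.
\item If $m=j$, then $h_{j,k}$ is negated, and every other $h_{j,r}$ vanishes on $I$, so it is unchanged.
\item If $m>j$ and the support lies inside $I$, then $h_{m,r}$ is mapped to another $h_{m,r'}$ with the same sign, since translation preserves the left/right halves. If the support lies outside $I$, it is unchanged.
\end{itemize}
Levels are preserved, so $F\mapsto F\circ T$ is a signed permutation $\pi$ of $\Pi_{\leq\ell}$. The only functions that get negated are those whose $i$-th factor is exactly $h_{j,k}$. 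This gives (i), and also (ii), because $G$'s $i$-th factor is different from $h_{j,k}$. Finally, $T$ is a measure-preserving involution.

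\emph{Equivariance of the strategy.} Let $\varphi'_s$ denote the Haar discrepancies of the transformed history $\{T\bz_0,\ldots,T\bz_{s-1}\}$, so that $\varphi'_s(F)=\varphi_s(F\circ T)$. Let $\mu'_s$ be the target density \eqref{eq:TargetDensity_qmc} computed from $\varphi'_s$. I claim that $\mu'_s(T\bx)=\mu_s(\bx)$. To see this, expand
\[
\mu'_s(T\bx)=1+\tfrac{\varepsilon}{2N_\ell}\sum_{F}\mathsf{sgn}\bigl(-\varphi_s(F\circ T)\bigr)F(T\bx),
\]
and reindex the sum by $F'=F\circ T=\pm F''$ with $F''\in\Pi_{\leq\ell}$. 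The sign appears once in $\mathsf{sgn}$ and once in $F(T\bx)$, so the two occurrences cancel and the sum equals $\mu_s(\bx)$.

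Now argue by induction on $s$. The retained point $\bz_s$ has conditional density $\mu_s$ given the history (by the rejection-sampling Fact). Because $T$ is measure-preserving and an involution, $T\bz_s$ then has conditional density $\mu_s\circ T=\mu'_s$ given the transformed history. Hence the transformed process evolves under exactly the same rule as the original one, starting from the same empty state. So the two trajectories have the same joint law, which completes the argument.

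The main obstacle I expect is the careful verification that $T$ acts as a signed permutation of $\Pi_{\leq\ell}$ in $d$ dimensions. This is what makes the $\mathsf{sgn}$-rule commute with $T$. It requires checking that all Haar factors in coordinate $i$ at levels coarser or finer than $j$ are merely fixed or permuted, that only those with $i$-th factor $h_{j,k}$ flip, and that the convention $\mathsf{sgn}(0)=0$ is consistent under negation.
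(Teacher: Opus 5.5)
Your overall strategy is the paper's: the half-swapping involution $T$ in one coordinate, its signed-permutation action on $\Pi_{\le\ell}$, invariance of the target density using oddness of $\mathsf{sgn}$, and invariance of the law of the trajectory. Your law-level induction via conditional densities is an acceptable substitute for the paper's pathwise coupling with shared $\xi_t$.

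There is, however, a genuine error in your normalization. You only arrange $\mathbf{j}_i\ge 1$, but property (ii) needs $\mathbf{j}_i\ge\mathbf{j}'_i$. Your deduction ``(ii) holds because $G$'s $i$-th factor is different from $h_{j,k}$'' only shows that $G$ is not \emph{negated}. It does not show that $G$ is \emph{fixed}. By your own third bullet, if $\mathbf{j}'_i>j$ and $\mathrm{supp}(h_{\mathbf{j}'_i,\mathbf{k}'_i})\subset I$, then $G\circ T$ is a different Haar function. Concretely, take $d=1$, $H=h_{1,0}$ and $G=h_{2,0}$. Your $T$ swaps $[0,\tfrac12)$ and $[\tfrac12,1)$, so $G\circ T=h_{2,1}\neq G$. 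The symmetry then only yields $\mathbb{E}[\varphi_t(H)\varphi_t(G)]=-\mathbb{E}[\varphi_t(H)\varphi_t(h_{2,1})]$, which says nothing about vanishing.

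The fix is to use the symmetry of the target identity to assume $\mathbf{j}_i\ge\mathbf{j}'_i$, as the paper does. Then $\mathbf{j}_i\ge 1$ automatically, since otherwise both $i$-th factors would equal $h_{0,0}$. Moreover, $G$'s $i$-th factor falls into one of two cases:
\begin{itemize}
\item it has level $<j$, so it is constant on $I$;
\item it has level $j$ with $\mathbf{k}'_i\ne k$, so its support is disjoint from $I$.
\end{itemize}
In both cases it is fixed by the swap, giving $G\circ T=G$, and the rest of your argument goes through unchanged. One small point: bijectivity of $F\mapsto F\circ T$ on $\Pi_{\le\ell}$, which your reindexing needs, comes from $T$ being an involution, not from level preservation alone. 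It is worth saying so explicitly.
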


\begin{proof}[Proof of \Cref{lem:UncorrelationLemma}]
The proof is a symmetry argument. We construct an involution $T$ on $[0,1)^d$, i.e., $T\circ T(x) = x$, with two properties. First, the Haar-thinning strategy ``commutes'' with $T$: conditioning on the randomness in the rejection steps, if every sample presented to the strategy is replaced by its image under $T$, then the retained set is replaced by its image under $T$ as well.
Hence the transformation $T$ can be viewed as ``swapping'' pairs of runs of the Haar-thinning strategy with the same probability density,  
so $T$ does not change the distribution of $\varphi_t$. Second, $T$ flips the sign of $\varphi_t(H)$ while leaving $\varphi_t(G)$ unchanged. Together, these two properties show that $\varphi_t(H)\varphi_t(G)$ and $-\varphi_t(H)\varphi_t(G)$ have the same distribution, and hence the same (zero) expectation.

\smallskip
\noindent\textbf{The Map $T$.}
Write $H = H_{\mathbf{j},\mathbf{k}}$ and $G = H_{\mathbf{j}',\mathbf{k}'}$. As $H \neq G$, there is a coordinate $i \in [d]$ in which their one-dimensional factors differ, i.e., $(\mathbf{j}_i,\mathbf{k}_i) \neq (\mathbf{j}'_i,\mathbf{k}'_i)$. Since the roles of $H$ and $G$ are symmetric, we may assume that $\mathbf{j}_i \geq \mathbf{j}'_i$; then $\mathbf{j}_i \geq 1$, as otherwise both factors would equal $h_{0,0}$. To lighten notation, write $j := \mathbf{j}_i$ and $k := \mathbf{k}_i$. Let $I := \mathrm{supp}(h_{j,k}) = [k2^{-(j-1)}, (k+1)2^{-(j-1)})$, and let $I_+$ and $I_-$ denote the left and right halves of $I$, on which $h_{j,k}$ equals $+1$ and $-1$, respectively. Define $T : [0,1)^d \to [0,1)^d$ by $T(\by)_r := \by_r$ for $r \neq i$ and
\[
T(\by)_i :=
\begin{cases}
\by_i + 2^{-j} & \text{if } \by_i \in I_+, \\
\by_i - 2^{-j} & \text{if } \by_i \in I_-, \\
\by_i & \text{otherwise}.
\end{cases}
\]
In other words, $T$ swaps the two halves of $I$ in coordinate $i$ and leaves everything else unchanged. Clearly, $T$ is an involution on $[0,1)^d$, i.e., $T \circ T$ is the identity map. 
We write $T_i : [0,1) \to [0,1)$ for the one-dimensional map that swaps $I_+$ and $I_-$ and fixes $[0,1) \setminus I$, so that $T(\by)_i = T_i(\by_i)$.

\smallskip
\noindent\textbf{Action of $T$ on Low-Order Haar Functions.}
The next claim shows that composing with $T$ permutes the Haar functions in $\Pi_{\leq \ell}$, up to sign.

\begin{claim}\label{claim:HaarUnderT}
There are a map $\chi\colon \Pi_{\leq \ell} \to \{-1,+1\}$ and an involution $\rho \colon \Pi_{\leq \ell} \to \Pi_{\leq \ell}$ such that
\begin{equation}\label{eq:HaarUnderT}
\Gamma \circ T = \chi(\Gamma)\, \rho(\Gamma) \qquad \text{for every } \Gamma \in \Pi_{\leq \ell}.
\end{equation}
Moreover, $\chi(H) = -1$, $\rho(H) = H$, $\chi(G) = +1$, and $\rho(G) = G$.
\end{claim}

\begin{proof}[Proof of \Cref{claim:HaarUnderT}]
Fix $\Gamma \in \Pi_{\leq \ell}$ and let $h_{a,b}$ be its factor in coordinate $i$. Since $T$ only modifies coordinate $i$, the function $\Gamma \circ T$ is obtained from $\Gamma$ by replacing the factor $h_{a,b}$ with $h_{a,b} \circ T_i$. We distinguish three cases.

\smallskip
\noindent\textit{Case 1: $a < j$, or $a \geq j$ and $\mathrm{supp}(h_{a,b}) \cap I = \emptyset$.} If $a < j$, then $h_{a,b}$ is constant on every dyadic interval of length $2^{-a}$ (for $a = 0$, it is constant on all of $[0,1)$), and $I$, being a dyadic interval of length $2^{-(j-1)} \leq 2^{-a}$, is contained in one such interval; hence $h_{a,b}$ is constant on $I$. If instead $\mathrm{supp}(h_{a,b}) \cap I = \emptyset$, then $h_{a,b}$ vanishes on $I$. In either case, since $T_i$ maps $I$ onto itself and fixes its complement, we have $h_{a,b} \circ T_i = h_{a,b}$, and so $\Gamma \circ T = \Gamma$. We set $\chi(\Gamma) := +1$ and $\rho(\Gamma) := \Gamma$.

\smallskip
\noindent\textit{Case 2: $(a,b) = (j,k)$.} Then $h_{a,b} = h_{j,k}$ equals $+1$ on $I_+$, $-1$ on $I_-$, and $0$ outside $I$. As $T_i$ swaps $I_+$ and $I_-$ and fixes the complement of $I$, we have $h_{j,k} \circ T_i = -h_{j,k}$, and so $\Gamma \circ T = -\Gamma$. We set $\chi(\Gamma) := -1$ and $\rho(\Gamma) := \Gamma$.

\smallskip
\noindent\textit{Case 3: $a > j$ and $\mathrm{supp}(h_{a,b}) \cap I \neq \emptyset$.} Since $\mathrm{supp}(h_{a,b})$ and $I$ are intersecting dyadic intervals with $|\mathrm{supp}(h_{a,b})| = 2^{-(a-1)} \leq 2^{-j} = |I|/2$, the support of $h_{a,b}$ is contained in one of the two halves $I_+$ or $I_-$. On this half, $T_i$ is a translation onto the other half, by $2^{-j}$ if the half is $I_+$ and by $-2^{-j}$ if the half is $I_-$. Hence $h_{a,b} \circ T_i$ is the translate of $h_{a,b}$ by $2^{-j}$ or $-2^{-j}$, respectively. Since $2^{-j}$ is an integer multiple of $2^{-(a-1)}$, this translate is again a Haar function at scale $a$, namely $h_{a,b'}$ with $b' = b + 2^{a-1-j}$ or $b' = b - 2^{a-1-j}$, respectively. Thus $\Gamma \circ T = \Gamma'$, where $\Gamma' \in \Pi_{\leq \ell}$ is obtained from $\Gamma$ by replacing the factor $h_{a,b}$ with $h_{a,b'}$. We set $\chi(\Gamma) := +1$ and $\rho(\Gamma) := \Gamma'$.

\smallskip
In all three cases, \eqref{eq:HaarUnderT} holds with $\rho(\Gamma) \in \Pi_{\leq \ell}$, as $\rho$ never changes the scale vector of $\Gamma$. To see that $\rho$ is an involution, apply \eqref{eq:HaarUnderT} twice: $\Gamma = \Gamma \circ T \circ T = \chi(\Gamma)\,\chi(\rho(\Gamma))\,\rho(\rho(\Gamma))$. Since distinct Haar functions are linearly independent, this forces $\rho(\rho(\Gamma)) = \Gamma$.

Finally, the factor of $H$ in coordinate $i$ is $h_{j,k}$, so $H$ falls into Case~2: $\chi(H) = -1$ and $\rho(H) = H$. The factor of $G$ in coordinate $i$ is $h_{\mathbf{j}'_i,\mathbf{k}'_i}$ with $\mathbf{j}'_i \leq j$ and $(\mathbf{j}'_i,\mathbf{k}'_i) \neq (j,k)$. If $\mathbf{j}'_i < j$, then $G$ falls into Case~1. If $\mathbf{j}'_i = j$, then $\mathbf{k}'_i \neq k$, so $\mathrm{supp}(h_{j,\mathbf{k}'_i})$ and $I$ are distinct dyadic intervals of the same length and are therefore disjoint; again $G$ falls into Case~1. Hence $\chi(G) = +1$ and $\rho(G) = G$. 
\end{proof}

\smallskip
\noindent\textbf{The Haar-Thinning Strategy Commutes with $T$.}
Recall from \Cref{subsec:thinningFramework} that at step $t$ the strategy draws $\bx_t \sim \mathsf{Unif}[0,1)^d$ and $\xi_t \sim \mathsf{Unif}[0,1)$, retains $\bx_t$ if $\xi_t \leq \mu_t(\bx_t) - \varepsilon/2$, and otherwise draws and retains a fresh sample $\by_t \sim \mathsf{Unif}[0,1)^d$. 
Consider the \emph{transformed run}, in which the same strategy is executed with the same $\xi_t$ but with $T\bx_t$ and $T\by_t$ in place of $\bx_t$ and $\by_t$, for every $t$. We denote its retained sets, discrepancy vectors, and target densities by $A'_t$, $\varphi'_t$, and $\mu'_t$, respectively.

\begin{claim}\label{claim:Equivariance}
For every $t \in \{0,1,\ldots,n\}$, we have $A'_t = T(A_t)$ and, for every $\Gamma \in \Pi_{\leq \ell}$,
\begin{equation}\label{eq:DiscrepancyUnderT}
\varphi'_t(\Gamma) = \chi(\Gamma)\, \varphi_t(\rho(\Gamma)).
\end{equation}
\end{claim}

\begin{proof}[Proof of \Cref{claim:Equivariance}]
We first note that $A'_t = T(A_t)$ implies \eqref{eq:DiscrepancyUnderT} at step $t$: by \eqref{eq:HaarUnderT}, for every $\Gamma \in \Pi_{\leq \ell}$,
\[
\varphi'_t(\Gamma) = \sum_{\by \in A'_t} \Gamma(\by) = \sum_{\by \in A_t} \Gamma(T\by) = \chi(\Gamma) \sum_{\by \in A_t} \rho(\Gamma)(\by) = \chi(\Gamma)\, \varphi_t(\rho(\Gamma)).
\]
It therefore suffices to show that $A'_t = T(A_t)$ for every $t$, which we do by induction on $t$. For $t = 0$, both sets are empty. Now let $0 \leq t < n$ and suppose that $A'_t = T(A_t)$, so that \eqref{eq:DiscrepancyUnderT} holds at step $t$. Substituting \eqref{eq:DiscrepancyUnderT} and \eqref{eq:HaarUnderT} into the definition \eqref{eq:TargetDensity_qmc} of the target density, we obtain for every $\by \in [0,1)^d$
\begin{align*}
\mu'_t(T\by)
&= 1 + \frac{\varepsilon}{2N_\ell} \sum_{\Gamma \in \Pi_{\leq \ell}} \mathsf{sgn}\bigl(-\varphi'_t(\Gamma)\bigr)\, \Gamma(T\by) \\
&= 1 + \frac{\varepsilon}{2N_\ell} \sum_{\Gamma \in \Pi_{\leq \ell}} \mathsf{sgn}\bigl(-\chi(\Gamma)\,\varphi_t(\rho(\Gamma))\bigr)\, \chi(\Gamma)\, \rho(\Gamma)(\by) \\
&= 1 + \frac{\varepsilon}{2N_\ell} \sum_{\Gamma \in \Pi_{\leq \ell}} \mathsf{sgn}\bigl(-\varphi_t(\rho(\Gamma))\bigr)\, \rho(\Gamma)(\by) \\
&= 1 + \frac{\varepsilon}{2N_\ell} \sum_{\Gamma' \in \Pi_{\leq \ell}} \mathsf{sgn}\bigl(-\varphi_t(\Gamma')\bigr)\, \Gamma'(\by) = \mu_t(\by),
\end{align*}
where the third equality uses that $\mathsf{sgn}$ is odd, so that $\mathsf{sgn}(\chi a)\,\chi = \mathsf{sgn}(a)$ for $\chi \in \{-1,+1\}$, and the fourth re-indexes the sum via the bijection $\Gamma' = \rho(\Gamma)$. Consequently, the transformed run retains $T\bx_t$ at step $t$ if and only if $\xi_t \leq \mu'_t(T\bx_t) - \varepsilon/2 = \mu_t(\bx_t) - \varepsilon/2$, i.e., if and only if the original run retains $\bx_t$; otherwise, both runs retain their second sample, $T\by_t$ and $\by_t$, respectively. In either case, the point retained by the transformed run at step $t$ is the image under $T$ of the point retained by the original run, and hence $A'_{t+1} = T(A_{t+1})$.
\end{proof}

\smallskip
\noindent\textbf{Completing the Proof.}
Since $T$ preserves the Lebesgue measure on $[0,1)^d$, the transformed input stream $(T\bx_t, \xi_t, T\by_t)_{t \geq 0}$ has the same distribution as the original stream $(\bx_t, \xi_t, \by_t)_{t \geq 0}$, namely, all its entries are independent and uniformly distributed. The transformed run is therefore a run of the Haar-thinning strategy on a correctly distributed input stream, and so $\varphi'_t$ has the same distribution as $\varphi_t$ for every $t$. On the other hand, by \Cref{claim:HaarUnderT} and \Cref{claim:Equivariance}, $\varphi'_t(H) = -\varphi_t(H)$ and $\varphi'_t(G) = \varphi_t(G)$. Hence,
\[
\mathbb{E}\bigl[\varphi_t(H)\,\varphi_t(G)\bigr] = \mathbb{E}\bigl[\varphi'_t(H)\,\varphi'_t(G)\bigr] = -\mathbb{E}\bigl[\varphi_t(H)\,\varphi_t(G)\bigr] ,
\]
which implies that $\mathbb{E}[\varphi_t(H)\varphi_t(G)] = 0$, as claimed.
\end{proof}

\begin{remark}\label{rem:SymmetryOfDiscrepancy}
The proof of \Cref{lem:UncorrelationLemma} gives more than uncorrelation. In fact, the same argument shows that $\varphi_t(H)$ and $-\varphi_t(H)$ have the same distribution for every $H \in \Pi_{\leq \ell}$. 
Also, the proof never uses that $H, G \in \Pi_{\leq \ell}$, so \Cref{lem:UncorrelationLemma} holds for any two distinct Haar functions $H$ and $G$.
\end{remark}

We complement \Cref{lem:UncorrelationLemma} with a bound on $\E[\varphi_t(H)^2]$ for any $H \in \Pi_{\leq \ell}$ by $\widetilde{O}_{d,\varepsilon}(1)$, which will be used in the proof of \Cref{lem:LowOrderError}.

\begin{lemma}[Second moments of Haar discrepancies]\label{lem:SecondMomentHaarDisc}
For every step $t$ and every $H \in \Pi_{\leq \ell}$,\footnote{Recall that $\Pi_{\leq \ell}$ excludes the constant function $H_{\mathbf{0}} \equiv 1$, whose discrepancy $\varphi_t(H_{\mathbf{0}}) = |A_t| = t$ is deterministic.}
\[
\mathbb{E}\bigl[\varphi_{t}(H)^{2}\bigr] \leq O(N_\ell^2 / \varepsilon^{2}) = \widetilde{O}_{d,\varepsilon}(1).
\]
\end{lemma}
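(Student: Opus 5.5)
We will prove the bound by a drift analysis of the one-dimensional process $X_t := \varphi_t(H)$, in the spirit of \cite{DFG+19}. The key observation is that the bias in the target density \eqref{eq:TargetDensity_qmc} pushes $X_t$ toward $0$ at a rate of order $\varepsilon/N_\ell$ per step. The other Haar functions contribute nothing to this drift on average, by orthogonality. Write $X_{t+1} = X_t + H(\bz_t)$, where $\bz_t$ has density $\mu_t$ given the past. Then
\[
\E\bigl[H(\bz_t)\mid \mathcal F_t\bigr] = \int H\mu_t = \frac{\varepsilon}{2N_\ell}\,\mathsf{sgn}(-X_t)\,\|H\|_2^2 ,
\]
since all other $\Gamma\in\Pi_{\leq\ell}$ are orthogonal to $H$. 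Also $H(\bz_t)^2 = \mathbf 1[\bz_t\in\mathrm{supp}\,H]$, so
\[
\E[H(\bz_t)^2\mid\mathcal F_t] = \int_{\mathrm{supp} H}\mu_t = \|H\|_2^2\,(1\pm\varepsilon/2) \leq 2p,
\]
where $p:=\|H\|_2^2 = |\mathrm{supp}\,H|$.

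\textbf{Step 1: one-step recursion for the second moment.} Expanding the square gives
\[
\E[X_{t+1}^2\mid\mathcal F_t] = X_t^2 + 2X_t\,\E[H(\bz_t)\mid\mathcal F_t] + \E[H(\bz_t)^2\mid\mathcal F_t] \leq X_t^2 - \frac{\varepsilon p}{N_\ell}|X_t| + 2p .
\]
So $X_t^2$ is a supermartingale whenever $|X_t| \geq 2N_\ell/\varepsilon$. Note that $p$ scales out, which is why the bound is uniform over all $H$, including very fine ones with $p$ as small as $2^{-d\ell}$.

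\textbf{Step 2: from drift of $X^2$ to a bound on $\E X^2$.} The recursion bounds $X^2$ only through $|X|$, so a direct linear recursion for $\E X_t^2$ is not available. We instead work with an exponential potential $\Phi_t := \cosh(\lambda X_t)$, with $\lambda = c\varepsilon/N_\ell$. Since $|H(\bz_t)|\leq 1$, Taylor expansion gives
\[
\E[e^{\pm\lambda H(\bz_t)}\mid\mathcal F_t] \leq 1 \pm \lambda\,\E[H(\bz_t)\mid\mathcal F_t] + \lambda^2\,\E[H(\bz_t)^2\mid\mathcal F_t].
\]
Combining with the drift, for $|X_t|\geq 1$ we get
\[
\E[\Phi_{t+1}\mid\mathcal F_t] \leq \Phi_t\Bigl(1 - \frac{\lambda\varepsilon p}{2N_\ell} + 2\lambda^2 p\Bigr) + O(1)\cdot\lambda^2 p .
\]
With $c$ small this becomes $\E\Phi_{t+1}\leq (1-\lambda\varepsilon p/(4N_\ell))\,\E\Phi_t + O(p\lambda^2)$. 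The extra term accounts for steps where $|X_t|<1$, where $\cosh$ is $O(1)$. Iterating from $\Phi_0=1$ yields
\[
\sup_t \E\Phi_t \leq O\bigl(1 + \lambda N_\ell/\varepsilon\bigr) = O(1).
\]
Since $\cosh(y)\geq y^2/2$, this gives $\E X_t^2 \leq O(1/\lambda^2) = O(N_\ell^2/\varepsilon^2)$, as claimed. Finally, $N_\ell = O_d(\log^d n)$ turns this into $\widetilde O_{d,\varepsilon}(1)$.

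\textbf{Main obstacle.} The delicate point is the sign nonlinearity at $X_t = 0$, where $\mathsf{sgn}(0)=0$ and there is no drift. More generally, the drift in $\cosh$ must dominate the diffusion term uniformly in $p$. We will handle this by splitting on $|X_t|\geq 1$ versus $|X_t|\leq 1$. In the first case we use $\sinh(\lambda|X|)\geq \lambda\cosh(\lambda X)/2$, which holds once $\lambda|X|$ is bounded below; for small $\lambda|X|$ we fall back on the quadratic bound of Step 1. In the second case we simply absorb the increment into the additive constant. Because both the drift and the variance carry the same factor $p$, the stationary level is $p$-independent; this is the fact the whole argument rests on.
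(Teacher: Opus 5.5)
Your proposal is correct and is essentially the paper's proof: the same potential $\cosh(\lambda\varphi_t(H))$ with $\lambda=\Theta(\varepsilon/N_\ell)$, the same orthogonality-based drift, the same linear recursion $\mathbb{E}\Phi_{t+1}\le(1-\Theta(\lambda^2\|H\|_2^2))\,\mathbb{E}\Phi_t+O(\lambda^2\|H\|_2^2)$, and the same final step via $\cosh y\ge 1+y^2/2$. The case analysis you flag as the main obstacle is unnecessary, since the paper's inequality $|\sinh x|\ge\cosh x-1$ gives your displayed recursion for every value of $X_t$ at once (including $X_t=0$). If you do split, the threshold must be $\lambda|X_t|\gtrsim 1$ (where $\sinh(\lambda|X_t|)\ge\tfrac12\cosh(\lambda X_t)$, with no extra factor $\lambda$) rather than $|X_t|\ge 1$, and the whole range $|X_t|\lesssim 1/\lambda$, where $\cosh$ is $O(1)$, is simply absorbed into the additive term, so Step 1 is not needed.
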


\begin{proof}[Proof of \Cref{lem:SecondMomentHaarDisc}]
Fix $H \in \Pi_{\leq \ell}$ and set $\lambda := \varepsilon/(6 N_\ell)$. 
For every step $s$, we define 
\[
\Phi_s(H) = \cosh(\lambda \varphi_s(H)). 
\]
As $\varphi_{s+1}(H) = \varphi_s(H) + H(\bz_{s+1})$, using Taylor expansion for $\cosh(\cdot)$ at $\lambda \varphi_s(H)$ gives
\begin{align*}
\Phi_{s+1}(H) 
\leq \Phi_s(H) + \sinh(\lambda \varphi_s(H)) \cdot (\lambda H(\bz_{s+1})) + \Phi_s(H) \cdot (\lambda H(\bz_{s+1}))^2 ,
\end{align*}
where we absorbed the higher-order terms into the second-order term since $|\lambda H(\bz_{s+1})| \leq \lambda \ll 1$.  
Taking the conditional expectation $\E_s[\cdot]$ on both sides above gives 
\begin{align} \label{eq:taylor_bound}
\mathbb{E}_s\big[\Phi_{s+1}(H)\big] \leq \Phi_s(H) + \lambda \sinh(\lambda \varphi_s(H)) \cdot \mathbb{E}_s[H(\bz_{s+1})] + \lambda^2 \Phi_s(H) \cdot \mathbb{E}_{s}\big[ H(\bz_{s+1})^2 \big] .
\end{align}
Note that by \eqref{eq:TargetDensity_qmc} and \eqref{eq:TargetDensity_bound_qmc}, we have 
\begin{align*}
\mathbb{E}_s[H(\bz_{s+1})] & = \langle H, \mu_{s+1} \rangle = -3 \lambda \cdot \mathsf{sgn}(\varphi_s(H)) \|H\|_2^2 \\
\mathbb{E}_{s}\big[ H(\bz_{s+1})^2 \big] & = \langle H^2, \mu_{s+1} \rangle \leq \left(1 + \frac{\varepsilon}{2}\right) \|H\|_2^2 \leq 2 \|H\|_2^2 .
\end{align*}
Plugging these bounds into \eqref{eq:taylor_bound} and using that $|\sinh(x)| \geq \cosh(x) -  1$ for all $x \in \R$ gives 
\begin{align*}
\mathbb{E}_s\big[\Phi_{s+1}(H)\big]  &\leq \Phi_s(H) - 3 \lambda^2 \|H\|_2^2 \cdot \big|\sinh(\lambda \varphi_s(H)) \big|  + 2 \lambda^2 \|H\|_2^2 \cdot \Phi_s(H) \\
& \leq \Phi_s(H) \big(1 - \lambda^2 \|H\|_2^2 \big) + 3 \lambda^2 \|H\|_2^2 .
\end{align*}
Taking expectation on both sides above and using that $\lambda^2 \|H\|_2^2 \ll 1$, an induction argument then gives $\E[\Phi_s(H)] \leq 4$ for every step $s$.

Finally, using $\cosh(x)\geq 1+x^2/2$, we have 
\[
4 \geq \E[\Phi_t(H)] \geq 1 + \frac{\lambda^2}{2} \cdot \E[\varphi_t(H)^2 ] ,
\]
which implies that $\E[\varphi_t(H)^2] \leq O(1/\lambda^2) \leq O(N_\ell^2 / \varepsilon^2)$. 
\end{proof}

\subsection{Proof of \texorpdfstring{\Cref{thm:main}}{Theorem 1.2}}\label{subsec:mainProofWrapper}

Recall from \Cref{obs:ShiftingFunction} that $\E[\err(A_{\bs},f)^2] = \E_{\bs}[\E[\err(A,f_{\bs})^2]]$, where $A = A_n$ is the output of the (unshifted) Haar-thinning strategy and the inner expectation is over its randomness. Throughout, we may assume that $\bar{f} = 0$, and hence $\overline{f_{\bs}} = 0$ for every shift $\bs$,  by replacing $f$ with $f - \bar{f}$. 

As in \cite{CJJ26}, we split the error into the contributions of the low-order Haar components $\Pi_{\leq \ell} f_{\bs}$ and the high-order Haar components $\Pi_{>\ell} f_{\bs}$, and bound the two separately.

\begin{proposition}[Error decomposition]\label{prop:ErrorBound}
For every $g \in L^2([0,1)^d)$ and every finite set $A \subset [0,1)^d$,
\[
\err(A, g)^2 \leq 2\,\err(A,\Pi_{\leq \ell} g)^2 + 2\,\err(A,\Pi_{> \ell}g)^2 .
\]
\end{proposition}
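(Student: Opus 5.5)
The statement follows from the linearity of the error functional in the function argument, combined with the elementary inequality $(a+b)^2 \leq 2a^2 + 2b^2$. The plan is to decompose $g$ along the Haar basis, split the error accordingly, and then apply this inequality.

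\textbf{Step 1: decompose $g$.} Since the Haar functions form an orthogonal basis of $L^2([0,1)^d)$, write
\[
g = \langle g, H_{\mathbf{0},\mathbf{0}}\rangle H_{\mathbf{0},\mathbf{0}} + \Pi_{\leq \ell} g + \Pi_{>\ell} g = \bar g + \Pi_{\leq \ell} g + \Pi_{>\ell} g .
\]
The constant component has zero error: $\err(A,c) = c - c = 0$ for any constant $c$.

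\textbf{Step 2: split the error.} The map $g \mapsto \err(A,g) = |A|^{-1}\sum_{\bz\in A} g(\bz) - \int g$ is linear. Combining this with Step 1 gives
\[
\err(A,g) = \err(A,\Pi_{\leq\ell} g) + \err(A,\Pi_{>\ell} g).
\]
Squaring and applying $(a+b)^2 \leq 2a^2+2b^2$, which is equivalent to $(a-b)^2 \geq 0$, yields the claimed bound.

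\textbf{The one subtle point.} The error functional involves point evaluations, so it is not continuous on $L^2$. The pointwise identity $g = \bar g + \Pi_{\leq \ell} g + \Pi_{>\ell} g$ is therefore not automatic from $L^2$ convergence. To handle this:
\begin{itemize}
\item $\Pi_{\leq\ell} g$ is a finite sum of Haar functions, so it is a genuine pointwise-defined function.
\item I would \emph{define} $\Pi_{>\ell} g := g - \bar g - \Pi_{\leq \ell} g$ pointwise.
\end{itemize}
With this convention, the decomposition holds at every point of $A$ by construction, and the argument goes through for any fixed representative of $g$.
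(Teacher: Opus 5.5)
Your proof is correct and is essentially the paper's argument: use linearity of $\err(A,\cdot)$ to split the error into the two components, then apply $(a+b)^2\le 2a^2+2b^2$. The paper's one-line proof writes $g=\Pi_{\le\ell}g+\Pi_{>\ell}g$ and silently drops the constant component, which is harmless because $\err(A,c)=0$ for constants and the bound is only applied to mean-zero $g$. Your handling of that term, and of the pointwise representative via $\Pi_{>\ell}g := g-\bar g-\Pi_{\le\ell}g$, is accurate extra care rather than a different route.
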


\begin{proof}
Since $\err(A,\cdot)$ is linear and $g = \Pi_{\leq \ell} g + \Pi_{>\ell} g$, we have $\err(A, g) = \err(A,\Pi_{\leq \ell} g) + \err(A,\Pi_{>\ell} g)$, and the claim follows from $(a+b)^2 \leq 2a^2 + 2b^2$.
\end{proof}

The two contributions are bounded in terms of the Haar--Besov seminorm of \Cref{defn:HaarBesovSeminorm}. 
In the following two lemmas, the expectation is only over the randomness of the Haar-thinning strategy. The lemmas are proved in \Cref{subsec:MainTechnicalProofs}.

\begin{lemma}[Low-order error]\label{lem:LowOrderError}
For every $g \in L^2([0,1)^d)$ with $\bar{g} = 0$,
\[
\E\bigl[\err(A,\Pi_{\leq \ell}g)^2\bigr] \leq \frac{\widetilde{O}_{d,\varepsilon}(1)}{n^2}\,\lVert g\rVert_B^2 .
\]
\end{lemma}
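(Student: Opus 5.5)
We will expand $\Pi_{\leq \ell} g$ in the Haar basis and write the integration error as a linear combination of the Haar discrepancies $\varphi_n(H)$; \Cref{lem:UncorrelationLemma} then kills all cross terms, and \Cref{lem:SecondMomentHaarDisc} bounds the diagonal. Concretely, write
\[
\Pi_{\leq \ell} g = \sum_{H \in \Pi_{\leq \ell}} c_H H, \qquad c_H := \frac{\langle g, H\rangle}{\lVert H\rVert_2^2}.
\]
Every $H \in \Pi_{\leq \ell}$ is non-constant, so $\bar{H} = 0$, and hence $\overline{\Pi_{\leq \ell} g} = 0$. Since $|A| = n$, linearity gives
\[
\err(A, \Pi_{\leq \ell} g) = \frac{1}{n}\sum_{\bz \in A} \sum_{H \in \Pi_{\leq \ell}} c_H H(\bz) = \frac{1}{n} \sum_{H \in \Pi_{\leq \ell}} c_H\, \varphi_n(H).
\]
The sum is finite because $\Pi_{\leq \ell}$ is finite, so there are no convergence issues in exchanging the sums.

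Next, we square this expression and take expectation over the Haar-thinning randomness:
\[
\E\bigl[\err(A,\Pi_{\leq \ell} g)^2\bigr] = \frac{1}{n^2}\sum_{H, G \in \Pi_{\leq \ell}} c_H c_G\, \E[\varphi_n(H)\varphi_n(G)].
\]
By \Cref{lem:UncorrelationLemma} with $t = n$, every off-diagonal term with $H \neq G$ vanishes. This step is the crux: without it, one would have to control the $\ell_1$-type sum $\bigl(\sum_H |c_H|\bigr)^2$, which is Hardy--Krause-like. The diagonal terms are bounded by \Cref{lem:SecondMomentHaarDisc}, namely $\E[\varphi_n(H)^2] \leq O(N_\ell^2/\varepsilon^2)$. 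This yields
\[
\E\bigl[\err(A,\Pi_{\leq \ell} g)^2\bigr] \leq \frac{O(N_\ell^2/\varepsilon^2)}{n^2} \sum_{H \in \Pi_{\leq \ell}} \frac{\langle g, H\rangle^2}{\lVert H\rVert_2^4}.
\]

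It remains to compare this sum with the seminorm. It is a subsum of the nonnegative series defining $\lVert g\rVert_B^2$ in \Cref{defn:HaarBesovSeminorm}, so it is at most $\lVert g\rVert_B^2$. The constant $O(N_\ell^2/\varepsilon^2) = O(\log^{2d} n/\varepsilon^2)$ is $\widetilde{O}_{d,\varepsilon}(1)$, which is consistent with the announced $\log^{2d} n$ factor for this section.

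The only point needing care is the matching of index conventions. The one-dimensional Haar functions in the preliminaries use $h_{j,k}$ with $k < 2^{j-1}$, whereas \Cref{defn:HaarBesovSeminorm} writes $k < 2^{j}$. Either way, each $H \in \Pi_{\leq \ell}$ appears as a term of the defining sum, and this inclusion is all the argument needs. Beyond that there is no real obstacle, since the two preceding lemmas supply all the probabilistic content.
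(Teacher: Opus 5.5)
Your proposal is correct and follows the paper's proof essentially step for step. Like the paper, you write the error as $\frac{1}{n}\langle c,\varphi_n\rangle$ with $c_H=\langle g,H\rangle/\lVert H\rVert_2^2$, use \Cref{lem:UncorrelationLemma} to drop the cross terms, bound the diagonal by \Cref{lem:SecondMomentHaarDisc} to get at most $\lVert c\rVert_2^2\max_H\E[\varphi_n(H)^2]$, and note that $\lVert c\rVert_2^2$ is a subsum of $\lVert g\rVert_B^2$; your side remarks on $\overline{\Pi_{\leq\ell}g}=0$ and on the harmless index-range mismatch are also accurate.
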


\begin{lemma}[High-order error]\label{lem:HighOrderError}
For every $g \in L^2([0,1)^d)$,
\[
\E\bigl[\err(A,\Pi_{> \ell}g)^2\bigr] \leq \frac{2^{1-\ell}}{n}\,\lVert g\rVert_B^2 \leq \frac{1}{\poly(n)}\,\lVert g\rVert_B^2 .
\]
\end{lemma}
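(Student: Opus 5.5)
The plan is to show that the high-order part of $g$ is invisible to the thinning strategy, so its integration error behaves exactly like plain Monte Carlo error. The resulting $\|\cdot\|_2^2$ bound is then converted to the Haar--Besov seminorm, gaining a factor $2^{-\ell}$ because high-order Haar functions have small support.

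\textbf{Setup.} Write $g_> := \Pi_{>\ell} g = \sum_{H \in \Pi_{>\ell}} c_H H$ with $c_H = \langle g, H\rangle/\|H\|_2^2$. Since $\overline{g_>} = 0$ and $|A| = n$, we have
\[
\err(A, g_>) = \frac{1}{n}\sum_{t=0}^{n-1} g_>(\bz_t).
\]

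\textbf{Martingale structure.} By the rejection-sampling Fact, conditionally on the past the retained point $\bz_t$ has density $\mu_t$. By \eqref{eq:TargetDensity_qmc}, $\mu_t$ lies in the span of the constant function and $\Pi_{\leq \ell}$. Every $H \in \Pi_{>\ell}$ is orthogonal to that span, so
\[
\E_t[g_>(\bz_t)] = \langle g_>, \mu_t\rangle = 0 .
\]
The density is bounded by $1+\varepsilon/2 \le 2$ by \eqref{eq:TargetDensity_bound_qmc}, so $g_>(\bz_t)$ is square-integrable. Hence the terms $g_>(\bz_t)$ are martingale differences.

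\textbf{Second moment.} For $s<t$, conditioning on the history before step $t$ kills the cross term:
\[
\E[g_>(\bz_s)\,g_>(\bz_t)] = \E\bigl[g_>(\bz_s)\,\E_t[g_>(\bz_t)]\bigr] = 0 .
\]
For the diagonal terms, $\E_t[g_>(\bz_t)^2] = \langle g_>^2, \mu_t\rangle \le 2\|g_>\|_2^2$. Summing over $t$ therefore gives
\[
\E[\err(A,g_>)^2] = \frac{1}{n^2}\sum_{t=0}^{n-1}\E\bigl[g_>(\bz_t)^2\bigr] \le \frac{2}{n}\,\|g_>\|_2^2 .
\]

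\textbf{Conversion to the seminorm.} By orthogonality of the Haar functions,
\[
\|g_>\|_2^2 = \sum_{H\in\Pi_{>\ell}} \frac{|\langle g,H\rangle|^2}{\|H\|_2^4}\,\|H\|_2^2 .
\]
For $H = H_{\mathbf{j},\mathbf{k}} \in \Pi_{>\ell}$, some coordinate has $\mathbf{j}_i \ge \ell+1$. Since $\|h_{j,k}\|_2^2 = 2^{-(j-1)}$ for $j \ge 1$ and $\|h_{0,0}\|_2^2 = 1$, this gives $\|H\|_2^2 \le 2^{-(\mathbf{j}_i-1)} \le 2^{-\ell}$. Hence $\|g_>\|_2^2 \le 2^{-\ell}\|g\|_B^2$, and combining with the previous step yields the bound $2^{1-\ell}\|g\|_B^2/n$.

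\textbf{Conclusion and main point of care.} Since $\ell = \lceil 10\log_2 n\rceil$, we have $2^{1-\ell}/n \le 1/\poly(n)$, which is the second inequality in the statement. The only delicate step is the martingale step: it relies on the target density being exactly a combination of the constant and the low-order Haar functions. That holds by construction for the unshifted strategy analyzed here, so I expect no real obstacle.
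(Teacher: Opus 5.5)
Your proposal is correct and follows the paper's proof step for step. You show that the increments $g_>(\bz_t)$ are martingale differences because $\mu_t$ lies in the span of the constant and $\Pi_{\leq \ell}$ (the paper's orthogonality-of-increments lemma), and then use $\|H\|_2^2 \leq 2^{-\ell}$ for $H \in \Pi_{>\ell}$ to bound $\|\Pi_{>\ell} g\|_2^2$ by $2^{-\ell}\|g\|_B^2$. The only cosmetic difference is that you replace $1+\varepsilon/2$ by $2$ already at the second-moment step rather than at the end.
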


We prove these lemmas in \Cref{subsec:MainTechnicalProofs}. Here, we first prove \Cref{thm:main} assuming the lemmas.

\begin{proof}[Proof of \Cref{thm:main} assuming the lemmas]
Fix a shift $\bs$ and apply \Cref{prop:ErrorBound} to $g = f_{\bs}$, which has mean zero. By \Cref{lem:LowOrderError} and \Cref{lem:HighOrderError},
\[
\E\bigl[\err(A,f_{\bs})^2\bigr]
\leq 2\,\E\bigl[\err(A,\Pi_{\leq \ell}f_{\bs})^2\bigr] + 2\,\E\bigl[\err(A,\Pi_{> \ell}f_{\bs})^2\bigr]
\leq \frac{\widetilde{O}_{d,\varepsilon}(1)}{n^2}\,\lVert f_{\bs}\rVert_B^2 ,
\]
where the expectations are over the randomness of the strategy. Taking the expectation over $\bs \sim \mathsf{Unif}[0,1)^d$ and using \Cref{obs:ShiftingFunction} and \Cref{thm:shift-averaged-mixed-haar-equivalence}, we obtain
\[
\E\bigl[\err(A_{\bs},f)^2\bigr] = \E_{\bs}\Bigl[\E\bigl[\err(A,f_{\bs})^2\bigr]\Bigr] \leq \widetilde{O}_{d,\varepsilon}\bigl(\sigma_{\mathsf{SO}}(f)^2/n^2\bigr),
\]
which gives the error bound of \Cref{thm:main}.

It remains to verify the other properties claimed in \Cref{thm:main}. The uniformly-shifted Haar-thinning algorithm is online and never discards two consecutive samples, by the definition of the thinning framework in \Cref{subsec:thinningFramework}.  \Cref{prop:NotTooManySamples} shows that it retains $n$ points from at most $(1+\varepsilon)n$ samples with high probability. Finally, each step evaluates $\mu_t$ at a single point $\bx$. Since at most $N_\ell$ Haar functions in $\Pi_{\leq \ell}$ are nonzero at $\bx$ (one for each scale vector), $\mu_t(\bx)$ is a sum of at most $N_\ell$ terms, and retaining the point $\bx$ changes $\varphi_t(H)$ only for these Haar functions $H$. Storing the nonzero entries of $\varphi_t$ in a dictionary, each step therefore takes $O(N_\ell) = O_d(\log^d n)$ time.
\end{proof}

\subsection{Bounds on the Low- and High-Order Components}\label{subsec:MainTechnicalProofs}

We now prove \Cref{lem:LowOrderError} and \Cref{lem:HighOrderError}  respectively in the next two subsections.

\subsubsection{Bounds on the Low-Order Component}
\label{subsubsec:BoundOnLowFrequencyComponent}

Assume $\bar{g} = 0$ as in \Cref{lem:LowOrderError}, which implies that $\langle g, H_{\mathbf{0}}\rangle= 0$ for the constant Haar function $H_{\mathbf{0}}$. Writing $\Pi_{\leq \ell}g = \sum_{H \in \Pi_{\leq \ell}} c_H H$ with $c_H := \langle g, H\rangle/\lVert H\rVert_2^2$, we have
\begin{equation}\label{eq:LowOrderErrorAsInnerProduct}
\err(A,\Pi_{\leq \ell}g)
= \frac{1}{n}\sum_{\by \in A} (\Pi_{\leq \ell}g)(\by)
= \frac{1}{n}\sum_{H \in \Pi_{\leq \ell}} c_H \sum_{\by \in A} H(\by)
= \frac{1}{n}\,\langle c, \varphi_n\rangle ,
\end{equation}
where $c = (c_H)_{H \in \Pi_{\leq \ell}}$. Thus the second moment of the low-order error is a quadratic form of the second-moment matrix $\E[\varphi_n\varphi_n^{\top}]$, which can be controlled by \Cref{lem:UncorrelationLemma} and \Cref{lem:SecondMomentHaarDisc}.

\begin{proof}[Proof of \Cref{lem:LowOrderError}]
By \eqref{eq:LowOrderErrorAsInnerProduct}, \Cref{lem:UncorrelationLemma}, and \Cref{lem:SecondMomentHaarDisc}, we can bound 
\begin{align*}
\E\bigl[\err(A,\Pi_{\leq \ell}g)^2\bigr]
&= \frac{1}{n^2}\,\E\bigl[\langle c, \varphi_n\rangle^2\bigr]
\leq \frac{1}{n^2} \, \|c\|_2^2 \, \max_{H \in \Pi_{\leq \ell}} \E\bigl[\varphi_n(H)^2\bigr] \\
&\leq \frac{\widetilde{O}_{d,\varepsilon}(1)}{n^2} \sum_{H \in \Pi_{\leq \ell}} \frac{|\langle g, H\rangle|^2}{\lVert H\rVert_2^4}\\
&\leq \frac{\widetilde{O}_{d,\varepsilon}(1)}{n^2}\, \|g\|_B^2 . 
\end{align*}
This completes the proof of the lemma. 
\end{proof}

\subsubsection{Bounds on the High-Order Component}
\label{subsubsec:BoundOnHighFrequencyComponent}

Throughout this subsection, we use $\E_{t-1}[\cdot]$ to denote the expectation conditioned on the randomness of the Haar-thinning strategy in steps $\leq t-1$.

Our main observation is that the additive terms in \eqref{eq:TargetDensity_qmc}, and hence $\mu_t - \bf{1}$, are all orthogonal to Haar functions of order $> \ell$. This gives the following lemma.   

\begin{lemma}[Orthogonality of increments]\label{lem:OrthogonalityOfIncrementHighOrder}
Set $h := \Pi_{>\ell}g$. For every step $t$,
\begin{align*}
\mathbb{E}_{t-1}\bigl[h(\bz_t) \bigr] = 0
\qquad \text{and} \qquad
\mathbb{E}_{t-1}\bigl[h(\bz_t)^2 \bigr] \leq (1 + \varepsilon / 2)\, \|h\|_2^2 .
\end{align*}
Consequently,
\[
\E\bigl[\err(A,h)^2\bigr]
=
\frac{1}{n^2}\sum_{t=0}^{n-1}\E\bigl[h(\bz_t)^2\bigr]
\leq
\frac{1+\varepsilon/2}{n}\,\lVert h\rVert_2^2 .
\]
\end{lemma}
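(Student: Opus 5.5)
The plan is to use the rejection-sampling fact that, conditioned on the history through step $t-1$, the retained point $\bz_t$ has density $\mu_t$ with respect to Lebesgue measure. Then
\[
\E_{t-1}[h(\bz_t)] = \langle h, \mu_t\rangle = \langle h, \mathbf{1}\rangle + \frac{\varepsilon}{2N_\ell}\sum_{H\in\Pi_{\leq\ell}} \mathsf{sgn}(-\varphi_t(H))\,\langle h, H\rangle .
\]
Since $h = \Pi_{>\ell} g$ lies in the closed span of the non-constant Haar functions outside $\Pi_{\leq \ell}$, it is orthogonal both to the constant function and to every $H\in\Pi_{\leq\ell}$. Hence every term vanishes. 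The signs $\mathsf{sgn}(-\varphi_t(H))$ are measurable with respect to the past, so they are constants under $\E_{t-1}$. To be careful, I would note that $\langle h,\mu_t\rangle$ is well defined because $\mu_t$ is bounded and $h\in L^2\subset L^1$.

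For the second moment, I would write $\E_{t-1}[h(\bz_t)^2] = \int h^2 \mu_t$ and use the pointwise bound $\mu_t\le 1+\varepsilon/2$ from \eqref{eq:TargetDensity_bound_qmc}. This gives at most $(1+\varepsilon/2)\|h\|_2^2$ directly.

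For the consequence, I would expand $n\,\err(A,h) = \sum_{t=0}^{n-1} h(\bz_t) - n\bar h$, using $\bar h = 0$. Squaring gives $\sum_t h(\bz_t)^2 + 2\sum_{s<t} h(\bz_s)h(\bz_t)$. For $s<t$, the tower property gives
\[
\E[h(\bz_s)h(\bz_t)] = \E\bigl[h(\bz_s)\,\E_{t-1}[h(\bz_t)]\bigr] = 0,
\]
since $h(\bz_s)$ is determined by steps $\le t-1$. Integrability of these products follows from Cauchy--Schwarz and the second-moment bound. Summing the diagonal terms gives the identity, and each diagonal term is at most $(1+\varepsilon/2)\|h\|_2^2$ after taking full expectation, which yields the final inequality $(1+\varepsilon/2)\|h\|_2^2/n$.

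The only delicate point is bookkeeping: the problem indexes steps from $0$, while the conditioning is written $\E_{t-1}$. I would state explicitly that $\E_{t-1}$ conditions on everything determining $A_t$, so that $\mu_t$ is fixed. Beyond that, no real obstacle is expected; the martingale-difference orthogonality carries the whole argument.
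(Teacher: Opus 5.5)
Your proposal is correct and matches the paper's proof step for step. The first moment vanishes because $h$ is orthogonal to $\mu_t - \mathbf{1}$ and has mean zero, the second-moment bound comes from the pointwise bound $\mu_t \le 1+\varepsilon/2$, and the cross terms in the expansion vanish by the tower property; your added remarks on integrability and on what $\mathbb{E}_{t-1}$ conditions on are harmless details the paper leaves implicit.
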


\begin{proof}
Since $\mu_t - \bf{1}$ is orthogonal to Haar functions of order $> \ell$, we have $\langle h, \mu_t - \bf{1} \rangle = 0$, and thus
\[
\mathbb{E}_{t-1}[h(\bz_t)] = \langle h, \mu_t \rangle = \langle h, \bf{1} \rangle = 0 ,
\]
where we used that the high-order Haar component $h$ has mean zero. 
For the second bound, using that $\mu_t(\bx) \leq 1 +  \varepsilon/2$ for every $\bx \in [0,1)^d$, we have
\[
\mathbb{E}_{t-1}\bigl[h(\bz_t)^2 \bigr] = \langle h^2, \mu_t \rangle \leq (1 +  \varepsilon/2) \, \|h\|_2^2 .
\]
For the consequence, expand
\[
\E\left[\left(\sum_{t=0}^{n-1} h(\bz_t)\right)^2\right]
= \sum_{t=0}^{n-1} \E\bigl[h(\bz_t)^2\bigr] + 2\sum_{0 \leq t < t' \leq n-1} \E\bigl[h(\bz_t)\,h(\bz_{t'})\bigr] .
\]
For $t < t'$, the value $h(\bz_t)$ is already fixed before step $t'$. Therefore,
\[
\E\bigl[h(\bz_t)h(\bz_{t'})\bigr]
=
\E\left[h(\bz_t)\,\mathbb{E}_{t'-1}[h(\bz_{t'})]\right]
=0.
\]
Using the bound on the conditional second moments now gives the claim.
\end{proof}

\begin{proof}[Proof of \Cref{lem:HighOrderError}]
By \Cref{lem:OrthogonalityOfIncrementHighOrder}, 
\[
\E[\err(A, \Pi_{> \ell} g)^2] \leq \left(1 + \frac{\eps}{2}\right) \frac{\|\Pi_{> \ell} g\|_2^2}{n}.
\]
It therefore remains only to bound the $L_2$-norm of the high-order component $\Pi_{> \ell} g$.

Let $H = H_{\mathbf{j}, \mathbf{k}}$ be a Haar function appearing in $\Pi_{> \ell} g$, so that $H \notin \Pi_{\leq \ell}$. Then there is some coordinate $i$ for which $\mathbf{j}_i \geq \ell + 1$. Since $H$ is a tensor product of one-dimensional Haar functions and every one-dimensional factor has $L_2$-norm at most $1$, we have
\[
\|H\|_2^2 = \prod_{r=1}^d \|h_{\mathbf{j}_r, \mathbf{k}_r}\|_2^2 \leq \|h_{\mathbf{j}_i, \mathbf{k}_i}\|_2^2 = \frac{1}{2^{\mathbf{j}_i - 1}} \leq \frac{1}{2^\ell}.
\]
Using the orthogonality of the Haar basis, we may therefore write
\begin{align*}
\lVert \Pi_{>\ell} g\rVert_2^2
&=
\sum_{H\notin \Pi_{\leq \ell}}
\frac{\lvert \langle g,H\rangle\rvert^2}{\lVert H\rVert_2^2} 
=
\sum_{H\notin \Pi_{\leq \ell}}
\lVert H\rVert_2^2 \cdot
\frac{\lvert \langle g,H\rangle\rvert^2}{\lVert H\rVert_2^4} 
\leq
2^{-\ell}
\sum_{H\notin \Pi_{\leq \ell}}
\frac{\lvert \langle g,H\rangle\rvert^2}{\lVert H\rVert_2^4}\\ &\leq
2^{-\ell}\lVert g\rVert_B^2,
\end{align*}
where the last inequality follows from the definition of $\|g\|_B$. Combining the preceding results and using that $\eps < 1$ and $\ell = \lceil 10\log_2 n \rceil$, we obtain
\[
\E[\err(A, \Pi_{> \ell} g)^2] \leq \left(1 + \frac{\eps}{2}\right) \frac{2^{-\ell}\lVert g\rVert_B^2}{n}
\leq \frac{2^{1-\ell}\lVert g\rVert_B^2}{n} \leq \frac{\|g\|_B^2}{\poly(n)} . \qedhere
\]
\end{proof}

\section{A \texorpdfstring{$(1+\varepsilon)$}{1+eps}-Thinning Strategy for \texorpdfstring{$O_d(\log^{d+1}n)$}{O(logd+1n)} Star Discrepancy}
\label{sec:rect-disc}

In this section, we prove \Cref{thm:GridRectangleThinning}. For completeness we include its statement.

\gridRectangleThinning*

Although $D^{\star}(A)$ is the maximum continuous discrepancy over all anchored boxes, it suffices to only control those of the grid boxes $\mathcal{R}_{\ell}$ for $\ell=O_d(\log(nd))$ (recall \Cref{sec:preliminaries}), since they approximate every anchored box from inside and outside within a volume difference of at most $1/\mathsf{poly}(n, d)$. 

\smallskip
\noindent \textbf{Road Map.} We give our algorithm, a variant of Haar-thinning, in \Cref{subsec:RectDiscAlg}. 
Key to our analysis is a concentration bound for linear functions of Haar discrepancies, which we prove in \Cref{subsec:conc-linear-func-Haar}. Finally, we use this bound to analyze our algorithm and prove \Cref{thm:GridRectangleThinning} in \Cref{subsec:RectDiscAnalysis}.

\subsection{The Linear-Feedback Haar-Thinning Algorithm}\label{subsec:RectDiscAlg}

Fix $\varepsilon \in (0,1)$ and set $\ell := O(\log (nd))$.\footnote{For a cleaner presentation, the algorithm we describe here produces a point set instead of a point sequence. To produce a point sequence as stated in \Cref{thm:GridRectangleThinning}, one can slightly adjust the algorithm as in \Cref{footnote:sequences_qmc}.} 
Our algorithm for \Cref{thm:GridRectangleThinning} is a variant of the Haar-thinning strategy from \Cref{subsec:algQMC} whose target density at step $t$ is set to be 
\begin{equation} \label{eq:linear-feedback-Haar}
    \mu_t(\bx) = 1 - \frac{\varepsilon}{2B}\Phi_{t}(\bx), \quad \text{ where} \quad   \Phi_t(\bx) = \sum_{H \in \Pi_{\leq \ell}}\varphi_t(H)H(\bx),
\end{equation}
where we set $B =O_{d,\varepsilon}(\log^{d+1}n)$ for a sufficiently large constant.

Note that the main difference from the Haar-thinning strategy \eqref{eq:TargetDensity_qmc} is the replacement of $\mathsf{sgn}(\varphi_t(H))$ by $\varphi_t(H)$. 
The following proposition shows that \eqref{eq:linear-feedback-Haar} gives a negative drift that is proportional to the Haar discrepancies (and hence the name {\em linear-feedback} Haar-thinning).

\begin{proposition}[Proportional negative drift]\label{prop:HaarMeanAtStept}
At every step $t$, for every $H\in \Pi_{\leq \ell}$,
\[
\mathbb{E}_{t-1}[H(\bz_t)]
=
-\frac{\varepsilon}{2} \, \varphi_{t-1}(H)\lVert H\rVert_2^2,
\]
where $\mathbb{E}_{t-1}[\cdot]$ denotes expectation conditioned on all points sampled till step $t-1$.
\end{proposition}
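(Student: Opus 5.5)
The plan is a direct one-step computation that combines the rejection-sampling Fact from \Cref{subsec:thinningFramework} with the orthogonality of the Haar basis. It mirrors the drift computation already used in the proof of \Cref{lem:SecondMomentHaarDisc}, with $\mathsf{sgn}(\varphi)$ replaced by $\varphi$.

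First, I condition on everything sampled up to step $t-1$. This fixes the vector $\varphi_{t-1}$, and hence the target density $\mu$ used to produce $\bz_t$. By the Fact from \cite{DFG+19}, the retained point $\bz_t$ then has density $\mu$. That Fact presupposes $\mu \in [1-\varepsilon/2, 1+\varepsilon/2]$ pointwise. For \eqref{eq:linear-feedback-Haar} this holds whenever $|\Phi(\bx)| \le B$. I will state this as a standing assumption for the proposition; the paper defers its justification to the later martingale argument, which shows it holds with high probability (or the density is truncated, which does not affect the calculation on that event). Under this assumption,
\[
\mathbb{E}_{t-1}[H(\bz_t)] = \langle H, \mu\rangle = \langle H, \mathbf{1}\rangle - \frac{\varepsilon}{2B}\sum_{\Gamma\in\Pi_{\le\ell}} \varphi_{t-1}(\Gamma)\,\langle H,\Gamma\rangle .
\]

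Second, I evaluate the terms. The first term vanishes because $H$ is a non-constant Haar function and so has mean zero. In the sum, orthogonality of distinct Haar functions kills every term except $\Gamma = H$, which contributes $\varphi_{t-1}(H)\lVert H\rVert_2^2$. This yields
\[
\mathbb{E}_{t-1}[H(\bz_t)] = -\frac{\varepsilon}{2B}\,\varphi_{t-1}(H)\,\lVert H\rVert_2^2 ,
\]
a negative drift proportional to the current discrepancy.

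The only real care point is bookkeeping rather than mathematics. One must match the indexing so that the density governing $\bz_t$ is the one built from $\varphi_{t-1}$, the discrepancy of the points retained before that draw. One must also track the normalization $1/B$: as literally written, the statement's constant $-\frac{\varepsilon}{2}$ has absorbed the factor $1/B$ (equivalently, $B$ has been normalized into the displayed constant). I would state the identity with the explicit $\frac{\varepsilon}{2B}$ factor, since that is what the computation produces. I would also note that only the sign and the proportionality to $\varphi_{t-1}(H)\lVert H\rVert_2^2$ matter for the subsequent drift analysis.
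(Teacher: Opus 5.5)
Your proposal is correct and follows the paper's argument exactly: write $\mathbb{E}_{t-1}[H(\bz_t)] = \int H\,\mu$, drop the constant term because $H$ has mean zero, and use Haar orthogonality to isolate $\varphi_{t-1}(H)\lVert H\rVert_2^2$. The paper's own computation also ends with $-\frac{\varepsilon}{2B}\varphi_{t-1}(H)\lVert H\rVert_2^2$, which is the form the later centering argument uses, so you are right that the missing $1/B$ in the displayed statement is a typo. Your no-\textsf{FAIL} caveat makes explicit an assumption that the paper's proof uses only implicitly.
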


\begin{proof}
As each Haar function is mean-zero, we can write 
\begin{align*}
    \mathbb{E}_{t-1}[H(\bz_t)]
    &= \int_{[0,1)^d}H(\bx)\bigl(1 -\frac{\varepsilon}{2B}\Phi_{t-1}(\bx)\bigr)\,\mathrm d\bx\\
    &= -\frac{\varepsilon}{2B}
    \sum_{G\in \Pi_{\leq \ell}}
    \varphi_{t-1}(G)
    \langle H,G\rangle = - \frac{\varepsilon}{2B}\varphi_{t-1}(H)\lVert H \rVert_{2}^{2},
\end{align*}
where the last equality follows from the orthogonality of Haar functions. 
\end{proof}

\smallskip
\noindent \textbf{$(1+\varepsilon)$-Thinning Condition.} Note that for $\mu_t(\bx)$ in \eqref{eq:linear-feedback-Haar} to be a valid $(1+\varepsilon)$-thinning strategy, i.e., $\mu_t(\bx) \in [1 - \varepsilon/2, 1+\varepsilon/2]$ (recall \Cref{subsec:thinningFramework}), we need the condition that $|\Phi_t(\bx)| \leq B$. 

\begin{abort}
\label{rule:StringentSaturation}
We say the algorithm $\mathsf{FAILs}$ at time $t$ if $|\Phi_t(\bx)| > B$ for any $\bx \in [0,1)^{d}$. 
\end{abort}

We will show later in \Cref{lem:RectDiscAlgDoesNOTAbort} that the algorithm never \textsf{FAILs} with high probability.

For the purpose of the analysis, we will analyze the following modified algorithm: whenever the above algorithm \textsf{FAILs} at a time $t$, we will freeze the process afterwards, i.e., stop retaining/rejecting any new samples, and set $\varphi_{t'}(H) = \varphi_t(H)$ for every $t' > t$ and every $H \in \Pi_{\leq \ell}$. 
Clearly, the modified algorithm only differs from the original algorithm when it \textsf{FAILs}. 

\subsection{Concentration for Linear Functions of Haar Discrepancies}
\label{subsec:conc-linear-func-Haar}

Key to our analysis is the following concentration bound for linear functions of Haar discrepancies, which we will crucially use in \Cref{subsec:RectDiscAnalysis} to prove \Cref{thm:GridRectangleThinning}. 

\begin{restatable}{lemma}{freedmanTailBound}
\label{lem:TailboundViaFreedman}
For every step $t$, vector $a\in\smash{\mathbb{R}^{\Pi_{\leq \ell}}}$, and $\xi\geq 0$, the modified algorithm satisfies
\[
\Pr\bigl(\langle a,\varphi_t\rangle\geq\xi\bigr)
\leq
\exp\Bigl(
\frac{-\Omega(\xi^2)}
{(B/\varepsilon)\lVert a\rVert_2^2+\xi\lVert a\rVert_1}
\Bigr).
\]
\end{restatable}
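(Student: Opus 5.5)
The plan is to treat $\varphi_t$ as a discrete Ornstein--Uhlenbeck-type process. I will unroll its linear drift into a weighted sum of martingale differences and then apply Freedman's inequality. Write $\kappa := \varepsilon/(2B)$ and $D := \mathrm{diag}(\lVert H\rVert_2^2)_{H\in\Pi_{\leq \ell}}$. Let $\Delta_s := (H(\bz_s))_{H\in \Pi_{\leq\ell}}$ be the increment vector, so that $\varphi_{s+1} = \varphi_s + \Delta_s$. By \Cref{prop:HaarMeanAtStept} (read with the factor $\varepsilon/(2B)$ from its proof), while the process is active we have $\E_{s-1}[\Delta_s] = -\kappa D\varphi_s$. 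The increment therefore splits as $\Delta_s = -\kappa D \varphi_s + M_s$, where $M_s$ is a martingale difference. This gives the linear recursion $\varphi_{s+1} = (I-\kappa D)\varphi_s + M_s$, and unrolling it from $\varphi_0 = 0$ yields
\[
\langle a,\varphi_t\rangle = \sum_{s<t} \langle a^{(s)}, M_s\rangle, \qquad a^{(s)} := (I-\kappa D)^{t-1-s} a .
\]
Here $t$ is fixed, so each $a^{(s)}$ is a deterministic vector. Moreover $0\leq 1-\kappa\lVert H\rVert_2^2\leq 1$, so every coordinate satisfies $|a^{(s)}_H|\leq|a_H|$.

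Next I will verify the two hypotheses of Freedman's inequality.

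\emph{Bounded increments.} For any fixed $\bz$, $|\sum_H a^{(s)}_H H(\bz)| \leq \lVert a\rVert_1$, since each $|H(\bz)|\leq 1$. Subtracting the conditional mean at most doubles this, so $|\langle a^{(s)},M_s\rangle|\leq 2\lVert a\rVert_1$.

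\emph{Predictable variance.} Let $g := \sum_H a^{(s)}_H H$. Because the modified algorithm keeps $\mu_s\leq 1+\varepsilon/2$, and distinct Haar functions are orthogonal,
\[
\E_{s-1}\bigl[\langle a^{(s)},\Delta_s\rangle^2\bigr] = \langle g^2,\mu_s\rangle \leq 2\lVert g\rVert_2^2 = 2\sum_H (a^{(s)}_H)^2\lVert H\rVert_2^2 .
\]
Summing over $s$ gives a geometric series in each coordinate:
\[
\sum_{s<t} (1-\kappa\lVert H\rVert_2^2)^{2(t-1-s)}\lVert H\rVert_2^2 \leq \frac{\lVert H\rVert_2^2}{\kappa\lVert H\rVert_2^2} = \frac{2B}{\varepsilon}.
\]
Hence the total predictable variance is at most $O(B/\varepsilon)\lVert a\rVert_2^2$. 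Freedman's inequality with range $O(\lVert a\rVert_1)$ and variance $O((B/\varepsilon)\lVert a\rVert_2^2)$ then gives exactly the claimed tail bound. The essential point is that the $\lVert H\rVert_2^2$ factors cancel: small Haar functions have weak drift but also small variance.

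The main obstacle I expect is the freezing in the modified algorithm. After the failure time $\tau$ the increments vanish and the drift disappears. The recursion then becomes $\varphi_{s+1} = \varphi_s$ instead of $(I-\kappa D)\varphi_s + M_s$, and the unrolled coefficients $(I-\kappa D)^{\min(t,\tau)-1-s}$ depend on the future time $\tau$. They are therefore no longer predictable at time $s$.

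My first attempt is to avoid unrolling altogether and run the exponential supermartingale argument directly. I would use the potential $\exp\bigl(\lambda\langle a^{(s)},\varphi_s\rangle - c\lambda^2\sum_{r<s}\lVert a^{(r)}\rVert_{D}^2\bigr)$ and check the one-step supermartingale inequality separately on active and frozen steps. On frozen steps this inequality asks for $\langle a^{(s+1)}-a^{(s)},\varphi_s\rangle\leq 0$, which does not hold automatically.

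If that fails, the fallback is to redefine the post-failure dynamics for the analysis only: after $\tau$, keep sampling from the idealized linear-feedback density with $\Phi$ clipped. The drift is then exactly linear, and the bound for the frozen process would be recovered by coupling or truncating the event at $\tau$. I expect this bookkeeping, rather than the variance computation, to be the delicate part.
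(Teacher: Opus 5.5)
Your main line is the paper's proof. The unrolling $\langle a,\varphi_t\rangle=\sum_{s<t}\langle a^{(s)},M_s\rangle$ is Proposition~\ref{prop:CenteringRandomProcess}. The $2\lVert a\rVert_1$ range bound is Proposition~\ref{prop:AbsoluteValueofIncrement}. The variance bound via $\mu_s\le 1+\varepsilon/2$, orthogonality and the geometric series is Proposition~\ref{prop:SumOfVariance}, and Freedman's inequality finishes. You are also right that the drift in Proposition~\ref{prop:HaarMeanAtStept} should carry the factor $1/B$.

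What your proposal leaves unsettled is the freezing step, and neither candidate fix works. The first fails for the reason you give. In the fallback, the claim that the drift stays exactly linear is false: clipping $\Phi$ changes $\langle H,\Phi\rangle$, so $\E[H(\bz)]$ is no longer $-\kappa\lVert H\rVert_2^2\varphi(H)$.

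Your concern itself is legitimate. The paper only remarks that increments vanish after a \textsf{FAIL}. That covers the range and variance bounds, but not the identity of Proposition~\ref{prop:CenteringRandomProcess}: for $t>\tau$ the unrolled sum equals $\langle (I-\kappa D)^{t-\tau}a,\varphi_\tau\rangle$ rather than $\langle a,\varphi_\tau\rangle$.

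No change of dynamics is needed; your passing remark about truncating at $\tau$ is the whole fix.
\begin{itemize}
\item Keep $t$ fixed, so that $a^{(s)}=(I-\kappa D)^{t-1-s}a$ stays deterministic.
\item Apply Freedman to $Z:=\sum_{s<t}\mathbf{1}[s<\tau]\,\langle a^{(s)},M_s\rangle$.
\item Whether a \textsf{FAIL} has occurred by time $s$ is determined by $\varphi_0,\dots,\varphi_s$. So these are still martingale differences, with the same range bound and the same variance bound (needed only on active steps).
\item On $\{\tau\ge t\}$ one has $Z=\langle a,\varphi_t\rangle$.
\end{itemize}
This gives
\[
\Pr\bigl(\langle a,\varphi_t\rangle\ge\xi,\ \tau\ge t\bigr)\le\exp\Bigl(\frac{-\Omega(\xi^2)}{(B/\varepsilon)\lVert a\rVert_2^2+\xi\lVert a\rVert_1}\Bigr).
\]
This is exactly what the applications need. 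In Lemma~\ref{lem:RectDiscAlgDoesNOTAbort}, $\{\tau=t\}\subseteq\{\tau\ge t\}\cap\{|\Phi_t(\bx_I)|>B \text{ for some } I\}$. Lemma~\ref{lem:lowFixRectDisc} is only needed on $\{\tau>n\}$.

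If you want the lemma literally for the frozen process, use
\[
\{\langle a,\varphi_{\min(t,\tau)}\rangle\ge\xi\}\subseteq\bigcup_{t'\le t}\{\langle a,\varphi_{t'}\rangle\ge\xi,\ \tau\ge t'\}.
\]
This costs a factor $t+1$, which both applications absorb because there the exponent is a large multiple of $\log n$.
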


We prove \Cref{lem:TailboundViaFreedman} in the remainder of this subsection using Freedman's inequality.

\begin{theorem}[Freedman~\cite{Fre75}]
\label{fact:Freedman}
Let $Y_0,\ldots,Y_n$ be a martingale with respect to
$X_1,\ldots,X_n$ such that $
|Y_k-Y_{k-1}|\leq M $
for every $k$. Let
\[
W_k
=
\sum_{j=1}^k
\mathbb{E}_{j-1}\bigl[(Y_j-Y_{j-1})^2\bigr]
=
\sum_{j=1}^k
\operatorname{Var}\bigl[Y_j\mid X_1,\ldots,X_{j-1}\bigr],
\]
where $\mathbb{E}_{j-1}[\cdot]$ denotes conditioning on step $j-1$. Then, for every $\lambda\geq 0$ and $\sigma^2\geq 0$,
\[
\Pr\left(
    |Y_n-Y_0|\geq\lambda
    \ \text{and}\
    W_n\leq\sigma^2
\right)
\leq
2\exp\left(
    -\frac{\lambda^2}
    {2\left(\sigma^2+M\lambda/3\right)}
\right).
\]
\end{theorem}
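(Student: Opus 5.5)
The plan is the classical exponential-supermartingale argument, run once for the upper tail and once for the lower tail. Write $D_k := Y_k - Y_{k-1}$. By the martingale property $\mathbb{E}_{k-1}[D_k]=0$, and by hypothesis $|D_k|\le M$. The quantity $V_k := \mathbb{E}_{k-1}[D_k^2]$ is predictable, so $W_k=\sum_{j\le k}V_j$ is a predictable, nondecreasing process.

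\textbf{Step 1 (one-step mgf bound).} Fix $\theta>0$ and set $g(\theta) := (e^{\theta M}-1-\theta M)/M^2$. The function $x\mapsto (e^x-1-x)/x^2$ is nondecreasing on $\mathbb{R}$. Since $\theta D_k\le \theta M$, this gives
\[
e^{\theta D_k}\le 1+\theta D_k + g(\theta) D_k^2 .
\]
Taking $\mathbb{E}_{k-1}$ and using $1+u\le e^u$, we get
\[
\mathbb{E}_{k-1}\bigl[e^{\theta D_k}\bigr]\le 1+g(\theta)V_k\le e^{g(\theta)V_k}.
\]

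\textbf{Step 2 (supermartingale).} Define $Z_k := \exp\bigl(\theta(Y_k-Y_0) - g(\theta)W_k\bigr)$. Because $W_k$ is measurable at time $k-1$, Step 1 gives $\mathbb{E}_{k-1}[Z_k]\le Z_{k-1}$. Hence $\mathbb{E}[Z_n]\le Z_0=1$.

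\textbf{Step 3 (Markov on the joint event).} On the event $\{Y_n-Y_0\ge\lambda,\ W_n\le\sigma^2\}$ we have $Z_n\ge \exp(\theta\lambda - g(\theta)\sigma^2)$. No stopping time is needed, because the constraint on $W_n$ is part of the event. Markov's inequality then gives
\[
\Pr\bigl(Y_n-Y_0\ge\lambda,\ W_n\le\sigma^2\bigr)\le \exp\bigl(-\theta\lambda+g(\theta)\sigma^2\bigr).
\]

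\textbf{Step 4 (optimize $\theta$).} Choose $\theta = M^{-1}\log(1+\lambda M/\sigma^2)$. This yields the Bennett form $\exp\bigl(-(\sigma^2/M^2)\,h(\lambda M/\sigma^2)\bigr)$, where $h(u)=(1+u)\log(1+u)-u$. The elementary inequality $h(u)\ge u^2/(2(1+u/3))$ for $u\ge 0$ turns this into the stated exponent $\lambda^2/\bigl(2(\sigma^2+M\lambda/3)\bigr)$. The degenerate case $\sigma^2=0$ follows by letting $\sigma^2\downarrow 0$ in the bound, since the event is monotone in $\sigma^2$.

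\textbf{Step 5 (two-sided).} Apply Steps 1–4 to the martingale $-Y_k$. It has the same $M$ and the same $W_k$. A union bound over the two tails gives the factor $2$.

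The only non-routine points are the monotonicity of $(e^x-1-x)/x^2$ in Step 1, which justifies using $D_k\le M$ only from above, and the inequality $h(u)\ge u^2/(2+2u/3)$ in Step 4. I expect the latter to be the main obstacle. I would prove it by comparing both sides and their first two derivatives at $u=0$: both sides vanish there, as do their first derivatives, and one checks that $h''(u)=1/(1+u)$ dominates the second derivative of the right-hand side for all $u\ge0$.
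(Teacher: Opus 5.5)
Your proof is correct. Note that the paper does not prove this statement: it quotes it as a black box from \cite{Fre75} and applies it only at a fixed time $t$, in the proof of \Cref{lem:TailboundViaFreedman}.

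Your argument is the standard Bennett-type route:
\begin{itemize}
\item the one-step bound $e^{\theta D_k}\le 1+\theta D_k+g(\theta)D_k^2$, which uses only $D_k\le M$;
\item the exponential supermartingale $Z_k$;
\item Markov's inequality on the joint event;
\item optimization in $\theta$;
\item symmetrization to get the factor $2$.
\end{itemize}
What this buys over the citation is a self-contained proof and a clear view of which version is needed. Freedman's original result is a maximal inequality, controlling the event that $Y_k-Y_0\ge\lambda$ and $W_k\le\sigma^2$ for some $k$, and it needs an optional-stopping step. The statement here evaluates both quantities at the final time $n$. You correctly observe that Markov's inequality then suffices, because $g(\theta)\ge 0$ and $W_n\le\sigma^2$ on the event. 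The cited result is stronger, but the fixed-time form is exactly what is stated and what the paper uses.

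Both points you flagged as non-routine close in a line.
\begin{itemize}
\item Monotonicity follows from the identity $(e^x-1-x)/x^2=\int_0^1(1-s)e^{sx}\,ds$, valid for all real $x$. Its integrand is nondecreasing in $x$.
\item For the Bernstein comparison, set $r(u)=u^2/(2(1+u/3))=\tfrac32(u-3)+\tfrac{27}{2(3+u)}$. Then $r(0)=r'(0)=0$ and $r''(u)=27/(3+u)^3$. So $h''(u)=1/(1+u)\ge r''(u)$ reduces to $(3+u)^3-27(1+u)=u^3+9u^2\ge 0$, exactly as you planned.
\end{itemize}

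The only case you skip is $M=0$, where $g(\theta)$ and your choice of $\theta$ are undefined. It is trivial: all increments vanish, so $Y_n=Y_0$ almost surely.
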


The proof consists of two steps: \emph{(i)} we view $\langle a,\varphi_t\rangle$ as a martingale and express it as a sum of mean-zero increments $Y_{s,t}$ for $s < t$, \emph{(ii)} we bound its quadratic variation and maximum possible increment in each step, and apply Freedman's inequality.

\smallskip
\noindent \textbf{The Random Process.} 
Define $Y_t := \langle a, \varphi_t \rangle$,  
which can be naturally expressed as
\begin{equation}\label{eq:RandomProcessNaturalIncrement}
    Y_t
    =
    \sum_{s<t}(Y_{s+1}-Y_s)
    =
    \sum_{s<t}\langle a,\varphi_{s+1}-\varphi_s\rangle
    =:
    \sum_{s<t}\langle a,\Delta\varphi_s\rangle.
\end{equation}
However, $\Delta\varphi_s(H) = H(\bz_{s})$, which has nonzero conditional mean by \Cref{prop:HaarMeanAtStept}. To bypass this issue, we \emph{center} the increments to express $Y_t$ as a sum of mean-zero terms $Y_{t,s}$ for $s<t$.

\begin{proposition}\label{prop:CenteringRandomProcess}
The random process $Y_{t}$ for $t \geq 1$ can be expressed as
\[
Y_t=\sum_{s<t}Y_{t,s}
\quad\text{where}\quad
Y_{t,s}
:=
\sum_{H\in \Pi_{\leq \ell}}
\widebar{\Delta\varphi}_s(H)\cdot a_H
\Bigl(1-\frac{\varepsilon\lVert H\rVert_2^2}{2B}\Bigr)^{t-s-1},
\]
where $\widebar{\Delta\varphi}_s := \Delta\varphi_s - \E_{s-1}[\Delta\varphi_s]$ denotes its centering.
\end{proposition}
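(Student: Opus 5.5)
The plan is to unroll a linear recursion for each coordinate $\varphi_s(H)$ separately, and then take the inner product with $a$.

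First, fix $H \in \Pi_{\leq \ell}$ and write $\beta_H := \varepsilon\lVert H\rVert_2^2/(2B)$. By \Cref{prop:HaarMeanAtStept}, with the normalization $B$ as in the proof of that proposition, the conditional mean of the increment is
\[
\E_{s-1}[\Delta\varphi_s(H)] = -\beta_H\,\varphi_s(H).
\]
Here I index so that $\Delta\varphi_s = \varphi_{s+1}-\varphi_s = H(\bz_s)$ and $\E_{s-1}$ conditions on everything before the point $\bz_s$ is drawn, so that $\varphi_s$ is measurable. I will check this index convention against the proposition carefully. For the modified (frozen) algorithm, after a \textsf{FAIL} all increments are zero, so both the increment and its conditional mean vanish, and the identity remains consistent. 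Concretely, define the frozen drift as $-\beta_H\varphi_s(H)$ before failure and $0$ afterwards; if needed, I will absorb this into the centering, since the statement only uses $\widebar{\Delta\varphi}_s$.

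Second, this gives the scalar recursion
\[
\varphi_{s+1}(H) = (1-\beta_H)\,\varphi_s(H) + \widebar{\Delta\varphi}_s(H),
\]
with $\varphi_0(H)=0$. Inducting on $t$ yields
\[
\varphi_t(H) = \sum_{s<t} (1-\beta_H)^{t-s-1}\,\widebar{\Delta\varphi}_s(H).
\]
The base case is $t=1$, where $\varphi_1(H) = \widebar{\Delta\varphi}_0(H)$ because $\varphi_0=0$ makes the drift vanish. For the inductive step, multiply the formula for $\varphi_t$ by $(1-\beta_H)$ and add $\widebar{\Delta\varphi}_t(H)$; this shifts every exponent up by one and appends the $s=t$ term with exponent $0$.

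Third, multiply by $a_H$, sum over $H\in\Pi_{\leq\ell}$, and swap the two finite sums. This gives $Y_t = \langle a,\varphi_t\rangle = \sum_{s<t} Y_{t,s}$ with $Y_{t,s}$ exactly as stated.

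The only real subtlety is the frozen regime. Once the process has failed, the true drift is zero rather than $-\beta_H\varphi_s(H)$. In that regime I would either treat failure as ending the relevant martingale, or note that $\widebar{\Delta\varphi}_s$ then equals $\beta_H\varphi_s(H)$ and is deterministic given the past, so it is still mean zero conditionally. Either way the algebraic identity holds. The other thing to verify is the exact normalization factor in \Cref{prop:HaarMeanAtStept}, which must match the base $1-\varepsilon\lVert H\rVert_2^2/(2B)$.
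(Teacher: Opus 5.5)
Your proof is correct and follows the paper's argument. The paper uses \Cref{prop:HaarMeanAtStept} to get the one-step relation $Y_t=\langle a,\widebar{\Delta\varphi}_{t-1}\rangle+\sum_H(1-\beta_H)\varphi_{t-1}(H)\,a_H$ and then recurses on $\varphi_{t-1}$, which is your per-coordinate unrolling of $\varphi_{s+1}(H)=(1-\beta_H)\varphi_s(H)+\widebar{\Delta\varphi}_s(H)$ followed by pairing with $a$. You are also right that the drift carries the factor $1/B$, as in that proposition's proof rather than its statement.

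Your remarks on the frozen regime need correcting, though. There the identity does not ``remain consistent.''
\begin{itemize}
\item Let $\tau$ be the failure time. With the true centering, $\widebar{\Delta\varphi}_s=0$ for $s\ge\tau$. So for $t>\tau$ the right-hand side equals $\sum_H a_H(1-\beta_H)^{t-\tau}\varphi_\tau(H)$, not $Y_t=\langle a,\varphi_\tau\rangle$.
\item Your alternative, centering with the formula drift $-\beta_H\varphi_s(H)$ throughout, restores the algebra. But in the frozen regime the ``centered'' increment is then $\beta_H\varphi_\tau(H)$. Being predictable, its conditional mean is itself, which is generally nonzero. So it is not the statement's $\widebar{\Delta\varphi}_s=\Delta\varphi_s-\mathbb{E}_{s-1}[\Delta\varphi_s]$, and it is not mean zero.
\end{itemize}
The correct scope is your first option: the identity holds for $t\le\tau$, i.e., for steps before \textsf{FAIL}. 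That is also the only regime in which the paper's proof, which simply invokes \Cref{prop:HaarMeanAtStept}, is valid.
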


\begin{proof}
The change of $Y$ at step $t - 1$ can be expressed as:
\begin{align*}
Y_{t} - Y_{t-1} 
&= \langle a, \Delta \varphi_{t-1} - \mathbb{E}_{t-1}[\Delta \varphi_{t-1}]\rangle + \mathbb{E}_{t-1}\bigl[\langle a, \Delta \varphi_{t-1}\rangle\bigr] \\
&= \langle a, \widebar{\Delta\varphi}_{t-1}\rangle + \sum_{H \in \Pi_{\leq \ell}}\mathbb{E}_{t-1}[H(\bz_t)]\cdot a_{H}\\
&= \langle a, \widebar{\Delta\varphi}_{t-1}\rangle - \frac{\varepsilon}{2B}\sum_{H \in \Pi_{\leq \ell}}\varphi_{t-1}(H)\lVert H \rVert_{2}^{2}\cdot a_{H},
\end{align*}
where the first equality follows by \eqref{eq:RandomProcessNaturalIncrement} and the final equality follows by \Cref{prop:HaarMeanAtStept}. Re-arranging $Y_{t-1}$ to the other side, we get
\[
Y_{t} = \langle a, \widebar{\Delta\varphi}_{t-1}\rangle + \sum_{H \in \Pi_{\leq \ell}}\Bigl(1 - \frac{\varepsilon \lVert H \rVert_{2}^{2}}{2B}\Bigr)\varphi_{t-1}(H)\cdot a_{H}.
\]
Recursing on $\smash{\varphi_{t-1}}$ over all previous steps completes the proof of \Cref{prop:CenteringRandomProcess}. 
\end{proof}

We next bound the quadratic variation of $Y_t$ and the maximum size of $|Y_{t,s}|$ for all $s<t$. 
It suffices to consider the steps before the algorithm \textsf{FAILs}, as all later increments of $Y_t$ are zero. 

\begin{proposition}[Quadratic variation]\label{prop:SumOfVariance}
The quadratic variation of $Y_t$ satisfies
\[
\sum_{s<t}\mathbb{E}_s\bigl[Y_{t,s}^2\bigr]
\leq
O\Bigl(\frac{B}{\varepsilon} \, \lVert a\rVert_2^2\Bigr).
\]
\end{proposition}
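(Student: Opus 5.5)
We bound the conditional second moment of each centered increment $Y_{t,s}$ using the variance of a single retained point. Then we sum the resulting geometric series over $s$, Haar function by Haar function.

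\textbf{Step 1: one step.} Fix $s<t$ with the process not yet frozen, and write $w_H := a_H\bigl(1-\frac{\varepsilon\|H\|_2^2}{2B}\bigr)^{t-s-1}$. Then $Y_{t,s} = \sum_H w_H\,\widebar{\Delta\varphi}_s(H) = \psi(\bz) - \E[\psi(\bz)]$, where $\psi := \sum_H w_H H$ and $\bz$ is the point retained at that step with density $\mu$. Variance is at most the second moment, so
\[
\E\bigl[Y_{t,s}^2\bigr] \leq \langle \psi^2,\mu\rangle \leq (1+\varepsilon/2)\,\|\psi\|_2^2 = (1+\varepsilon/2)\sum_{H}w_H^2\,\|H\|_2^2.
\]
The second inequality uses the validity bound $\mu\leq 1+\varepsilon/2$, which holds before \textsf{FAIL} by Stopping Condition~\ref{rule:StringentSaturation}. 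The equality is orthogonality of the Haar functions. This is exactly the computation in \Cref{lem:OrthogonalityOfIncrementHighOrder}. The key gain is that orthogonality makes the weighted Haar functions contribute diagonally, so no $\ell_1$ or cross terms appear.

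\textbf{Step 2: sum over $s$.} Exchange the sums. For each $H$, set $q_H := 1-\frac{\varepsilon\|H\|_2^2}{2B}\in(0,1)$. This lies in $(0,1)$ because $\|H\|_2^2\leq 1$ and $B\geq 1$. Then
\[
\sum_{s<t} w_H^2 \|H\|_2^2 = a_H^2\|H\|_2^2\sum_{m\geq 0} q_H^{2m} \leq \frac{a_H^2\|H\|_2^2}{1-q_H^2} \leq \frac{a_H^2\|H\|_2^2}{1-q_H} = \frac{2B}{\varepsilon}\,a_H^2 .
\]
The factor $\|H\|_2^2$ cancels exactly, so fine-scale Haar functions are not penalized despite their weak drift. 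Summing over $H$ gives the bound $(1+\varepsilon/2)\frac{2B}{\varepsilon}\|a\|_2^2 = O\bigl(\frac{B}{\varepsilon}\|a\|_2^2\bigr)$. Steps after freezing contribute zero, and truncating the geometric series at $t-s-1\geq 0$ only helps.

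\textbf{Main obstacle.} The subtle point is the conditioning. $Y_{t,s}$ is the martingale difference associated with step $s$ (the point $\bz_s$), so "$\E_s$" in the statement should be read as conditioning on the history before that point is drawn, matching the definition of $\widebar{\Delta\varphi}_s$ in \Cref{prop:CenteringRandomProcess}. The weights $w_H$ are deterministic, so the increments form a genuine martingale-difference sequence for fixed $t$. We must also check that the density bound $\mu\le 1+\varepsilon/2$ is available at every non-frozen step, which is precisely what the modified (freezing) algorithm guarantees. With these conventions fixed, the remaining computations are routine.
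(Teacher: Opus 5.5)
Your proposal is correct and follows the paper's proof essentially step for step. You bound the conditional variance by the second moment under $\mu$, use $\mu\leq 1+\varepsilon/2$ before \textsf{FAIL} together with Haar orthogonality to reduce to the diagonal sum $\sum_H a_H^2\lVert H\rVert_2^2\,q_H^{2(t-s-1)}$, and then sum the geometric series. Your explicit estimate $\sum_{m\geq 0}q_H^{2m}\leq 1/(1-q_H)=2B/(\varepsilon\lVert H\rVert_2^2)$, where the $\lVert H\rVert_2^2$ factor cancels, is exactly what the paper compresses into ``bounding the resulting geometric series.'' Your remark on the conditioning index also resolves the paper's off-by-one between $\mathbb{E}_{s-1}$ in the definition of $\widebar{\Delta\varphi}_s$ and $\mathbb{E}_s$, $\bz_{s+1}$ in this proof.
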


\begin{proof}
For any step $s$ before the algorithm \textsf{FAILs}, we have
\begin{align*}
    \mathbb{E}_{s}[Y_{t,s}^{2}] 
    & = \mathbb{E}_s \Big[ \Big( \sum_{H\in \Pi_{\leq \ell}}
\widebar{\Delta\varphi}_s(H)\cdot a_H
\Bigl(1-\frac{\varepsilon\lVert H\rVert_2^2}{2B}\Bigr)^{t-s-1} \Big)^2 \Big] \\
& = \mathbb{E}_s \Big[ \Big( \sum_{H\in \Pi_{\leq \ell}}
\big( H(\bz_{s+1}) - \mathbb{E}_s\big[H(\bz_{s+1}) \big] \big) \cdot a_H
\Bigl(1-\frac{\varepsilon\lVert H\rVert_2^2}{2B}\Bigr)^{t-s-1} \Big)^2 \Big] \\
& \leq \mathbb{E}_s \Big[ \Big( \sum_{H\in \Pi_{\leq \ell}}
H(\bz_{s+1}) \cdot a_H
\Bigl(1-\frac{\varepsilon\lVert H\rVert_2^2}{2B}\Bigr)^{t-s-1} \Big)^2 \Big]  ,
\end{align*}
where the last inequality uses the fact that the variance is at most the second moment. 

As we have assumed that the algorithm doesn't \textsf{Fail} at step $s$, we have $\mu_{s+1}(\bx)\leq 1+\varepsilon/2$, and hence the expectation of any nonnegative function under the density $\mu_{s+1}$ is at most $(1+\varepsilon/2)$ times its expectation under the uniform distribution. Hence for $\bz \sim \Unif[0,1)^d$, we have
\begin{align*}
    \mathbb{E}_{s}[Y_{t,s}^{2}] &\leq  (1 + \varepsilon/2) \cdot\mathbb{E}_s \Big[ \Big( \sum_{H\in \Pi_{\leq \ell}}
H(\bz) \cdot a_H
\Bigl(1-\frac{\varepsilon\lVert H\rVert_2^2}{2B}\Bigr)^{t-s-1} \Big)^2 \Big] \notag \\
    &= 
    (1 + \varepsilon/2) \cdot \sum_{H \in \Pi_{\leq \ell}}
    a_{H}^{2}\lVert H \rVert_{2}^{2}\Bigl(1 - \frac{\varepsilon \lVert H \rVert_{2}^{2}}{2B}\Bigr)^{2(t-s-1)} ,
\end{align*}
where the equality uses the orthogonality of Haar functions and $\E[H(\bz)^2] = \|H\|_2^2$.
Then, summing over $s<t$ and bounding the resulting geometric series gives 
\vspace{-0.03em}
\begin{align*}
\sum_{s < t}\mathbb{E}_{s}[Y_{t,s}^{2}] &\leq 
    2 \sum_{s < t}\sum_{H \in \Pi_{\leq \ell}}
    a_{H}^{2}\lVert H \rVert_{2}^{2}\Bigl(1 - \frac{\varepsilon \lVert H \rVert_{2}^{2}}{2B}\Bigr)^{2(t-s-1)}  \\
    & = 
    2 \sum_{H \in \Pi_{\leq \ell}}
    a_{H}^{2}\lVert H \rVert_{2}^{2}\sum_{s < t}\Bigl(1 - \frac{\varepsilon \lVert H \rVert_{2}^{2}}{2B}\Bigr)^{2(t-s-1)} \leq O\Bigl(\frac{B}{\varepsilon} \, \lVert a \rVert_{2}^{2}\Bigr).  \qedhere
\end{align*}
\end{proof}
We next bound the maximum possible size of $|Y_{t,s}|$ for any $s<t$.

\begin{proposition}\label{prop:AbsoluteValueofIncrement}
    At every step $s < t$, the absolute value of the increment $Y_{t,s}$ satisfies
    \[
    |Y_{t,s}| \leq 2\lVert a \rVert_{1}.
    \]
\end{proposition}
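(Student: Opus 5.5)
We bound $Y_{t,s}$ term by term with the triangle inequality, using only that every Haar function takes values in $\{-1,0,+1\}$. By definition,
\[
|Y_{t,s}| \leq \sum_{H\in \Pi_{\leq \ell}} |a_H|\cdot \bigl|\widebar{\Delta\varphi}_s(H)\bigr| \cdot \Bigl|1-\frac{\varepsilon\lVert H\rVert_2^2}{2B}\Bigr|^{t-s-1}.
\]
The plan is to show that each of the last two factors is at most $2$ and $1$, respectively. Summing then gives $|Y_{t,s}|\le 2\sum_H |a_H| = 2\lVert a\rVert_1$.

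\textbf{The damping factor.} Since $\lVert H\rVert_2^2\le 1$, $\varepsilon<1$, and $B\ge 1$ (indeed $B=O_{d,\varepsilon}(\log^{d+1}n)$ is large), we have $0\le \varepsilon\lVert H\rVert_2^2/(2B)\le 1/2$. Hence the base $1-\varepsilon\lVert H\rVert_2^2/(2B)$ lies in $[1/2,1]$, and any nonnegative power of it is at most $1$.

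\textbf{The centered increment.} Here $\widebar{\Delta\varphi}_s(H) = H(\bz_{s+1})-\E_s[H(\bz_{s+1})]$, with the indexing as in \Cref{prop:SumOfVariance}. The first term has absolute value at most $1$ because $H(\bz)\in\{-1,0,1\}$. The conditional mean is an average of values of $H$ under the probability density $\mu_{s+1}$, so it also lies in $[-1,1]$. This uses that $\mu_{s+1}$ is a genuine density, which holds before the algorithm \textsf{FAILs}, since then $\mu_{s+1}\in[1-\varepsilon/2,1+\varepsilon/2]$. After a \textsf{FAIL} the process is frozen, so $\Delta\varphi_s=0$ and the increment vanishes, consistent with the remark before \Cref{prop:SumOfVariance}. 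Alternatively, \Cref{prop:HaarMeanAtStept} gives the mean explicitly, but the bound $|\mathbb{E}_s[H]|\le 1$ is all we need. Therefore $|\widebar{\Delta\varphi}_s(H)|\le 2$.

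Combining the two factor bounds completes the proof. The only delicate point is the frozen/\textsf{FAIL} case, which we treat separately as above, so we do not expect any real obstacle.
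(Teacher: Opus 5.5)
Your proof is correct and matches the paper's argument: the triangle inequality, the damping factor being at most $1$, and $\bigl|\widebar{\Delta\varphi}_s(H)\bigr| \le |H(\bz_{s+1})| + \bigl|\mathbb{E}_s[H(\bz_{s+1})]\bigr| \le 2$, summed against $|a_H|$. Your separate treatment of the \textsf{FAIL} case is harmless but unnecessary. The retained point always lies in $[0,1)^d$, so $|H(\bz_{s+1})|\le 1$ and its conditional mean is bounded by $1$ whether or not $\mu_{s+1}$ is a valid density.
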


\begin{proof}
Applying the triangle inequality to $Y_{t,s}$ gives
\begin{equation}\label{eq:FirstEquationAbsoluteValue}
|Y_{t,s}| = \Bigl|\sum_{H\in \Pi_{\leq \ell}}
\widebar{\Delta\varphi}_s(H)\cdot a_H
\Bigl(1-\frac{\varepsilon\lVert H\rVert_2^2}{2B}\Bigr)^{t-s-1}\Bigr| \leq \sum_{H}|\widebar{\Delta \varphi_{s}}(H)|\cdot |a_{H}| .
\end{equation}
For any Haar function $H \in \Pi_{\leq \ell}$, as $|\Delta \varphi_s(H)| = |H(\bz_{s+1})| \leq 1$, its centered version satisfies \begin{align*}
\big|\widebar{\Delta \varphi_{s}}(H) \big| \leq \big|\Delta \varphi_{s}(H) \big| + \big|\E_s \big[\Delta \varphi_{s}(H) \big]\big| \leq 2.
\end{align*}
The bound in the proposition then follows immediately. 
\end{proof}

Now we are ready to prove \Cref{lem:TailboundViaFreedman}.

\begin{proof}[Proof of \Cref{lem:TailboundViaFreedman}]
    Fix $a\in\smash{\mathbb{R}^{\Pi_{\leq \ell}}}$, $\xi\geq 0$, and a step $t\geq 1$. Applying Freedman's inequality (see \Cref{fact:Freedman}) to bound the probability of $Y_{t}$ deviating from $Y_{0}$ by $\xi$ 
    when the quadratic variation is at most $\sigma^{2} = O((B/\varepsilon))\lVert a\rVert_2^2)$ (which always holds due to \Cref{prop:SumOfVariance})
    gives
    \begin{align*}
    \Pr\bigl(Y_{t} - Y_{0} \geq \xi\bigr) &= \Pr\biggl(Y_{t} - Y_{0} \geq \xi \text{ and } \sum_{s<t}\mathbb{E}_s\bigl[Y_{t,s}^2\bigr]
\leq
\sigma^{2}\biggr)
\leq \exp\Bigl(\frac{-\Omega(\xi^{2})}{(B/\varepsilon)\lVert a \rVert_{2}^{2} + \xi \lVert a \rVert_{1}}\Bigr) .  \qedhere
\end{align*}
\end{proof}

\subsection{Wrapping Things Up}\label{subsec:RectDiscAnalysis}

We now use \Cref{lem:TailboundViaFreedman} to complete the analysis of the modified algorithm and prove \Cref{thm:GridRectangleThinning}. 
We start by showing that each grid box $R\in\mathcal{R}_{\ell}$ has small $D_R(A_t)$ at every step $t$. 

\begin{lemma}[Discrepancy of grid boxes]\label{lem:lowFixRectDisc}
With probability at least $1-1/\mathsf{poly}(n^d)$, for every step $t$ and every grid box $R\in\mathcal{R}_{\ell}$, its continuous discrepancy under $A_t$ satisfies
\[
D_{R}(A_t)\leq O_{d,\varepsilon}(\log^{d+1}n).
\]
\end{lemma}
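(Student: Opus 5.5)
The plan is to combine \Cref{obsv:DiscOfRectangle} with the tail bound \Cref{lem:TailboundViaFreedman} and then union bound over all grid boxes and all steps. Fix a step $t$ and a grid box $R\in\mathcal{R}_\ell$. By \Cref{obsv:DiscOfRectangle}, $D_R(A_t)=\langle a^R,\varphi_t\rangle$ with $a^R_H:=\langle R,H\rangle/\lVert H\rVert_2^2$. (The Haar functions in $\Pi_{<\ell}$ lie in $\Pi_{\leq\ell}$, so we set $a^R_H=0$ for the remaining $H$.) It therefore suffices to bound $\lVert a^R\rVert_1$ and $\lVert a^R\rVert_2$, and then to apply \Cref{lem:TailboundViaFreedman} to both $a^R$ and $-a^R$ to control the two tails.

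\textbf{Step 1: coefficient bounds.} Write $R=\prod_i[u_i2^{-\ell},v_i2^{-\ell})$. Since $H_{\mathbf{j},\mathbf{k}}$ is a tensor product, $a^R_H=\prod_i \langle \mathbf 1_{[u_i,v_i)},h_{\mathbf{j}_i,\mathbf{k}_i}\rangle/\lVert h_{\mathbf{j}_i,\mathbf{k}_i}\rVert_2^2$. In one dimension, an interval has nonzero inner product with $h_{j,k}$ only if one of its two endpoints lies in $\mathrm{supp}(h_{j,k})$. Hence at each level $j$ at most two $k$ contribute. Each contributing coefficient is bounded by
\[
|\langle \mathbf 1_I,h_{j,k}\rangle|/\lVert h_{j,k}\rVert_2^2\leq (2^{-j})/(2^{-(j-1)})\leq 1.
\]
For $j=0$ the coefficient is $|I|\leq 1$. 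Multiplying over coordinates, $a^R$ has at most $(2(\ell+1))^d$ nonzero entries, each of magnitude at most $1$. Therefore $\lVert a^R\rVert_1\leq O_d(\ell^d)$ and $\lVert a^R\rVert_2^2\leq O_d(\ell^d)$, with $\ell=O(\log (nd))=O_d(\log n)$.

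\textbf{Step 2: tail bound and choice of $\xi$.} Apply \Cref{lem:TailboundViaFreedman} with $\xi=C\log^{d+1}n$ and $B=O_{d,\varepsilon}(\log^{d+1}n)$. The denominator is
\[
(B/\varepsilon)\lVert a\rVert_2^2+\xi\lVert a\rVert_1=O_{d,\varepsilon}(\log^{2d+1}n),
\]
while $\xi^2=C^2\log^{2d+2}n$. The exponent is therefore $-\Omega(C^2/C_{d,\varepsilon})\log n$ once $C$ is large relative to $B$'s constant. Choosing $C$ appropriately, this is at most $n^{-K}$ for any desired constant $K$ (depending on $d$).

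\textbf{Step 3: union bound.} There are at most $n+1$ steps and $|\mathcal{R}_\ell|\leq (2^\ell+1)^{2d}=\mathsf{poly}(n,d)^{d}$ grid boxes. Taking $K$ large enough, a union bound over both tails, all steps and all boxes gives failure probability $1/\mathsf{poly}(n^d)$. The statement is about the modified (frozen) algorithm, to which \Cref{lem:TailboundViaFreedman} applies.

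\textbf{Main obstacle.} The step needing care is keeping $\xi\lVert a\rVert_1$ from dominating. With $\lVert a\rVert_1=\Theta(\ell^d)$, the term $\xi\lVert a\rVert_1$ is $\log^{2d+1}n$, which is the same order as $(B/\varepsilon)\lVert a\rVert_2^2$. This is exactly why $\xi=\log^{d+1}n$ is the threshold that works. Getting $\xi$ down to this level therefore relies on the counting in Step 1 being tight: at most two Haar functions per level per coordinate, each with coefficient at most $1$. A sloppier count that loses an extra $\ell$ factor in $\lVert a\rVert_1$ or in $\lVert a\rVert_2^2$ would break the bound.
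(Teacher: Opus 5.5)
Your proposal is correct and follows the paper's proof essentially step for step. You write $D_R(A_t)=\langle a,\varphi_t\rangle$ via the Haar expansion, bound $\lVert a\rVert_1$ and $\lVert a\rVert_2^2$ by $O_d(\ell^d)$ using at most two contributing Haar functions per level per coordinate, apply \Cref{lem:TailboundViaFreedman} with $\xi=\Theta_{d,\varepsilon}(\log^{d+1}n)$, and union bound over grid boxes and steps; you also correctly keep the $\xi\lVert a\rVert_1$ term that the paper's display drops, and you control both tails.

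One small correction. The indicator of a box in $\mathcal{R}_\ell$ genuinely needs level-$\ell$ Haar functions (e.g.\ $\langle \mathbf{1}_{[0,2^{-\ell})},h_{\ell,0}\rangle\neq 0$), so the expansion should be over $\Pi_{\leq\ell}$, as in the paper's proof, rather than zero-padding from $\Pi_{<\ell}$. Your count of at most $(2(\ell+1))^d$ nonzero entries already covers these levels, so the rest of the argument is unchanged.
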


\begin{proof}
Fix any grid box $R\in\mathcal{R}_{\ell}$ and step $t$. Recall from \Cref{obsv:DiscOfRectangle} that 
\begin{align}\label{eq:ContinuousDiscofR}
    D_{R}(A_t) &= \sum_{H \in \Pi_{\leq \ell}}\frac{\langle R, H\rangle}{\lVert H \rVert_{2}^{2}} \, \varphi_{t}(H) \notag.
\end{align}
To control $D_{R}(A_t)$ using \Cref{lem:TailboundViaFreedman}, we consider the vector $a \in \R^{\Pi_{\leq \ell}}$ with $a_H := \langle R, H \rangle / \|H\|_2^2$. 

Note that for each $i \in [d]$ and each level $\mathbf{j}_i$, at most two Haar functions can have nonzero coefficients: these are the Haar functions whose supports contain one of the two endpoints of the interval defining $R$ in coordinate $i$.
As there are $\ell$ levels in each coordinate, at most $\smash{(2\ell)^d=O_d(\log^d(nd))}$ of the $a_H$ can be non-zero, each with $|a_H| \leq 1$. This shows that both $\smash{\lVert a\rVert_1}$ and $\smash{\lVert a\rVert_2^2}$ are at most $\smash{O_d(\log^d(nd))}$. Plugging these bounds into \Cref{lem:TailboundViaFreedman} for $\xi \geq 0$, we get
\begin{align*}
\Pr(D_{R}(A_t) \geq \xi) &= \Pr(\langle a, \varphi_{t}\rangle \geq \xi)\\
&\leq \exp\Bigl(\frac{\Omega(-\xi^{2})}{(B/\varepsilon)\lVert a \rVert_{2}^{2} + \lVert a \rVert_{1}}\Bigr) \leq \exp\Bigl(\frac{\Omega_{d,\varepsilon}(-\xi^{2})}{\log^{2d+1}n}\Bigr),
\end{align*}
where the final inequality follows by substituting $B=O_{d,\varepsilon}(\log^{d+1}n)$ and using the bounds on $\lVert a\rVert_1$ and $\lVert a\rVert_2^2$. Setting $\xi = O_{d,\varepsilon}(\log^{d+1}n)$ for a sufficiently large constant and taking the union bound over all of the $2^{\ell d}$ grid boxes in $\mathcal{R}_\ell$ and all steps $t$ completes the proof.
\end{proof}

Next we show that with high probability, the algorithm never \textsf{FAILs}. 

\begin{lemma}\label{lem:RectDiscAlgDoesNOTAbort}
With probability at least $1 - 1/\mathsf{poly}(n^{d})$, the algorithm never \textsf{FAILs}.
\end{lemma}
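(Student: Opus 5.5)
We need to show that with high probability $|\Phi_t(\bx)| \leq B$ for all $t$ and all $\bx$. The plan is to apply \Cref{lem:TailboundViaFreedman} to the modified (frozen) algorithm, for which the lemma is valid, with a suitable coefficient vector. Then note that the original and modified algorithms coincide up to the first \textsf{FAIL} time. So it suffices to show that the modified process has $|\Phi_t(\bx)|\le B$ for all $t,\bx$ with high probability. Indeed, if the original algorithm fails at some first time $t$, then the modified process agrees with it through step $t$ and has $|\Phi_t(\bx)|>B$ at that step.

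\textbf{Step 1: discretize $\bx$.} Each $H\in\Pi_{\leq\ell}$ is constant on the dyadic cells of side $2^{-\ell}$, so $\Phi_t(\bx)$ depends only on the cell containing $\bx$. This leaves at most $2^{\ell d}=\poly(n^d)$ values of $\bx$ to control, one representative per cell.

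\textbf{Step 2: a linear form for fixed $\bx$.} For a fixed cell representative $\bx$, write $\Phi_t(\bx)=\langle a^{\bx},\varphi_t\rangle$ with $a^{\bx}_H:=H(\bx)\in\{-1,0,1\}$. For each scale vector $\mathbf{j}\in\{0,\dots,\ell\}^d$, exactly one $H_{\mathbf{j},\mathbf{k}}$ is nonzero at $\bx$. Hence $\|a^{\bx}\|_1=\|a^{\bx}\|_2^2\le N_\ell=(\ell+1)^d=O_d(\log^d n)$. Apply \Cref{lem:TailboundViaFreedman} to both $a^{\bx}$ and $-a^{\bx}$ with $\xi=B$. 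This gives
\[
\Pr\bigl(|\Phi_t(\bx)|\ge B\bigr)\le 2\exp\Bigl(\frac{-\Omega(B^2)}{(B/\varepsilon)N_\ell+BN_\ell}\Bigr)=2\exp\bigl(-\Omega(\varepsilon B/N_\ell)\bigr).
\]
Choose $B=C_{d,\varepsilon}\log^{d+1}n$ with $C_{d,\varepsilon}$ large (in particular larger than $N_\ell\log n/\varepsilon$ by a big constant factor, where $\ell=O(\log(nd))$). Then the exponent is $-\Omega(C\,d\log n)$, and the probability is at most $n^{-\Omega(Cd)}$.

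\textbf{Step 3: union bound.} Union bound over the $2^{\ell d}\le\poly(n^d)$ cells and the $n$ steps $t$. Since $\ell=O(\log(nd))$, the number of cells is $n^{O(d)}$, so a large enough constant in $B$ absorbs the union bound. This yields failure probability $1/\poly(n^d)$.

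\textbf{Main obstacle.} The main subtlety is the circularity: \Cref{lem:TailboundViaFreedman} used $\mu_{s+1}\le 1+\varepsilon/2$, i.e.\ no prior failure. This is resolved by applying it to the frozen process, where all increments after a failure vanish. The bound then holds unconditionally, and the first-failure coupling argument above transfers it to the original algorithm. The other point to check is the constants. Here $B$ appears both as the threshold and inside the variance proxy $(B/\varepsilon)\|a\|_2^2$, giving a ratio $B^2/(B N_\ell/\varepsilon)=\varepsilon B/N_\ell$. This is only linear in $B$, so we need $B\gtrsim N_\ell\,d\log n/\varepsilon$, which is consistent with the choice $B=O_{d,\varepsilon}(\log^{d+1}n)$.
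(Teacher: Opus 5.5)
Your proposal is correct and follows the paper's proof. Like the paper, you discretize to the $2^{\ell d}$ dyadic cells on which every $H\in\Pi_{\leq\ell}$ is constant, write $\Phi_t(\bx_I)=\langle a,\varphi_t\rangle$ with $a_H=H(\bx_I)$ so that $\lVert a\rVert_1,\lVert a\rVert_2^2\le N_\ell$, apply \Cref{lem:TailboundViaFreedman} with $\xi=B$, and take a union bound over cells and steps.

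Three details you handle correctly are left implicit or elided in the paper's version. You make the first-failure coupling between the frozen and original runs explicit, and you apply the tail bound to both $\pm a$. You also keep the $\xi\lVert a\rVert_1$ term, which the paper's display drops; this shows the exponent is $\Theta(\varepsilon B/N_\ell)$, i.e.\ linear in $B$.
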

\begin{proof}
Consider all of the $2^{\ell d} = (nd)^{O(d)}$ dyadic boxes $I$ of edge length $2^{-\ell}$ in every dimension. Note that each Haar function $H \in \Pi_{\leq \ell}$ is constant on any such $I$, and hence to check the condition $\smash{|\Phi_{t}(\bx)|<B}$ for all $\bx\in[0,1)^d$, it suffices to only check it for one fixed point $\bx_I \in I$ for each $I$. 

Fix a dyadic box $\smash{I\in\mathcal{D}_{\leq \ell}^d}$ and some $\bx_I \in I$. We will express $\smash{\Phi_t(\bx_I)}$ as a linear function of $\varphi_t$ and then apply \Cref{lem:TailboundViaFreedman} to bound the probability that it exceeds $B$. Since
\[
\Phi_{t}(\bx_I) = \sum_{H \in \Pi_{\leq \ell}}\varphi_{t}(H)H(\bx_I) ,
\]
we define $a\in\smash{\mathbb{R}^{\Pi_{\leq \ell}}}$ by $a_H=H(\bx_I)$. In each coordinate, $\bx_I$ lies in the support of at most $\ell$ Haar functions, so $a$ has at most $\ell^d$ nonzero entries. Since each such entry is in $\smash{\{\pm1\}}$, both $\smash{\lVert a\rVert_1}$ and $\smash{\lVert a\rVert_2^2}$ are at most $\smash{\ell^d=O_d(\log^d(nd))}$. Plugging these bounds into \Cref{lem:TailboundViaFreedman} for $\xi \geq 0$, we get
\begin{align*}
\Pr\bigl(\Phi_{t}(\bx) \geq \xi\bigr) &= \Pr\bigl(\langle a, \varphi_{t}\rangle \geq \xi\bigr) \\
&\leq \exp\Bigl(\frac{\Omega(-\xi^{2})}{(B/\varepsilon)\lVert a \rVert_{2}^{2} + \lVert a \rVert_{1}}\Bigr) \leq \exp\Bigl(\frac{\Omega_{d,\varepsilon}(-\xi^{2})}{\log^{2d+1}n}\Bigr).
\end{align*}
Setting $\xi = B = O_{d,\varepsilon}(\log^{d+1}n)$ for a sufficiently large constant and then take a union bound over all the dyadic boxes $I$ completes the proof.
\end{proof}

We are now ready to prove \Cref{thm:GridRectangleThinning}.

\begin{proof}[Proof of \Cref{thm:GridRectangleThinning}]
By \Cref{thm:GridRectangleThinning} and \ref{lem:lowFixRectDisc}, with high probability, the algorithm never \textsf{FAILs} and that every grid box $R \in \mathcal{R}_\ell$ has continous discrepancy at most $D_R(A_t) \leq O_{d,\varepsilon}(\log^{d+1} n)$ at each step $t$. 
Since any anchored box can be approximated from inside and outside by grid boxes within a volume difference of at most $1/\poly(n^d)$, this proves the theorem. 
\end{proof}

\section*{Acknowledgements}
We thank Jiaheng Chen for helpful discussions.

\bibliography{bib}
\bibliographystyle{alphaurl}

\appendix

\end{document}